\documentclass[12pt]{article}
\usepackage{amssymb}
\usepackage{amsmath}
\usepackage{mathrsfs}
\usepackage{amsthm}
\usepackage[latin1]{inputenc}
\usepackage{txfonts}
\usepackage{hyperref}
\usepackage{color}
\usepackage{bbm}
\usepackage{verbatim} 
\usepackage[toc,page]{appendix}
\usepackage[pdftex]{graphicx}
\usepackage{float}
\usepackage{capt-of} 
\usepackage{multirow}
\usepackage{color}
\RequirePackage{natbib}

\newtheorem{thm}{Theorem}[section]
\newtheorem{prop}[thm]{Proposition}
\newtheorem{lemma}[thm]{Lemma}

\theoremstyle{remark}
\newtheorem{rem}[thm]{Remark}

\theoremstyle{definition}

\setlength{\parskip}{4pt}

\makeatletter
\newcommand*{\textlabel}[2]{%
  \edef\@currentlabel{#1}
  \phantomsection
  #1\label{#2}
}
\makeatother

\newcommand{\indicator}[1]{\mathbbm{1}_{\left\{ {#1} \right\} }}

\newcommand{\mut}{\mu+\theta}
\newcommand{\less}{<}
\newcommand{\more}{>}
\newcommand{\N}{\mathbb{N}}
\newcommand{\Px}{\mathbb{P}}
\newcommand{\Ex}{\mathbb{E}}
\newcommand{\R}{\mathbb{R}}
\newcommand{\calF}{\mathcal{F}}

\newcommand{\sbt}{\,\begin{picture}(-1,1)(-1,-2)\circle*{2}\end{picture}\ }  
\newcommand{\oz}{\bar{z}}
\newcommand{\ov}{\tilde{v}}

\newcommand{\ax}{\bar{a}}
\newcommand{\up}{\upsilon}
\newcommand{\ox}{\bar{x}}
\newcommand{\oy}{\bar{y}}

  \definecolor{Reddd}{rgb}{1.00, 0.00, 0.00}

  \definecolor{Red}{rgb}{1.00, 0.00, 0.00}
    
    \definecolor{DRed}{rgb}{1,0,0}
    
    \definecolor{Green}{rgb}{1,0,0}
    
    \definecolor{Blue}{rgb}{1,0,0}
    
    \definecolor{PG}{rgb}{1,0,0}
    
    \definecolor{Orange}{rgb}{1,0,0}
    
        \definecolor{Orangee}{rgb}{1,0,0}

    \definecolor{Gold}{rgb}{1,0,0}

		\definecolor{Brilliantrose}{rgb}{1,0,0}
		
		\definecolor{Brilliantrosee}{rgb}{1,0,0}

\usepackage{algorithm}
\usepackage{algpseudocode}


\hoffset -2.4cm
\textwidth 18.1cm
\voffset -2.7cm
\textheight 23.1 cm

\begin{document}
\title{{One-level limit order book models with memory and variable spread}}
\author{
  Jonathan A. Chávez-Casillas\footnote{{Department of Mathematics and Statistics, University of Calgary, Calgary, AB T2N4C2, Canada ({\tt jonathan.chavezcasil@ucalgary.ca}).}} \\
  \and
  Jos\'e E. Figueroa-L\'opez\footnote{{Department of Mathematics, Washington University in St.~Louis, MO 63130, USA ({\tt figueroa@math.wustl.edu}).}} \\
}
\maketitle

\begin{abstract}
We propose {a new model} for the level I of a Limit Order Book (LOB), which incorporates the information about the standing orders at the opposite side of the book after each price change and {the} arrivals of new orders within the spread. 
{Our main result gives a diffusion approximation for the mid-price process. To illustrate the applicability of the considered framework, we also propose a feasible method to compute several quantities of interest}, such as the distribution of the time span between price changes and the probability of consecutive price increments conditioned on the current state of the book. The proposed method is used to {develop} an efficient simulation scheme for the price dynamics, which is then applied to assess numerically the accuracy of the diffusion approximation.


\vspace{0.2 cm}
\noindent{\textbf{Keywords and phrases}: Limit Order Book Modeling, Price Process Formation, {Heavy Traffic/Diffusion Approximation}.}

\end{abstract}

\section{Introduction}

{The evolution of trading markets has evolved considerably during the last few years.  {Most} modern exchanges use completely automated platforms called Electronic Communication Networks (ECN), which has significantly increased the speed of trading to only a few milliseconds. {ECN} are based on a continuous double auction trading mechanism, in which any trader can submit orders to buy or sell an asset. Two type of orders are available: limit and market orders. A bid (ask) limit order specifies the quantity and the price at which a trader is willing to buy (sell) the asset in question. The so-called Limit Order Book (LOB) aggregates all the outstanding limit orders at any given time. Limit orders with the same price are ranked in a first-in-first-out (FIFO) priority execution. More specifically, a LOB can be visualized as a system of (possibly empty) FIFO queues (one for each possible tick price). Figure \ref{fig:LOBRepresent} gives a graphical representation of a LOB.  The {separation} between the lowest price at the ask side (called {the} ask price) and the {highest} price at the bid side (called {the} bid price) is denominated as  the spread of the LOB, while the queues corresponding to these best bid and ask prices are called the level I of the book.

\begin{figure}[h]
	\centering
		\includegraphics[height=7cm]{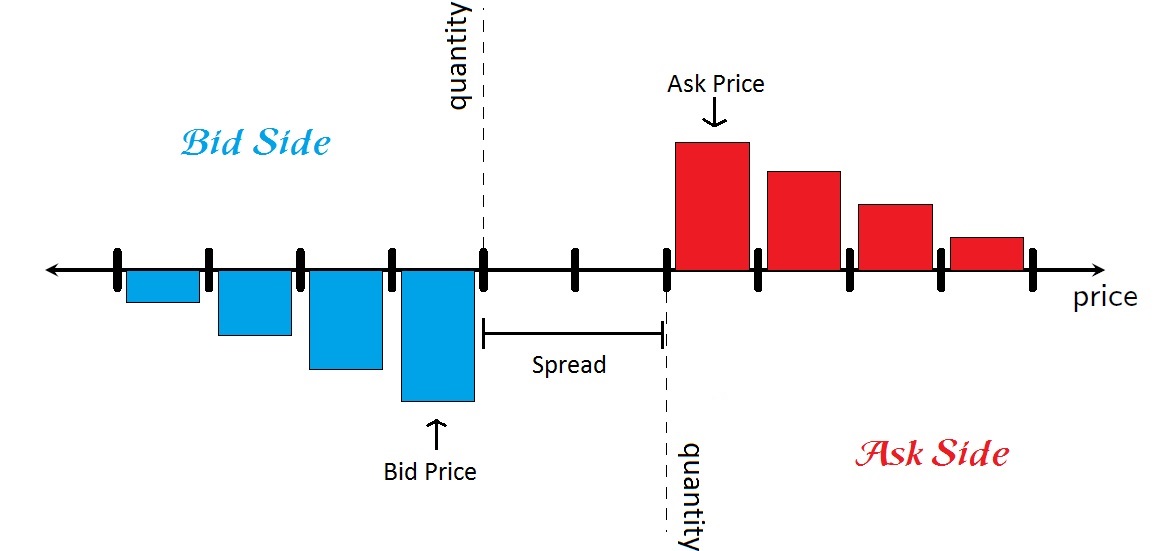}
	\caption{Graphical representation of a Limit Order Book. The Bid Limit orders (to the left) are displayed in blue, while the Ask limit orders (to the right) are displayed in red.}
	\label{fig:LOBRepresent}
\end{figure}

By contrast, market orders are requests to buy or sell a certain quantity of the asset at the best available price. Hence, a market order to sell (buy) the asset is matched against outstanding limit orders sitting at the best bid (ask) queue. 
Other than limit orders and market orders, cancellation of limit orders is another common operation, 
{which account for a considerably large fraction of the operations in an order book (cf. \cite{Harris})}. There is obviously a tradeoff between market and limit orders. While the former are immediately and surely executed, they provide the worst possible price. In contrast, executed limit orders would give better prices but there is a possibility that they won't be executed at all.

From the previous discussion it is clear that a price change occurs when either a queue of  the book's level I gets depleted (due to the cancellation of  limit orders or the arrival of a market order) or a new limit order is posted within the spread. Hence, the best bid (ask) price decreases (increases) when the queue of limit orders at the best bid (ask) gets depleted, in which case the next nonempty bid (ask) queue becomes part of the new level I. Similarly, when the spread is larger than one tick, the best bid (ask) price increases (decreases) when a new bid (ask) order is posted within the spread. It is important to point out that the dynamics of the asset's mid price is determined by the evolution of the book's level I.}

In recent years, there has been a great deal of {attention} on modeling the dynamics of LOBs. {We refer to \cite{Gould} for a recent review on the subject}. 
Earlier works in LOB modeling include \cite{Luckock}, \cite{Kruk}, \cite{Mendelson}, {and \cite{Rosu}} 
to mention just a few.
More recently, \cite{Cont10} {proposed} a {continuous-time} Markovian model for the order book, in which the possible ask and bid prices of limit orders are assumed to lie in a finite set $\mathcal{C}$. The arrival of limit orders at the different {price levels} in $\mathcal{C}$ are driven by independent Poisson processes, {whose} arrival rates {are} determined by the distance of {their} corresponding price level to the best bid/ask prices {according to} a power law relation. 
{This approach {was} further developed in \cite{Abergel} for a ``finite frame LOB model". More specifically, a fixed number $K$ of ``potential" bid and ask price levels are considered. A potential bid (ask) price level is one at which a bid (ask) limit order could be submitted. In particular, each time the best bid (ask) price queue gets depleted, the frame of potential ask (bid) price levels are shifted to the left (right), hence, forgetting the information at {the} right (left) most level in the other side of the book. Similarly, each time a bid (ask) limit order is submitted within the spread, the frame of possible ask (bid) price levels is {shifted} to the right (left), for which queues at the right (left) most price levels are assumed constant. Under the just described ``fixed moving frame" assumption, a diffusive limit for the mid-price process is established therein.  See below for further discussion and comparisons with our results.}

Our main {inspiration} for the present work is drawn from \cite{CL2012}'s seminal work, where a Markovian model is considered for the dynamics of level I. {The main motivation of considering only the level I and not the entire book is twofold. Firstly, as pointed {out} above, the asset's price is determined by the level I and, secondly, the information contained in the level I is key for many high-frequency trading strategies and problems.} By {preserving a} Poissonian order flow and imposing {some symmetry conditions on the shape} of the order book, \cite{CL2012} prove that the mid-price process, properly scaled in time and space, converges to a Brownian motion. The volatility of the limiting process can explicitly be computed using the input parameters of the model. {Unfortunately, the results therein required several strong assumptions, the most important of which are:}
\begin{itemize}
	\item[(i)] {a constant volume for all type of orders: market, limit, and cancellations};
	\item[(ii)] a constant spread {of one tick} between the best ask and bid prices {at all times};
	\item[(iii)] constant {parallel} price shifts {of one tick} after each depletion of a level I queue;
	\item[(iv)] {loss of memory}, in the sense that, after {each} level I queue depletion, the information on the remaining limit orders at the side which was not depleted is {reset}.
\end{itemize}

The assumption of constant spread is generally well-justified for ``large" tick assets, in which the tick size is comparable to a ``typical" price increment. {Indeed, as illustrated in \cite{CL2012}, for some stocks in the US market, the spread could be equal to one tick for more $98\%$ of the observations within a typical day. Besides small-tick assets (e.g.,  EUR/USD FX), there are other situations in which} the constant spread assumption cannot be validated. For instance, \cite{Bouchaud} argues that {relatively large spreads may be created by} monopolistic practices of market makers, high order processing costs, and large market orders. 
\cite{Pomponio} (see also \cite{MuniToke2012}) {argue} that traders keep track of the amounts of orders at the best quote in the {LOB and typically restrict the size of their market orders to be less than these amounts}. But, sometimes the speed of execution is more important than the market impact risk of large orders. {In that} case, orders larger than the size of the first limit (called trades-through) may be submitted. \cite{Pomponio} {argue} that {even though} trade-throughs may rarely occur (with {an} {occurrence probability} of less than $5\%$), they {make up for} a non-negligible part of the daily-volume (up to $20\%$ for the DAX index future). Since every trade-through widens the spread, a model that allows for variable spread is desirable.

In this work, {we propose a new model as a way to account for the possibility of a variable spread,  as described in the previous paragraph, and to relax some of the shortcomings of the framework proposed in \cite{CL2012}.
More concretely, while keeping some of the assumptions therein,} such as the Markovian order flow, one level at each side of the book, and a constant volume of order flow,
{the proposed} model allows for some ``memory" in the dynamics of the LOB by keeping the information of the outstanding orders on the other side of the book after each level I queue depletion. For instance, if the best bid queue gets depleted, the best bid price decreases one tick, but both the {price} and outstanding orders of the best ask price {are preserved}. In order to avoid perpetual widening of the spread, we also allow the arrival of orders within the spread according to a Poisson process. 
{As in \cite{CL2012}, we establish a diffusive approximation for the price process, albeit using an essentially different analysis. Concretely, our results in this direction build} on the mathematical {theory} of countable positive recurrent Harris chains (see, e.g., \cite{Meyn}).

One of the main {appealing} features of the model in \cite{CL2012} is its {tractability, which, in particular, enables} analytical computation of {several LOB features}, some of which are relevant for {high-frequency} trading and intraday risk management. {To illustrate the applicability of the framework proposed in this work, we put forward a feasible method to compute several quantities of interest, conditioned on the initial state of the book,} such as the distribution of the duration between price changes, the probability of a price increase, and the probability of two consecutive increments on the price. {The main tool for the derived formulas is an explicit characterization of the joint distribution of the time of a {depletion at the level I} and the amount {of} orders at the remaining queue at such {a} time based on the eigenvalues and eigenvectors of a certain finite-difference operator. The developed methods are also applied to {devise}} an efficient simulation algorithm for the dynamics of the LOB, 
{which is subsequently used to numerically study} the convergence of the midprice process towards its diffusive limit process.
The results in this manuscript are on one hand {theoretically relevant as {they allow to relax some of the strong assumptions of earlier works. On the other hand,} these results {enable} to give a bottom-up construction of the some ``efficient" models commonly used at low frequencies (e.g., subordinated Brownian motion models). 

{Let us finish this introduction with a brief discussion about the connection of our work with some earlier works.}
As previously mentioned, {\cite{Abergel} also obtains a diffusive approximation for the mid-price process;} however, our model cannot be framed within the approach {therein}. To {realize} this, note that the spread {the referred paper} remains bounded, while in our model, it can potentially take values on {$\mathbb{Z}_{+}$}, which considerably complicates the analysis. Let us also remark that our model can still account for the empirical {observation} of \cite{CL2012} that the spread spends a {very} large amount of time at the value of $1$ by taking {a large value for the intensity of {limit order arrivals} within the spread}. This feature {is, however, not possible} to incorporate {in the model in \cite{Abergel}} since the intensity of arrivals at the first potential bid and ask price level is constant regardless this level is already occupied by limit orders or not.

{Another relevant work is \cite{CL2012b}, which relaxes the assumptions of constant volume and no memory of \cite{CL2012} (assumptions (i) and (iv) as described above), but does not establish a diffusive approximation for the price process. Instead, the main result therein is a heavy traffic approximation for {the queue sizes of the LOB level I} as a Markovian jump-diffusion process in the first {quadrant}. On the interior of the {first quadrant}, the process follows a planar Brownian motion with some given drift and covariance matrix.  {Every time that the process hits one of the axis, this is then shifted to} the interior of the {quadrant} (according to some rules). {Therefore, as described in Proposition 1 in \cite{CL2012}, the price process is approximated by a random walk which moves one tick to the left or right depending on whether the just described jump-diffusion process hits} the x- or y-axis. Note that this is not the same type of {continuous} diffusive approximation for the price process {as} considered in this work. Let us also remark that, even though, in principle, \cite{CL2012b} relaxes the assumption (iv) of no memory, it imposes other technical assumptions that apparently precludes the type of memory considered in this work (see Remark \ref{RCL2012Ex} below for further details).}

The remainder of this paper is organized as follows. Section \ref{MainModel} analyses the model with variable spread and  
obtains a {Functional Central Limit Theorem} for the resultant price process. {Section \ref{CmpImpQnt} introduces {the method described above} to efficiently compute several quantities of interest related to the LOB.} In Section \ref{Implementation}, we analyze, via simulations,} the rate of convergence to the limiting process and the behavior of the spread and the asymptotic volatility in relation to the different model's parameters. To this end, we develop an ``efficient'' method to simulate the dynamics of the LOB level I based on our results of Section \ref{CmpImpQnt}. {In Section \ref{Implementation}, we also compute numerically some of the quantities of interest considered in Section \ref{CmpImpQnt} and study their behaviors under both our assumptions and those in \cite{CL2012}.} Finally, some of the technical proofs are presented in {one appendix}.

\section{A one-level LOB model with {memory and variable spread}} \label{MainModel}

In this section, we introduce {a new framework for the dynamics of the level I of a LOB}, which {is more} realistic than the model introduced in {\cite{CL2012}} in that, whenever a queue is depleted,  {the position and {size} of the remaining queue {are preserved}, while a new queue is generated one tick to the left or right of the depleted queue depending on whether the latter was at the bid or ask side of the book.} The model also incorporates {the possibility of {limit order} arrivals} within the spread whenever this is possible. {The main result in this section is to establish a diffusive approximation for} the mid-price process. More specifically, if $\{s_{t}\}_{t\geq{}0}$ {denotes} the {mid-price} process of the stock, then, for some appropriate constants $\sigma>0$ and $m$, the following invariance principle holds: 
\begin{equation}\label{GlAsMp}
	{\dfrac{s_{nt}-nmt}{\sqrt{n}} \Rightarrow \sigma W_t,\qquad n\to\infty},
\end{equation}
where $\{W_{t}\}_{t\geq{}0}$ is a Wiener process {and, hereafter, $\Rightarrow$ denotes convergence in distribution}. Heuristically, if we think of {$1/n$ as the time scale} at which the process is observed, (\ref{GlAsMp}) says that the price process can be approximated by a Brownian motion with drift at {``small" scales} (typically, 10 or more seconds, depending on the speed of the book events that happen at the order of milliseconds).

{This section is organized as follows. We first introduce the model and necessary notation in Subsection \ref{Sec:Mod2}. Subsection \ref{LLNmodtau} proves a Law of Large Numbers for the inter arrival times between price changes, which in turn is needed to determine the appropriate time scaling of the price process.  Finally, we proceed to obtain a Functional Central Limit Theorem (FCLT) for the price process itself in Subsection \ref{longrunsec}.}

\subsection{LOB dynamics}\label{Sec:Mod2}

As previously explained, {we only consider the level I of the order book, which suffices to determine the evolution of the price process of the asset. Concretely, as explained in the introduction, a} price change can only occur when {the outstanding} orders at any side in the book are depleted or when, if possible, a set of new orders arrive within the spread and becomes {part of the new} level I of the book. Concretely, suppose that the best bid price is at {$S^{b}$} and that its queue gets depleted after a market order or cancellation. Then, a new best bid queue is ``generated" at price {$S^{b}-\delta$}, hence causing the spread to widen. The size of this new queue of limit orders is {assumed} to be generated from a distribution ${f^{b}}$ on {$\mathbb{Z}_{+}$}, independently of any other information of the LOB, while the amount and position of the limit orders at the best ask level {are kept} unchanged. Similarly, if the queue at the best ask price gets depleted after a market order or cancellation, then a new queue is generated at the price {$S^{a}+\delta$, where $S^{a}$} is the {best ask} price before the order. The size of the new best ask queue is assumed to be generated from a distribution ${{f}^{a}}$ {on $\mathbb{Z}_{+}$}, independently of any other information. {In that case, the} bid side of the book remains unchanged. 

{The distributions ${f^{a}}$ and ${{f}^{b}}$ are {meant} to reflect the stationary behavior of the queue sizes at the next best queues after depletion.}
Throughout, we assume that both distributions ${f^a}$ and ${{f}^b}$ are supported on $\{1,2,\dots,N^{*}\}$, for some fixed $N^{*}\in{\mathbb{Z}_{+}}$, which can be chosen arbitrarily large. This simplifying assumption is imposed in order to guarantee the recurrence of the underlying Markov chain driving the dynamics of the price process. Also, for simplicity, {the tick is set to be} $\delta=1$.

When the spread is more than 1, {there is also the possibility of a price change due to the arrival of} a new set of orders within the spread at either the ask or bid side of the LOB. In the former case, the best ask price decreases by $\delta$, while the bid side remains unchanged. In the latter case, the best bid price increases by $\delta$, while the other side of the order book does not change. As before, the size of a new queue of limit orders is generated from the distribution ${f^{a}}$ or ${{f}^{b}}$, independently of any other variables, depending on whether the new limit order is at the ask or bid side\footnote{{{Our results are} still valid if one takes these distributions {to be} different from the one used when a level I queue gets depleted}.}

{We now proceed to give a {formal} mathematical formulation of the {LOB} dynamics. To that end, we need some notation:}
\begin{itemize}
	\item[{(i)}] Let {$\zeta_0$} be the initial spread and {$\zeta_i$}, for $i\geq1$, be the spread after the $i^{th}$ price change. The sizes of the {best} ask and bid queues at time $t$ are denoted by $q_t^a$ and $q_t^b$, respectively. Also, for $i\geq{}1$, {$\tau_i$} represents the time span between the $(i-1)^{th}$ and $i^{th}$ price changes {and we set $\tau_{0}=0$.}
	\item[{(ii)}] Throughout,  $\{\hat{Y}^{a,i}\}_{i\geq0}$ and $\{\hat{Y}^{b,i}\}_{i\geq0}$ are independent sequences of i.i.d. random variables, {taking values on {${\Omega}_{N^*}:=\{1,2,\dots,N^{*}\}$}, and with respective} distributions ${f^{a}}$ and ${f^{b}}$. {These will} indicate the amount of orders at the {best ask or bid queues} after that particular side changes in price.
	\item[{(iii)}] Let $\{{L^{a}_{{i}}(\zeta)}\}_{{i\geq{}0},\zeta\in\mathbb{Z}_{+}}$ and $\{{L_{{i}}^{b}(\zeta)}\}_{{i\geq{}0},\zeta\in\mathbb{Z}_{+}}$ be independent sequences of independent random variables such that ${L^{a}_{{i}}(\zeta)}$ and ${L^{b}_{{i}}(\zeta)}$ are exponentially distributed with parameter $\alpha\indicator{\zeta>1}$\footnote{{In particular, $L_{{i}}(1)=M_{{i}}(1)=\infty$, a.s., for all ${i}$.}}. These variables are also independent of any other variables in the system. Hereafter, ${L_{{i}}}(\zeta):={L_{{i}}^{a}(\zeta)\wedge L^{b}_{{i}}(\zeta)}$. {We shall interpret ${L_{i}^{a}(\zeta)}$ and ${L^{b}_{i}(\zeta)}$ as the times for a new set of orders to arrive at the ask and bid side, respectively, after the $i^{th}$-price change when the spread is at the value $\zeta$.}  
	\item[{(iv)}] For any starting point {$x\in\bar{\Omega}_{N^*}:=\{0,1,\ldots,N^*\}$}, let {$Q(x):=\{Q_t(x)\}_{t\geq{}0}$} be {a} continuous time Markov process with state space $\bar{\Omega}_{N^*}$ such that {$Q_0(x)=x$} and {its transition matrix ${\mathcal{Q}}:{\bar{\Omega}_{N^*}}\times {\bar{\Omega}_{N^*}}\to \R$ is given} by:
  \begin{align}\label{TrnsMtrQ} \nonumber
		{\mathcal{Q}}_{{j,j+1}}=\lambda,\;\;\text{for }{0}\leq {j}\leq N^*-1,&\quad 
		{\mathcal{Q}}_{{j,j-1}}={\upsilon}, \;\; \text{for } 1\leq {j} \leq N^*,\\
		{\mathcal{Q}}_{{j,j}}=-({\upsilon}+\lambda), \;\; \text{for } 1\leq {j} \leq N^*-1,&\quad \mathcal{Q}_{N^*,N^*}={-\upsilon},
		\\ \nonumber
		{\mathcal{Q}}_{{j,\ell}}=0,& \;\; \text{{otherwise}}
	\end{align}
{where $\upsilon:=\mu+\theta$ and $\lambda,\mu,\theta\in(0,\infty)$ are interpreted as the intensity of arrivals of limit orders at the level I, market orders, and cancellations, respectively.}	
\item[{(v)}] Finally, for any $i\geq{}0$ and $x\in\bar\Omega_{N^{*}}$, we let $Q^{a,i}(x):=\{Q^{a,i}_{t}(x)\}_{t\geq{}0}$ and $Q^{b,i}(x):=\{{Q}^{b,i}_{t}(x)\}_{t\geq{}0}$ be processes such that 
\begin{equation}\label{DfnQaQb}
	{Q^{a,i}(x)\stackrel{\mathcal{D}}{=}Q^{b,i}(x)\stackrel{\mathcal{D}}{=}Q(x)},
\end{equation} 
and  the collection of processes $\{Q^{a,i}(x),Q^{b,i}(x)\}_{i\geq{}0,x\in\bar\Omega_{N^{*}}}$ are mutually independent, and also independent of the processes introduced in {the points (ii)-(ii)}.
\end{itemize}

We are ready to give a formal construction {of the LOB dynamics}. Fix {$\tau_0=0$ and define the processes 
\begin{equation}\label{DfnXYOrg}
	X_t^{a,0}:=Q^{{a,0}}_t(x_{0}^{a}),\quad\quad
	X_t^{b,0}:=Q^{{b,0}}_t(x_{0}^{b}),
\end{equation}
for some arbitrary random initial queue sizes {$(x_{0}^{a},x_{0}^{b})\in{\Omega}_{N^*}^{2}$}, {which are assumed to be independent of any of the other processes considered in the points (i)-(v) above}. 
With the notation
\[
	\sigma^{a,1}:=\inf\left\{t\geq0:X_t^{a,0}=0\right\}\wedge {L^{a}_0(\zeta_{0})},\qquad 
	\sigma^{b,1}:=\inf\left\{t\geq0:X_t^{b,0}=0\right\}\wedge {L^{b}_0(\zeta_{0})},
\] 
at hand, the time of the first price change can now be defined by
\begin{equation}\label{FCP}
	T_{1}:={\tau_1}:=\sigma^{a,1}\wedge\sigma^{b,1},
\end{equation}
while, {for $t\in[0,T_{1})$,} the queue sizes at the best ask and bid prices are {respectively} given by
\[
	q_t^a=X_t^{a,0},\qquad q_t^b=X_t^{b,0}.
\]
{The number of orders at each side of the LOB {and the spread at time $T_1$ are} then set as}
\begin{align*}
	&q_{T_{1}}^{a}:=x_{1}^{a}:=\hat{Y}^{a,1}\indicator{{\tau_1}=\sigma^{a,1}}+X_{{\tau_1}}^{a,0}\indicator{{\tau_1}=\sigma^{b,1}},\quad
	{q_{T_{1}}^{b}}:=x_{1}^{b}:=\hat{Y}^{b,1}\indicator{{\tau_1}=\sigma^{b,1}}+X_{{\tau_1}}^{b,0}\indicator{{\tau_1}=\sigma^{a,1}},\\
	&\zeta_{1}=\zeta_{0}+\indicator{\tau_{1}<{L_0^{a}(\zeta_{0})\wedge L_0^{b}(\zeta_{0})}}-\indicator{\tau_{1}={L^{a}_0(\zeta_{0})}}-
	\indicator{\tau_{1}={L_0^{b}(\zeta_{0})}}.
\end{align*}
This process is continued recursively. Concretely, for $i\geq{}1$, we set
\begin{align*} 
	&q_t^a:=X_{t-T_{i}}^{a,i}, \quad q_t^b:=X_{{t-T_{i}}}^{b,i}, \quad \text{for } t\in\left[{T_i,T_{i+1}}\right),\quad {T_{i+1}:=T_{i}+\tau_{i+1}},\\
	&\tau_{i+1}:=\sigma^{a,i+1}\wedge\sigma^{b,i+1},\quad \zeta_{i+1}=\zeta_{i}+ \indicator{\tau_{i+1}<{L_i^{a}(\zeta_{i})\wedge L_i^{b}(\zeta_{i})}}-\indicator{\tau_{i}={L_i^{a}(\zeta_{i})}}-
	\indicator{\tau_{i}={L_i^{b}(\zeta_{i})}}
\end{align*}
where
\begin{align*} 
	&	X_t^{a,i}=Q^{{a,i}}_{t}(x_{i}^{a}),\qquad	X_t^{b,{i}}=Q^{{b,i}}_{t}(x^{b}_{i}),\\
		&\sigma^{a,i+1}=\inf\left\{t\more0:X_t^{a,i}=0\right\}\wedge {L_i^{a}(\zeta_{i})}, \qquad 
	\sigma^{b,i+1}=\inf\left\{t\more0:X_t^{b,i}=0\right\}\wedge {L_i^{b}(\zeta_{i})},\\
	&x^{a}_{i+1}:=\hat{Y}^{a,i+1}\indicator{\tau_{i+1}=\sigma^{a,i+1}}+ X_{\tau_{i+1}}^{a,i}\indicator{\tau_{i+1}=\sigma^{b,i+1}},\qquad
x_{i+1}^{b}:=\hat{Y}^{b,i+1}\indicator{\tau_{i+1}=\sigma^{b,i+1}}+X_{\tau_{i+1}}^{b,i}\indicator{\tau_{i+1}=\sigma^{a,i+1}}.
\end{align*}

The }above formulation justifies the following identities:
\begin{align}\label{MPI}
	&\Px\left(\left.(\tilde{x}_{k},\zeta_{k},\tau_{k})\in B\times C\times D\right|(\tilde{x}_{k-1},\zeta_{k-1})=(\tilde{x},\zeta)\right)=\Px\left(\left.(\tilde{x}_{1},\zeta_{1},\tau_{1})\in B\times C\times D\right|(\tilde{x}_{0},\zeta_{0})=(\tilde{x},\zeta)\right),\\ \label{MPII}
	&\Px\left(\left.(\tilde{x}_{k},\zeta_{k},\tau_{k})\in B\times C\times D\right|{\left\{(\tilde{x}_{i},\zeta_{i},\tau_{i})\right\}_{i=0}^{k-1}}\right)=\Px\left(\left.(\tilde{x}_{k},\zeta_{k},\tau_{k})\in B\times C\times D\right|(\tilde{x}_{k-1},\zeta_{k-1})\right),
\end{align}
where  {$\tau_{0}:=0$ and $\tilde{x}_{k}:=(x_{k}^{b},x_{k}^{a})$ represent's the sizes at the best bid and ask queues after the $k^{th}$ prices change}. In particular, it follows that
\begin{equation}\label{MIRNL}
	\tau_{k} \underset{\{(\tilde{x}_{i-1},\zeta_{i-1})\}_{i\geq{}0}}{\bot} (\tau_{k-1},\dots, \tau_{1}),\quad k\geq{}2,
\end{equation}
which, also implies the mutual independence of $\{\tau_{1},\dots, \tau_{n}\}$ given $\{(\tilde{x}_{i-1},\zeta_{i-1})\}_{i\geq{}0}$. {Furthermore, it is easy to see that the process 
\[
	(\Xi_{t},\Upsilon_{t}):=\sum_{k=0}^{n} (\tilde{x}_{k},\zeta_{k}){\mathbbm{1}_{[T_{k},T_{k+1})}(t)}
\]
is semimarkov in the sense of \cite{Cinlar75}.}

\begin{rem}\label{RCL2012Ex}
	{As mentioned in the introduction, one of the key features of the model proposed above is the incorporation of memory. Other recent works have also considered this feature. Notably, \cite{CL2012b} assumes that the level I queue {sizes after each price change,  $q_{_{T_{i}}}$, is a function  $g(q_{_{T_{i}^{-}}},\varepsilon_{i})$} of the level I queue sizes before the price change, $q_{_{T_{i}^{-}}}$, and a sequence of i.i.d. random innovations $\{\varepsilon_{i}\}_{i\geq{}1}:=\{(\varepsilon^{b}_{i},\varepsilon^{a}_{i})\}_{i\geq{}1}$. One of the examples considered therein is the case of ``pegged limit orders" in which $q_{_{T_{i}}}^{a}=\beta q_{_{T_{i}^{-}}}^{a}+\varepsilon^{a}_{i}$ and $q_{_{T_{i}}}^{b}=\varepsilon^{b}_{i}$, when the best bid queue gets depleted (with a similar relation holding for the case when the best ask queue gets depleted). Here, $\beta$ is a constant proportion and $(\varepsilon^{b}_{i},\varepsilon^{a}_{i})$ have distribution $F$ in {$\mathbb{Z}_{+}^{2}$}. However, Assumption 3 in \cite{CL2012b} precludes the situation where $\varepsilon^{a}_{i}=0$, a.s., and $\beta=1$, which is the type of memory we consider in this work.} 
\end{rem}

\subsection{A law of large numbers for the modified interarrival times }\label{LLNmodtau}

{Our first ingredient toward (\ref{GlAsMp}) is to establish} a law of large numbers (LLN) for {the time of the $n^{th}$-price change, $T_{n}=\sum_{k=1}^{n}\tau_{k}$,} using ergodic results for Markov chains. To that end, we first {introduce} some needed notation. Let $Z=\{Z_t\}_{t\in\N}$ denote a Markov chain {on a probability measure $(\Omega,\mathcal{F},\Px)$} with countable state space $\Xi$ and transition probability matrix ${P}:\Xi\times\Xi\rightarrow[0,1]$. For any probability measure $\mu=\{\mu(\hat{y}),\hat{y}\in\Xi\}$ on $\Xi$, {$\hat{y}\in\Xi$}, and {$A\subset\Xi^{{\mathbb{Z}_{+}}}=\{(z_1,z_2,\ldots)\left.\right|z_i\in\Xi\}$}, {we} define 
\[
{{\Px}_{\hat{y}}(Z_{\sbt}\in A):={\Px}(Z_{\sbt}\in A|Z_0=\hat{y})},\qquad {\Px}_\mu(Z_{\sbt}\in A):=\sum\limits_{\hat{y}\in\Xi}\mu(\hat{y}){\Px}_{\hat{y}}(Z_{\sbt}\in A).
\]
{As usual,} $\Ex_\mu$ denotes the expectation with respect to the probability measure ${\Px}_{\mu}$.
{We} say that {an event} $A$ occurs ${\Px}_*$-a.s. if $A$ occurs ${\Px}_{\hat{y}}$-a.s. for all $\hat{y}\in\Xi$. 

Let us recall that an irreducible Markov chain on a countable state space $\Xi$ is either {transient} or recurrent,
while a set $A$ is called Harris recurrent 
if
\[
{\Px}_{{z}}\left(\sum_{n=1}^\infty {\indicator{Z_n\in A}}=\infty\right)=1,\qquad z\in A.
\]
{A Markov chain is} called Harris recurrent if it is irreducible and every set {$A$ is} Harris recurrent. Also, the Markov chain $Z$ is called positive if it is irreducible and admits an {invariant probability measure}, while a positive and Harris recurrent chain is called positive Harris  (cf. Chapter 10 \cite{Meyn}).
{The following result from \cite{Meyn} (Theorem 17.0.1 therein) is key to obtain the aforementioned LLN.}
\begin{thm}\label{LLNHarris} Suppose that {$\{Z_{{t}}\}_{{t\in\mathbb{N}}}$} is a positive Harris chain with invariant probability measure $\pi$. Then, for any $g$ satisfying $\pi(|g|):=\sum\limits_{x}\pi(x)|g(x)|\less\infty$, 
\[
\lim_{n\rightarrow\infty}\frac{1}{n}\sum_{{t=1}}^n g({Z_{t}})=\pi(g),\quad\quad {\mathbb{P}}_*\hbox{-a.s.}
\]
\end{thm}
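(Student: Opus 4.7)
The plan is to exploit the regenerative structure that is automatic for a positive Harris chain on a countable state space. Because $\Xi$ is countable and the chain is irreducible and recurrent, every singleton $\{y_0\}\subset\Xi$ is an accessible atom in the sense of Nummelin, so no splitting construction is needed. First I would fix any $y_0\in\Xi$ and define the successive return times $T_0:=\inf\{t\geq 1:Z_t=y_0\}$ and $T_{k+1}:=\inf\{t\more T_k:Z_t=y_0\}$. By positive recurrence and Kac's formula, $\Ex_{y_0}[T_1-T_0]=1/\pi(y_0)\less\infty$, and under $\Px_{y_0}$ the strong Markov property makes the excursions $\mathcal{E}_k:=(Z_{T_k+1},\dots,Z_{T_{k+1}})$ into i.i.d.\ random objects.

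Next I would introduce the excursion functionals $W_k:=\sum_{t=T_k+1}^{T_{k+1}} g(Z_t)$ and establish, via the Kac occupation-measure representation
\[
\pi(y)=\frac{1}{\Ex_{y_0}[T_1-T_0]}\Ex_{y_0}\!\left[\sum_{t=T_0+1}^{T_1}\indicator{Z_t=y}\right], \qquad y\in\Xi,
\]
that $\Ex_{y_0}[|W_1|]\less\infty$ whenever $\pi(|g|)\less\infty$ (write $g=g^+-g^-$ and invoke monotone convergence) and that $\Ex_{y_0}[W_1]=\pi(g)\,\Ex_{y_0}[T_1-T_0]$. Applying Kolmogorov's SLLN to the i.i.d.\ sequences $\{W_k\}$ and $\{T_{k+1}-T_k\}$ then gives, $\Px_{y_0}$-a.s.,
\[
\frac{1}{N}\sum_{k=0}^{N-1} W_k \longrightarrow \pi(g)\,\Ex_{y_0}[T_1-T_0],\qquad \frac{T_N}{N}\longrightarrow \Ex_{y_0}[T_1-T_0].
\]
A standard renewal sandwich — writing any $n$ as $T_{K(n)}\leq n\less T_{K(n)+1}$ with $K(n)$ the number of returns by time $n$, and absorbing the residual excursion via $\Ex_{y_0}[\sup_{t\leq T_1}|g(Z_t)|]\less\infty$ (which follows from $\pi(|g|)\less\infty$) — converts the ratio into $n^{-1}\sum_{t=1}^n g(Z_t)\to \pi(g)$, $\Px_{y_0}$-a.s.

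Finally I would upgrade the $\Px_{y_0}$-a.s.\ statement to a $\Px_*$-a.s.\ statement. By Harris recurrence, the hitting time $\sigma:=\inf\{t\geq 0:Z_t=y_0\}$ is $\Px_y$-a.s.\ finite for every starting state $y$, and the strong Markov property at $\sigma$ decomposes $\sum_{t=1}^n g(Z_t)$ into a finite initial segment of length $\sigma$ plus a sum of the same form started from $y_0$; dividing by $n$ kills the initial segment and leaves the limit unchanged. The main obstacle — the only place where some care is needed — is controlling the tail excursion in the renewal sandwich when $g$ is not bounded, but this is handled cleanly by the two-sided integrability $\pi(|g|)\less\infty$, which bounds $\Ex_{y_0}\big[\sum_{t=T_0+1}^{T_1}|g(Z_t)|\big]=\pi(|g|)/\pi(y_0)\less\infty$.
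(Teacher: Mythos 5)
The paper does not prove this statement: it is quoted as Theorem 17.0.1 of \cite{Meyn}, so there is no in-paper proof to compare against. Your argument is a correct, self-contained proof in the countable-state setting that the paper actually needs, and it follows the same regenerative strategy that underlies the Meyn--Tweedie proof. The simplification you exploit---that on a countable state space every singleton $\{y_0\}$ is an accessible atom for a positive Harris chain, so no Nummelin splitting is required---is valid, and the i.i.d.\ excursion decomposition, the Kac occupation-measure representation, the SLLN applied to the excursion functionals $\{W_k\}$ and the inter-return times, and the renewal sandwich all go through. One small imprecision: the parenthetical claim $\Ex_{y_0}[\sup_{t\leq T_1}|g(Z_t)|]<\infty$ is not needed and is a slightly stronger assertion than your sandwich argument actually uses; what is required, and what you state correctly in your final sentence, is
\[
\Ex_{y_0}\Bigl[\sum_{t=T_0+1}^{T_1}|g(Z_t)|\Bigr]=\frac{\pi(|g|)}{\pi(y_0)}<\infty,
\]
from which the residual partial-excursion term satisfies
\[
\Bigl|\sum_{t=T_{K(n)}+1}^{n} g(Z_t)\Bigr|\leq\sum_{t=T_{K(n)}+1}^{T_{K(n)+1}}|g(Z_t)| = o(n)\ \ \text{a.s.},
\]
since the nonnegative i.i.d.\ per-excursion totals have finite mean and $K(n)/n$ converges to a finite constant. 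The upgrade from $\Px_{y_0}$-a.s.\ to $\Px_*$-a.s.\ via the a.s.\ finiteness of the hitting time of $y_0$ and the strong Markov property is standard and correct; note that this step only needs the initial segment to be a.s.\ finite in length, not integrable, and that is indeed all you invoke.
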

In the sequel, we shall use Theorem \ref{LLNHarris} to show a LLN for $T_{n}=\sum_{i=1}^{n}\tau_{i}$ by expressing each $\tau_{i}$ in terms of an appropriate Markov chain $Z:=\{Z_t\}_{t\in\mathbb{N}}$. Concretely, throughout the remaining of this subsection, we take 
\begin{equation}\label{MnZSec3}
	{Z:=\{Z_t\}_{t\geq{}1}:=\left\{(\tilde{x}_{t-1},\zeta_{t-1},\tilde{x}_{t},\zeta_{t})\right\}_{t\geq1}},
\end{equation}
where we recall that { $\tilde{x}_{t}:=(x_{t}^{b},x_{t}^{a})$} and $\zeta_{t}$ respectively represent the number of orders at the {book's level I (bid and ask) and} the spread after the $t-$th price change ({see} Section \ref{Sec:Mod2} {for details about the notation}). By (\ref{MPII}), we can see that {$Z$} is a Markov chain {with countable} state space 
\[
	\Xi:={\left\{(y_1,c_1,y_2,c_2)\ |\ {y_1=(y_1^a,y_1^b)\in{{\Omega}_{N^*}^{2}}},\ y_2=(y_2^a,y_2^b)\in{{\Omega}_{N^*}^{2}},\ c_1,c_2\in{\mathbb{Z}_{+}}, \ |c_1-c_2|=1\right\}}.
\]
Furthermore, {fixing  $U_n:=(\tilde{x}_{n},\zeta_{n})$ and noting that $U:=\{U_n\}_{n\geq{}0}$} is itself a Markov chain {by (\ref{MPI})-(\ref{MPII})}, it follows that
\[
{P(\hat{y},\hat{z})}:=\Px(Z_n=\hat{z}\left.\right| Z_{n-1}=\hat{y})=\Px((U_{n-1},U_n)=\hat{z}\left.\right|(U_{n-2},U_{n-1})=\hat{y})=
\Px(U_n=(z_2,d_2)\left.\right|U_{n-1}=(z_1,d_1)),\]
where $\hat{y}:=(y_1,c_1,y_2,c_2)\in\Xi$ and $\hat{z}:=(z_1,d_1,z_2,d_2)\in\Xi$ with $(y_{2},c_{2})=(z_{1},d_{1})$. 

Our first {objective} is to prove that we can apply Theorem \ref{LLNHarris} to the chain $Z$ {introduced in (\ref{MnZSec3})}. 
\label{ZnHarris} Since, for a countable state Markov chain, irreducibility reduces to see that all states communicate to one another, by {the description of the dynamics of $Z$ given in} the previous section, $Z$ is {clearly} irreducible. 
The existence of the invariant {probability} measure
would hold provided that $Z$ is {positive} recurrent (cf. {\cite[Corollary I.3.6]{Asmussen}}). 
Furthermore, since for a countable-state Markov chain, Harris recurrence is equivalent to {plain} recurrence ({see the} discussion below Theorem 9.0.1 in \cite{Meyn}), $Z$ will {then} be positive Harris chain, {provided that $Z$ is positive recurrent}.
\begin{thm} \label{recurrence} {If {$\alpha\geq{} \mu+\theta$}, then} the Markov chain {$Z:=\{Z_t\}_{t\geq{}1}:=\{(\tilde{x}_{t-1},\zeta_{t-1},\tilde{x}_{t},\zeta_{t})\}_{t\geq1}$} is {positive} recurrent.
\end{thm}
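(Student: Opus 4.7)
The plan is to apply a Foster--Lyapunov drift criterion to the chain $Z$ with a test function depending only on the current spread. The paragraph preceding the statement already observes that $Z$ is irreducible and that, for a countable-state Markov chain, positive recurrence is equivalent to the existence of an invariant probability measure. It therefore suffices to exhibit a function $V:\Xi \to [0,\infty)$ and a finite set $C \subset \Xi$ such that $\Ex[V(Z_{n+1}) - V(Z_n) \mid Z_n = z] \leq -\epsilon$ for some $\epsilon > 0$ and every $z \notin C$, with $\sup_{z \in C}\Ex[V(Z_{n+1}) \mid Z_n = z] < \infty$. Since the queue-size coordinates of the state lie in the finite set $\{1,\dots,N^*\}^2$, the only unbounded direction is the spread, which motivates the choice $V(z) := c_2$ (the current spread) and $C := \{z \in \Xi : c_2 = 1\}$, which is finite.

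For $z = (y_1,c_1,y_2,c_2) \notin C$, i.e.\ $c_2 \geq 2$, the next spread equals $c_2 + 1$ if a level I queue depletes before an in-spread order arrives and $c_2 - 1$ otherwise, so
\begin{equation*}
\Ex\bigl[V(Z_{n+1}) - V(Z_n) \bigm| Z_n = z\bigr] = 2\, p(y_2) - 1, \qquad p(y_2) := \Px\bigl(T^a(y_2^a) \wedge T^b(y_2^b) < L^a \wedge L^b\bigr),
\end{equation*}
where $T^a(x), T^b(x)$ denote the hitting times of $0$ by the independent $Q$-processes starting at $x$ and $L^a \wedge L^b \sim \mathrm{Exp}(2\alpha)$ is independent of these. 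The heart of the proof is to show that $p(y_2) < 1/2$ uniformly over $y_2 \in \{1,\dots,N^*\}^2$ under the hypothesis $\alpha \geq \upsilon = \mu+\theta$.

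A standard monotone coupling yields $T^a(x) \leq_{\mathrm{st}} T^a(x+1)$, so $p$ is coordinatewise decreasing and it is enough to bound $p(1,1)$. Conditioning on the first event of the joint process $(Q^a, Q^b, L^a \wedge L^b)$ from state $(1,1)$, whose total rate is $2\lambda + 2\upsilon + 2\alpha$, and using the ask/bid symmetry $p(2,1) = p(1,2)$, one obtains the one-step relation
\begin{equation*}
p(1,1) = \frac{\upsilon + \lambda\, p(2,1)}{\upsilon + \lambda + \alpha}.
\end{equation*}
Writing $T^a(2) = T^a(1) + T'$ with $T'$ an independent copy of the hitting time of $0$ from $1$ (by the strong Markov property), the exponential clock $L^a \wedge L^b$ falls in the interval $(T^a(1), T^a(2))$ with positive probability, giving $p(2,1) < p(1,1)$ strictly. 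Substituting yields $p(1,1)(\upsilon + \alpha) < \upsilon$, hence $p(1,1) < \upsilon/(\upsilon + \alpha) \leq 1/2$ when $\alpha \geq \upsilon$. The drift is therefore strictly and uniformly negative outside $C$; inside $C$ the next spread is deterministically $2$, and Foster's criterion yields positive recurrence of $Z$.

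I expect the main technical obstacle to be the bookkeeping for the boundary of the state space of $Q$ at $N^*$: the first-jump decomposition above requires the state $(2,1)$ to be reachable from $(1,1)$, which is the innocuous condition $N^* \geq 2$, and the stochastic monotonicity of hitting times in the reflected birth-death chain requires a standard but explicit coupling argument.
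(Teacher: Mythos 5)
Your proof is correct, and it takes a genuinely different and arguably more elementary route than the paper's. Both arguments reduce to the same key inequality, namely $p(1,1)=\Px(\varsigma((1,1))<L)<1/2$ under $\alpha\geq\mu+\theta$, but the two proofs differ both in the choice of Lyapunov function and in how that bound is obtained. The paper builds the super-harmonic, spread-dependent test function $\varphi(j)$ of Lemma \ref{phiharm}, which solves the three-term recursion $p_{\mathbf 1}\varphi(j+1)+(1-p_{\mathbf{N^*}})\varphi(j-1)=\varphi(j)$ (so $\varphi$ is geometric or damped-trigonometric in $j$), and then needs Lemma \ref{alpha} to guarantee that $\varphi(j)\to\infty$, which in turn rests on bounding $p_{\mathbf 1}$ via the comparison $\Px(\varsigma(\mathbf 1)\geq t)>e^{-2(\mu+\theta)t}$ and the integral $1-\int_0^\infty 2\alpha e^{-2\alpha t}\Px(\varsigma(\mathbf 1)\geq t)\,dt$. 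You instead take the identity function $V(z)=c_2$ of the current spread, for which the drift is literally $2p(y_2)-1$, and you establish $p(1,1)<1/2$ by a one-step self-bounding equation $p(1,1)=(\upsilon+\lambda\,p(2,1))/(\lambda+\upsilon+\alpha)$ plus monotonicity. Interestingly both routes land on exactly the same bound $p(1,1)\leq\upsilon/(\upsilon+\alpha)$. Your linear test function is simpler and avoids the machinery of Lemma \ref{phiharm} entirely; the paper's $\varphi$ is designed to be (super)harmonic, which is the more traditional route but here costs an extra lemma.

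One small but genuine inaccuracy: the decomposition $T^a(2)=T^a(1)+T'$ with $T'$ an independent copy of the hitting time of $0$ from $1$ is \emph{not} correct for this $Q$-chain, because of the reflecting barrier at $N^*$: the passage time from $2$ to $1$ lives on the truncated state space $\{1,\dots,N^*\}$ and does not have the same law as the passage time from $1$ to $0$ on $\{0,\dots,N^*\}$. Fortunately, what you actually need is only the \emph{strict} inequality $p(2,1)<p(1,1)$, and this does follow from the monotone coupling you invoke: under that coupling $T^a(2)\geq T^a(1)$ pathwise, and $\{T^a(1)\wedge T^b(1)<L\leq T^a(2)\wedge T^b(1)\}$ has positive probability (e.g.\ first event of $Q^a(1)$ is a departure, followed by $L$ ringing before the coupled $Q^a(2)$ — which is then at state $1$ — or $Q^b(1)$ hits $0$). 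You should replace the regeneration argument by this coupling event. Note also that when $\alpha>\upsilon$ strictly the non-strict $p(2,1)\leq p(1,1)$ already yields $p(1,1)<1/2$, so the strict step is only needed at the boundary $\alpha=\mu+\theta$; the requirement $N^*\geq 2$ that you flag is implicitly used by the paper as well, since the strict inequality $\Px(\varsigma(\mathbf 1)\geq t)>e^{-2(\mu+\theta)t}$ in Lemma \ref{alpha} likewise fails when $N^*=1$.
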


A well-known sufficient condition for a Markov chain to be positive recurrent over a countable state space is given by the following so-called {Foster or} mean drift conditions (cf. Theorem I.5.3 in \cite{Asmussen}) for some function $h:\Xi\rightarrow\R$, a constant $\epsilon> 0$, and a finite set $F\subset\Xi$:
\begin{equation}\label{FosterCnd}
	{\rm (i)}\,{\inf_{{\hat{z}\in \Xi}} h(\hat{z})} >-\infty,\quad {\rm (ii)}\,\sum\limits_{{\hat{z}\in \Xi}}P(\hat{y},\hat{z})h(\hat{z})\less\infty,\; \hat{y}\in F,\quad
	{\rm (iii)}\,\sum\limits_{\hat{z}\in \Xi}P(\hat{y},\hat{z})h(\hat{z})\less h(\hat{y})-\epsilon,\; \hat{y}\notin F.
\end{equation}
In order to verify that $Z$ satisfies the previous conditions, we need two preliminary results.
The following result constructs a super-harmonic function $\varphi$, outside the set ${F_{0}}:=\{\hat{y}\in\Xi: \hat{y}=(y_1,2,y_2,1)\}$. {Recall that $\varphi$ is said to be a super-harmonic function (cf. Section 17.1.2 in \cite{Meyn})  at some $\hat{y}\in\Xi$ if 
\begin{equation}\label{DfnSAF}
{{(P\varphi)(\hat{y}):=\sum\limits_{\hat{z}\in{\Xi}}P(\hat{y},\hat{z})\varphi(\hat{z})\leq\varphi(\hat{y})}.}
\end{equation}
The proof of the next result is deferred to Appendix \ref{ApndB}.}
\begin{lemma} \label{phiharm} Under the notation in Section \ref{Sec:Mod2}, let $\varsigma(x):=\inf\{t\more0:{Q_t^{a,0}(x_{1})\wedge Q_t^{b,0}(x_{2})}=0\}$, for $x:=(x_{1},x_{2})$, and let ${L:=L_{1}^{a}(2)\wedge L_{1}^{b}(2)}$ (i.e., {$L$} is exponentially distributed with rate $2\alpha$). Also, for any $\hat{y}=(y_1,j\pm1,y_2,j)\in\Xi$, let $\varphi(\hat{y}):{\Xi}\rightarrow\R$ be given by 
\begin{equation}\label{phisupharm}
\varphi((y_1,j\pm1,y_2,j)):={{\varphi}(j)}:=\left\{\begin{array}{lll} \left(\frac{1+\sqrt{1-4p_{\mathbf{1}}(1-p_{\mathbf{N^*}})}}{2p_{\mathbf{1}}}\right)^j &\qquad \text{if}& \qquad p_{\mathbf{1}}(1-p_{\mathbf{N^*}})\less\frac{1}{4},\\[.5cm]
\left(\frac{1}{2p_{\mathbf{1}}}\right)^j&\qquad \text{if}&\qquad p_{\mathbf{1}}(1-p_{\mathbf{N^*}})=\frac{1}{4},\\[.5cm]
\left(\frac{1-p_{\mathbf{N^*}}}{p_{{\mathbf{1}}}}\right)^{\frac{j}{2}}\cos(j\theta)&\qquad \text{if}&\qquad p_{\mathbf{1}}(1-p_{\mathbf{N^*}})\more\frac{1}{4},\end{array}\right.
\end{equation}
where 
\[
	p_{\mathbf{1}}:=\Px({L}\more\varsigma((1,1))),\quad p_{\mathbf{N^*}}:=\Px({L}\more\varsigma((N^*,N^*))), \quad \theta:=\arctan(\sqrt{4p_{\mathbf{1}}(1-p_{\mathbf{N^*}})-1}),
\]
Then, $\varphi$ is a super-harmonic function for the process $Z$ given in (\ref{MnZSec3}), at any {$\hat{y}\in\Xi\backslash F_{0}$,  where $F_{0}:=\{\hat{y}\in\Xi: \hat{y}=(y_1,2,y_2,1)\}$.}
\end{lemma}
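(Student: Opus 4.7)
My plan is to exploit the fact that $\varphi(\hat{y})$ depends on $\hat{y}=(y_{1},c_{1},y_{2},c_{2})\in\Xi$ only through $j:=c_{2}$, reducing the super-harmonic condition $(P\varphi)(\hat{y})\leq \varphi(\hat{y})$ to a one-dimensional spectral identity for the spread dynamics. Concretely, for $\hat{y}=(y_{1},c_{1},y_{2},j)\notin F_{0}$ (so $j\geq 2$), the construction in Section~\ref{Sec:Mod2} shows that the next spread $\zeta_{t+1}$ equals $j+1$ when one of the level~I queues (starting at $y_{2}$) is depleted before a new limit order arrives within the spread, and equals $j-1$ otherwise. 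Since for $j\geq 2$ both $L^{a}$ and $L^{b}$ are i.i.d.\ $\mathrm{Exp}(\alpha)$ and independent of the queue dynamics, the ``up'' probability equals $p_{y_{2}}:=\mathbb{P}(\varsigma(y_{2}) < L)$, where $L:=L^{a}\wedge L^{b}\sim\mathrm{Exp}(2\alpha)$. Crucially, $p_{y_{2}}$ does not depend on $j$, and hence
\[
(P\varphi)(\hat{y}) \;=\; p_{y_{2}}\varphi(j+1) + (1-p_{y_{2}})\varphi(j-1).
\]

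The next ingredient is the uniform bound $p_{\mathbf{N^{*}}}\leq p_{x}\leq p_{\mathbf{1}}$ for every $x\in\Omega_{N^{*}}^{2}$, obtained from a standard coupling argument: a coordinatewise increase of $x$ stochastically delays the hitting time $\varsigma(x)$, so $\mathbb{P}(\varsigma(x) < L)$ is coordinatewise non-increasing in $x$. I would then verify by direct substitution that, in each of the three branches of (\ref{phisupharm}), the function $\varphi$ satisfies the linear recurrence
\[
p_{\mathbf{1}}\varphi(j+1) - \varphi(j) + (1-p_{\mathbf{N^{*}}})\varphi(j-1) = 0, \quad j\geq 1.
\]
This is the characteristic recurrence associated with the quadratic $p_{\mathbf{1}}r^{2}-r+(1-p_{\mathbf{N^{*}}})=0$, whose discriminant $1-4p_{\mathbf{1}}(1-p_{\mathbf{N^{*}}})$ is positive, zero, or negative according to the three branches, and the formulas in (\ref{phisupharm}) are precisely the associated real-valued basic solutions (the larger real root to the $j$-th power, the repeated real root to the $j$-th power, and the real part of a complex root to the $j$-th power, respectively).

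Combining the two ingredients yields the algebraic identity
\[
(P\varphi)(\hat{y}) - \varphi(\hat{y}) \;=\; (p_{y_{2}}-p_{\mathbf{1}})\,\varphi(j+1) + (p_{\mathbf{N^{*}}}-p_{y_{2}})\,\varphi(j-1),
\]
in which both coefficients are non-positive by the uniform bound on $p_{y_{2}}$, so the super-harmonic inequality follows provided that $\varphi(j\pm 1)\geq 0$ on the relevant range. In the first two branches of (\ref{phisupharm}) this non-negativity is immediate since $\varphi$ is a positive geometric sequence. The delicate point, and the main obstacle I anticipate, is the complex-root case, where $\varphi(j)=((1-p_{\mathbf{N^{*}}})/p_{\mathbf{1}})^{j/2}\cos(j\theta)$ with $\theta\in(0,\pi/3]$ and $\cos(j\theta)$ may change sign as $j$ grows; closing the inequality there will require either a parameter-level restriction (possibly flowing from the hypothesis $\alpha\geq\mu+\theta$ of Theorem~\ref{recurrence}, which could confine the parameters to the first two branches or pin $\theta$ to a safe range), or a more careful inductive sign-tracking argument exploiting the recurrence directly.
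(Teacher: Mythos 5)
Your proposal takes essentially the same route as the paper's proof: you reduce $(P\varphi)(\hat y)$ to the two-term expression $p_{y_2}\varphi(j+1)+(1-p_{y_2})\varphi(j-1)$ by conditioning on the outcome of the competition between queue depletion and in-spread arrival, then invoke the coordinatewise monotonicity of $\varsigma$ to bound $p_{y_2}$ between $p_{\mathbf{N^*}}$ and $p_{\mathbf{1}}$, and finally appeal to the linear recurrence $p_{\mathbf{1}}\varphi(j+1)+(1-p_{\mathbf{N^*}})\varphi(j-1)=\varphi(j)$ whose three characteristic-root regimes give exactly the three branches of~(\ref{phisupharm}). This is precisely what the paper does.

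Your anticipated obstacle, however, is real and worth flagging. Passing from $p_{y_2}\varphi(j+1)+(1-p_{y_2})\varphi(j-1)$ to $p_{\mathbf{1}}\varphi(j+1)+(1-p_{\mathbf{N^*}})\varphi(j-1)$ requires $\varphi(j\pm1)\geq 0$; otherwise replacing a coefficient by a larger one in front of a negative term reverses the inequality. In the two real-root branches this is immediate (the base is $1/(2p_{\mathbf{1}})>1$ so $\varphi$ is a positive increasing geometric). In the complex-root branch $\varphi(j)=\bigl((1-p_{\mathbf{N^*}})/p_{\mathbf{1}}\bigr)^{j/2}\cos(j\theta)$ changes sign infinitely often, so the paper's one-line assertion that satisfying the recurrence ``is a sufficient condition for $\varphi$ to be super-harmonic'' is not justified there. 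Note also that the paper's hypothesis $\alpha\geq\mu+\theta$ (imposed in Lemma~\ref{alpha}, and giving $p_{\mathbf{1}}<1/2$) does not by itself rule out the complex-root regime $p_{\mathbf{1}}(1-p_{\mathbf{N^*}})>1/4$, and the accompanying claim in Lemma~\ref{alpha} that $\lim_{j\to\infty}\varphi(j)=\infty$ likewise fails for the oscillating branch. So your concern is not resolved by the paper; the argument as written implicitly restricts to the regime where the discriminant $1-4p_{\mathbf{1}}(1-p_{\mathbf{N^*}})$ is nonnegative, and a complete proof should either add that restriction explicitly or provide a separate argument for the complex-root case.
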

The next {result} is crucial to construct the function $h$ satisfying the conditions (\ref{FosterCnd}). Its proof is also  deferred to Appendix \ref{ApndB}.
\begin{lemma} \label{alpha} 
Using the notation of Lemma \ref{phiharm}, for any {$\alpha\geq \mu+\theta$,} {it holds that $\lim_{j\rightarrow\infty}{{\varphi}(j)}=\infty$}.
\end{lemma}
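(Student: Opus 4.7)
The plan is to perform a case-by-case analysis of the three branches defining $\varphi(j)$ in (\ref{phisupharm}) and use the hypothesis $\alpha\geq\upsilon:=\mu+\theta$ to rule out the oscillatory third case and force divergence in the other two. A useful preliminary observation is the stochastic domination $\varsigma((1,1))\leq\varsigma((N^*,N^*))$ (pathwise under the natural monotone coupling of the birth--death chains $Q^{a,0},Q^{b,0}$ on $\bar{\Omega}_{N^*}$). Since $L$ is independent of $\varsigma$, this immediately yields
\[
p_{\mathbf{N^*}}=\Ex\!\left[e^{-2\alpha\varsigma((N^*,N^*))}\right]\leq \Ex\!\left[e^{-2\alpha\varsigma((1,1))}\right]=p_{\mathbf{1}}.
\]

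In Case 1, $\varphi(j)=\rho^{j}$ with $\rho=(1+\sqrt{1-4p_{\mathbf{1}}(1-p_{\mathbf{N^*}})})/(2p_{\mathbf{1}})$, and the claim reduces to $\rho>1$, i.e.\ $\sqrt{1-4p_{\mathbf{1}}(1-p_{\mathbf{N^*}})}>2p_{\mathbf{1}}-1$. If $p_{\mathbf{1}}\leq 1/2$ this is immediate; if $p_{\mathbf{1}}>1/2$, squaring gives $p_{\mathbf{N^*}}>p_{\mathbf{1}}$, contradicting the domination above. In Case 2, $\varphi(j)=(1/2p_{\mathbf{1}})^{j}$, so divergence to $+\infty$ is equivalent to $p_{\mathbf{1}}<1/2$. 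Case 3 cannot arise under the hypothesis: the multiplicative factor $\cos(j\theta)$ with $\theta=\arctan(\sqrt{4p_{\mathbf{1}}(1-p_{\mathbf{N^*}})-1})>0$ prevents any limit from being $+\infty$, so the proof must show that $4p_{\mathbf{1}}(1-p_{\mathbf{N^*}})\leq 1$. Hence everything reduces to proving, under $\alpha\geq \upsilon$, the sharper inequality $p_{\mathbf{1}}(1-p_{\mathbf{N^*}})\leq 1/4$ (which, combined with $p_{\mathbf{1}}\geq p_{\mathbf{N^*}}$, also forces $p_{\mathbf{1}}\leq 1/2$ and thus covers Cases 1 and 2).

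To obtain this quantitative estimate, I would work on the joint chain $(Q^{a,0},Q^{b,0})$ on $\{1,\dots,N^*\}^{2}$ absorbed at the axes, racing against the independent exponential clock $L$ of rate $2\alpha$. Writing $u(i,j):=\Px(\varsigma((i,j))<L)$, a one-step analysis yields a finite linear system whose solution can be expressed through the single-chain first-passage Laplace transform $F_{k}(s):=\Ex[e^{-s\tau_{k}}]$ of the reflected birth--death chain started at $k$. This $F_{k}$ satisfies $\lambda F_{k+1}-(\upsilon+\lambda+s)F_{k}+\upsilon F_{k-1}=0$ with $F_{0}=1$ and the reflecting boundary condition $F_{N^{*}}=(\upsilon/(\upsilon+s))F_{N^{*}-1}$, and thus has the explicit form $F_{k}(s)=Ar_{+}(s)^{k}+Br_{-}(s)^{k}$ where $r_{\pm}$ are the roots of the characteristic polynomial. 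Plugging in $s=2\alpha$ and using symmetry to relate $u(1,1)$ and $u(N^*,N^*)$ to $F_{1}$ and $F_{N^*}$, one reduces $p_{\mathbf{1}}(1-p_{\mathbf{N^*}})\leq 1/4$ to an algebraic inequality in $r_{\pm}(2\alpha)$.

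The main obstacle will be closing this algebraic inequality tightly. Crude bounds (e.g.\ replacing $\varsigma((1,1))$ by the minimum of the first jump times of the two queues) yield only $p_{\mathbf{1}}\leq (\upsilon+\lambda)/(\alpha+\upsilon+\lambda)$, which for $\alpha\geq\upsilon$ gives $(\upsilon+\lambda)/(2\upsilon+\lambda)$ and exceeds $1/2$ whenever $\lambda>0$. A sharper argument is therefore required that exploits the full return structure of each birth--death queue, together with the key input $\alpha\geq \upsilon$, which enters through the factor $\upsilon/(\upsilon+2\alpha)\leq 1/3$ in the boundary condition for $F_{N^*}$ and dominates the characteristic-equation discriminant. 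This last piece of algebra is the delicate part of the proof and is the reason the result is deferred to Appendix \ref{ApndB}.
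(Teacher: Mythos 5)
Your case-by-case structure is sound and the stochastic-domination step giving $p_{\mathbf{N^*}}\leq p_{\mathbf{1}}$ matches the paper. However, you have missed the paper's much simpler route to the key bound $p_{\mathbf{1}}<\tfrac12$, which makes the spectral/Laplace-transform machinery you sketch unnecessary: a level-I queue can only be depleted by a market order or cancellation, never by a limit order, so $\{\varsigma(\mathbf{1})\geq t\}$ contains the event that no market order or cancellation has arrived at either side by time $t$. These arrive at combined rate $2(\mu+\theta)$, so $\Px(\varsigma(\mathbf{1})\geq t)>e^{-2(\mu+\theta)t}$, which together with the integration-by-parts identity (\ref{p1bound}) gives $p_{\mathbf{1}}<(\mu+\theta)/(\alpha+\mu+\theta)\leq\tfrac12$ whenever $\alpha\geq\mu+\theta$. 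The bound you dismissed as too crude, $p_{\mathbf{1}}\leq(\upsilon+\lambda)/(\alpha+\upsilon+\lambda)$, is weaker precisely because you raced $L$ against the first transition of \emph{any} kind; once only the depletion-causing clocks are tracked, the $\lambda$ drops out and this elementary estimate already settles Cases 1 and 2.

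Your concern about Case 3, on the other hand, is well founded and is not actually resolved by the paper's proof either: after establishing $p_{\mathbf{N^*}}<p_{\mathbf{1}}<\tfrac12$ it simply asserts the divergence, but $p_{\mathbf{1}}<\tfrac12$ does not imply $p_{\mathbf{1}}(1-p_{\mathbf{N^*}})\leq\tfrac14$ (e.g.\ $p_{\mathbf{1}}\approx0.45$, $p_{\mathbf{N^*}}\approx0.05$), and in the complex-root regime $\varphi(j)$ changes sign infinitely often, so $\lim_{j\to\infty}\varphi(j)=\infty$ fails (and the super-harmonicity comparison in Lemma \ref{phiharm} also tacitly uses $\varphi\geq0$). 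The repair you sketch does not close this gap: $p_{\mathbf{1}}(1-p_{\mathbf{N^*}})\leq\tfrac14$ is not a consequence of $\alpha\geq\mu+\theta$ alone (take $\alpha=\upsilon$, $\lambda$ small and $N^*$ large, so that $p_{\mathbf{1}}\uparrow\tfrac12$ while $p_{\mathbf{N^*}}\downarrow0$, pushing the product toward $\tfrac12$), and your parenthetical claim that $p_{\mathbf{1}}(1-p_{\mathbf{N^*}})\leq\tfrac14$ together with $p_{\mathbf{N^*}}\leq p_{\mathbf{1}}$ forces $p_{\mathbf{1}}\leq\tfrac12$ is also incorrect (e.g.\ $p_{\mathbf{1}}=0.6$, $p_{\mathbf{N^*}}=0.59$). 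So Case 3 requires a genuinely different argument, or a stronger hypothesis such as $\alpha\geq 3(\mu+\theta)$ (which yields $p_{\mathbf{1}}\leq\tfrac14$ and hence $p_{\mathbf{1}}(1-p_{\mathbf{N^*}})<\tfrac14$ directly), and this is a real gap in both your proposal and the published proof.
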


Finally, we can prove that the Markov chain {$Z$} is {positive} recurrent.
\begin{proof}[Proof of Theorem \ref{recurrence}] Consider the function $\varphi(\hat{y})=\varphi((y_1,j\pm1,y_2,j))=\varphi(j)$ as given by Equation (\ref{phisupharm}). Then, by the proof of Lemma \ref{phiharm}, we know that,
\begin{align} \label{rel:used:pos:recurr}
		{\varphi(j-1)(1-p_{\mathbf{N^*}})+\varphi(j+1)p_{\mathbf{1}}=\varphi(j)}.
\end{align}
Take any $\epsilon\in(0,1)$ {and define} $h(\hat{y})=\varphi(\hat{y})-\epsilon/(p_{\mathbf{1}}-p_{\mathbf{N^*}})$, and $h(j)=h(\hat{y})$, for any $\hat{y}=(y_1,j\pm1,y_2,j)$. Notice that $p_{\mathbf{1}}\more p_{\mathbf{N^*}}$ and, by the Lemma \ref{alpha}, $h(j)\rightarrow\infty$ as $j\rightarrow\infty$. Let $\Theta\in\R$ be such that for $j\more \Theta$, we have that $\varphi(j)\more\epsilon/(p_{\mathbf{1}}-p_{\mathbf{N^*}})$, and let also $F=\{\hat{y}\in\Xi: \hat{y}=(y_1,z_1,y_2,z_2),\ z_2\leq \Theta+1\}$. Notice that {$F$ is a finite set. From the definition of $h$ and the fact that $P(\hat{y},\hat{z})>0$ for only finitely many $\hat{z}$, it is clear that $h$ satisfies the first two Foster conditions shown in (\ref{FosterCnd}).} On the other hand, {following similar steps as in the proof of Lemma \ref{phiharm}, we have that,} for every $\hat{y}\notin {F}$,
\begin{align*}
\sum\limits_{\hat{z}\in\Xi}P(\hat{y},\hat{z})h(\hat{z})&=h(j-1)\Px(N\less\varsigma(y_2))+ h(j+1)\Px(\varsigma(y_2)\less N)\\
&=(\varphi(j-1)-\epsilon/(p_{\mathbf{1}}-p_{\mathbf{N^*}}))\Px(N\less\varsigma(y_2))+(\varphi(j+1)-\epsilon/(p_{\mathbf{1}}-p_{\mathbf{N^*}}))\Px(\varsigma(y_2)\less N)\\
&\leq(\varphi(j-1)-\epsilon/(p_{\mathbf{1}}-p_{\mathbf{N^*}}))(1-p_{\mathbf{N^*}})+(\varphi(j+1)-\epsilon/(p_{\mathbf{1}}-p_{\mathbf{N^*}}))p_{\mathbf{1}}\\
&=\varphi(j)-\epsilon(1-p_{\mathbf{N^*}}+p_{\mathbf{1}})/(p_{\mathbf{1}}-p_{\mathbf{N^*}})\\
&=h(j)-\epsilon=h(\hat{y})-\epsilon.
\end{align*}
This proves the last Foster condition given in (\ref{FosterCnd}) and the fact that $Z$ is positive recurrent follows.
\end{proof}

Once we have proved that $Z$ satisfies the hypothesis of Theorem \ref{LLNHarris}, we now introduce the functions on which the theorem is applied. For any $\hat{x}=(x_0,c_0,x_1,c_1)\in\Xi$, let 
\begin{align*}
	f(\hat{x})&:=\Ex(\tau_1|\hat{x}):=\Ex\left(\left.\tau_1\right|(\tilde{x}_0,\zeta_0)=(x_0,c_0),\ (\tilde{x}_1,\zeta_1)=(x_1,c_1)\right),\\
	g_t(\hat{x})&:=\Px(\tau_1\more t\left.\right|\hat{x}):=\Px(\tau_1\more t|(\tilde{x}_0,\zeta_0)=(x_0,c_0),(\tilde{x}_1,\zeta_1)=(x_1,c_1)).
\end{align*}
{We have the following result, whose proof is deferred to Appendix \ref{ApndB}}:
\begin{lemma}\label{FntnessIntg}
	Suppose that the conditions of {Theorem} \ref{recurrence} hold and let $\pi$ be the invariance probability of the chain $Z$. Then, $P_{*}$-a.s.,
	\begin{align} \label{CondLLN}
{\rm (i)}\;\;\lim\limits_{n\rightarrow\infty}\frac{1}{n}\sum_{k=1}^n f\left(Z_{k}\right)=\Ex_\pi(\tau_1),\quad
\quad 
{\rm (ii)}\;\;\lim\limits_{n\rightarrow\infty}\frac{1}{n}\sum\limits_{k=1}^n g_{t}\left(Z_{k}\right)&=\Px_\pi(\tau_1\more t). 
\end{align}
\end{lemma}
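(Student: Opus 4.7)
The plan is to invoke Theorem \ref{LLNHarris} on the Markov chain $Z$ defined in (\ref{MnZSec3}) with $g=f$ for part (i) and $g=g_{t}$ for part (ii). That $Z$ is a positive Harris chain with a unique invariant probability $\pi$ has already been established: Theorem \ref{recurrence} gives positive recurrence, the discussion preceding it verifies irreducibility of $Z$ on the countable state space $\Xi$, and for a countable-state irreducible chain, recurrence coincides with Harris recurrence (as noted just above Theorem \ref{recurrence}). The only ingredients still needed are the integrability condition $\pi(|g|)<\infty$ and the identification of $\pi(g)$ with the right-hand side of (\ref{CondLLN}) in each case.

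Part (ii) is immediate: since $0\leq g_{t}\leq 1$, one has $\pi(|g_{t}|)\leq 1<\infty$, so Theorem \ref{LLNHarris} applies, and by the tower property $\pi(g_{t})=\Ex_{\pi}[\Px(\tau_{1}>t\mid Z_{1})]=\Px_{\pi}(\tau_{1}>t)$, which is the right-hand side of (ii).

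For part (i), the tower property gives $\pi(f)=\Ex_{\pi}[\Ex(\tau_{1}\mid Z_{1})]=\Ex_{\pi}(\tau_{1})$, so it suffices to bound the conditional expectation $\Ex(\tau_{1}\mid(\tilde x_{0},\zeta_{0}))$ uniformly in $(\tilde x_{0},\zeta_{0})$. By the construction in Section \ref{Sec:Mod2},
\[
\tau_{1}=\sigma^{a,1}\wedge\sigma^{b,1}\;\leq\;\inf\{t\geq 0:X^{a,0}_{t}=0\},
\]
and the right-hand side is the first hitting time of $0$ by the birth--death chain $X^{a,0}=Q^{a,0}(x_{0}^{a})$ on the finite set $\bar\Omega_{N^{*}}$, with birth rate $\lambda$ and death rate $\upsilon$, started from $x_{0}^{a}\in\Omega_{N^{*}}$. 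Since the state space is finite and $0$ is reachable from every state in $\Omega_{N^{*}}$, standard absorption-time estimates for finite birth--death chains yield a constant $C=C(\lambda,\upsilon,N^{*})$ such that this hitting time has expectation at most $C$ uniformly in $x_{0}^{a}$. Consequently $\Ex(\tau_{1}\mid(\tilde x_{0},\zeta_{0}))\leq C$, so $\pi(f)\leq C<\infty$, and Theorem \ref{LLNHarris} delivers (i).

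The only mildly subtle point is the choice of dominating variable: one cannot use $\tau_{1}\leq L_{0}^{a}(\zeta_{0})\wedge L_{0}^{b}(\zeta_{0})$ directly, because those exponential clocks are a.s.\ $+\infty$ when $\zeta_{0}=1$. Bounding $\tau_{1}$ by the ask-side absorption time sidesteps this issue and works uniformly in the spread, and is the only real (but minor) technical step in the argument; everything else is a direct application of the ergodic theorem stated as Theorem \ref{LLNHarris}.
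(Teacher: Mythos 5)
Your proof is correct and reaches the same conclusion as the paper's, but by a cleaner route. Both proofs reduce the problem to verifying $\pi(|f|)<\infty$ (and the trivial $\pi(|g_t|)\leq1$) and then apply Theorem \ref{LLNHarris}. Where you differ is in how you control $\pi(|f|)$: the paper bounds the conditional expectation $\Ex(\tau_1\mid\hat{x})$ for each full state $\hat{x}=(x_0,c_0,x_1,c_1)$, which requires a somewhat awkward extra step — introducing $r_{\min}(x_0)$ (a lower bound on the one-step transition probabilities out of $(x_0,c_0)$) so that the inequality
\[
r_{\min}(x_0)\sum_{x_1}\Ex(\tau_1\mid x_0,c_0,x_1,c_0\pm1)\leq\Ex(\tau_1\mid(\tilde{x}_0,\zeta_0)=(x_0,c_0))<\infty
\]
can be used to pass from a bound conditioned on $(\tilde{x}_0,\zeta_0)$ alone to a pointwise bound on $f(\hat{x})$. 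You bypass this entirely by observing that $f\geq0$ and, by the tower property, $\pi(f)=\Ex_\pi[\Ex(\tau_1\mid Z_1)]=\Ex_\pi[\tau_1]=\Ex_\pi[\Ex(\tau_1\mid(\tilde{x}_0,\zeta_0))]$, so a uniform bound on the coarser conditional expectation already suffices. That is a genuine simplification. Your choice of dominating random variable — the hitting time of $0$ by the ask-side chain $X^{a,0}$, rather than the exponential clock $L^a_0(\zeta_0)$ which is $+\infty$ on $\{\zeta_0=1\}$ — is exactly the right move and is also how the paper obtains its finiteness bound (via $\Ex\varsigma((N^*,N^*))$). One small nit: under the generator $\mathcal{Q}$, state $0$ is not absorbing ($\mathcal{Q}_{0,1}=\lambda$), so ``absorption-time estimate'' should really be ``expected hitting time of $0$ for an irreducible chain on a finite state space,'' but this changes nothing of substance since that hitting time still has a uniformly bounded expectation.
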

{In order to obtain the LLN for {the interarrival times $\{\tau_i\}_{i\geq{}1}$}, we shall} show that the Laplace transform of the random variables {$T_n=\tau_1+\ldots+\tau_n$, properly scaled,} converges to the Laplace transform of a random variable $T$, for which we need the following:

\begin{prop} \label{prop1}
For $u\in\R_{+}$ and $\hat{x}=(x_0,c_0,x_1,c_1)\in\Xi$, define the functions 
\begin{align*}
	G(u|\hat{x})&:=\Ex\left(\left.e^{-u\tau_{1}}
	\right|\left(\tilde{x}_{0},\zeta_{0},\tilde{x}_{1},\zeta_{1}\right)=(x_0,c_0,x_1,c_1)\right),\\
	\kappa(\hat{x})&:=\Ex\left(\left.-\tau_1\right|\left(\tilde{x}_{0},\zeta_{0},\tilde{x}_{1},\zeta_{1}\right)=(x_0,c_0,x_1,c_1)\right).
\end{align*}
Then, under the assumption {of Proposition} \ref{alpha},  for any $u\in[0,\infty]$,
\begin{equation}\label{WWPH}
	{\lim\limits_{n\rightarrow\infty}\sum\limits_{k=1}^{n}\ln G\left(\left.\frac{u}{n}\right|\tilde{x}_{k-1},\zeta_{k-1},\tilde{x}_{k},\zeta_{k}\right)=-u\Ex_\pi(\tau_1),\quad\quad{P_* -{a.s.}}},
\end{equation}
where $\pi$ is the stationary measure of the Markov chain $\{Z_n\}_{n\geq0}$.
\end{prop}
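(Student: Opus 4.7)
The plan is a Taylor expansion of $\ln G(\cdot\,|\,\hat{x})$ around $0$, combined with Lemma~\ref{FntnessIntg}(i) applied to the mean-time function $f$. Since $G(0\,|\,\hat x)=1$ and, by definition, $\partial_u G(u\,|\,\hat x)\big|_{u=0}=-\Ex(\tau_1\,|\,\hat x)=-f(\hat x)$, one has
\[
\ln G\!\left(\tfrac{u}{n}\,\big|\,\hat x\right) \;=\; -\tfrac{u}{n}\,f(\hat x) \;+\; R_n(u,\hat x),
\]
where $|R_n(u,\hat x)|\leq \tfrac{u^2}{2n^2}\sup_{s\in[0,u/n]}\bigl|\partial_u^2 \ln G(s\,|\,\hat x)\bigr|$. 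Summing from $k=1$ to $n$ yields
\[
\sum_{k=1}^n \ln G\!\left(\tfrac{u}{n}\,\big|\,Z_k\right) \;=\; -u\cdot\tfrac{1}{n}\sum_{k=1}^n f(Z_k) \;+\; \sum_{k=1}^n R_n(u,Z_k).
\]
The main term converges to $-u\,\Ex_\pi(\tau_1)$, $P_*$-a.s., by Lemma~\ref{FntnessIntg}(i). It therefore remains to show that the remainder sum is negligible.

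The key observation for controlling the remainder is that $\tau_1$ has \emph{uniformly bounded moments} under the conditional laws indexed by $\hat x\in\Xi$. Indeed, given $\hat x=(x_0,c_0,x_1,c_1)$, the variable $\tau_1$ is dominated by $\sigma^{a,1}\wedge\sigma^{b,1}$, which in turn is dominated by the absorption time at $0$ of the birth--death chain $Q(x_0^{a,0}\wedge x_0^{b,0})$ on the finite state space $\bar\Omega_{N^*}$. Since this absorption time has finite exponential moments uniformly in the starting point (it is stochastically bounded by a geometric sum of independent $\mathrm{Exp}(\upsilon+\lambda)$ random variables of length at most $N^*$), there exists $u_0>0$ such that, for every $s\in[0,u_0]$ and every $\hat x\in\Xi$,
\[
G(s\,|\,\hat x)\;\geq\;\tfrac{1}{2},\qquad \bigl|\partial_u^2 \ln G(s\,|\,\hat x)\bigr|\;\leq\; C
\]
for some absolute constant $C=C(u_0,\mu,\theta,\lambda,\alpha,N^*)<\infty$ (the second bound follows from $\partial_u^2 \ln G=(G\,\Ex(\tau_1^2 e^{-u\tau_1}\,|\,\hat x)-\Ex(\tau_1 e^{-u\tau_1}\,|\,\hat x)^2)/G^2$ together with $\Ex(\tau_1^2\,|\,\hat x)\leq C'$). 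Consequently, for every fixed $u\in[0,\infty)$, once $n$ is large enough that $u/n\leq u_0$, we obtain
\[
\left|\sum_{k=1}^n R_n(u,Z_k)\right| \;\leq\; n\cdot \tfrac{u^2}{2n^2}\cdot C \;=\; \tfrac{C u^2}{2n}\;\longrightarrow\;0,
\]
which establishes (\ref{WWPH}) for every finite $u\geq 0$.

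Finally, the case $u=\infty$ is handled separately. Since $\tau_1>0$ almost surely, $G(\infty\,|\,\hat x)=0$ and so each summand $\ln G(\infty\,|\,Z_k)=-\infty$; the stated limit $-u\,\Ex_\pi(\tau_1)=-\infty$ then holds trivially (recalling $\Ex_\pi(\tau_1)>0$, which is clear from positivity of $\tau_1$ under $\pi$). The main obstacle in the argument is the uniform-in-$\hat x$ bound on $\Ex(\tau_1^2\,|\,\hat x)$ that powers the remainder estimate; once this is in place, the rest is a direct consequence of Theorem~\ref{LLNHarris} applied to $f$ via Lemma~\ref{FntnessIntg}(i).
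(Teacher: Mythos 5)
Your argument takes a genuinely different route from the paper's. The paper squeezes $\ln G(u/n\,|\,\cdot)$ from both sides: a lower bound $\geq \frac{u}{n}\kappa(\cdot)$ from Jensen's inequality, and an upper bound from $\ln x\leq x-1$ combined with the representation $\Ex(e^{-v\tau_1}-1)=\int_0^\infty -ve^{-vt}\Px(\tau_1>t)\,dt$ and Fatou, matching the two limits through parts (i) and (ii) of Lemma~\ref{FntnessIntg}. You instead perform a second-order Taylor expansion of $\ln G$ at $0$ and control the Lagrange remainder by a uniform bound on $\partial_u^2\ln G$, which reduces everything to part (i) alone. When it works, your route is cleaner and avoids both Fatou and the $g_t$-ergodic-average lemma; the cost is that you need a uniform second-moment bound $\Ex(\tau_1^2\,|\,\hat x)\leq C'$, a strictly stronger input than the paper's uniform first-moment bound (\ref{expectau}).

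The one genuine gap is precisely in how you justify that uniform second-moment bound. You argue that $\tau_1$ ``is dominated by'' an absorption time of a finite-state birth--death chain, and that this absorption time has uniformly bounded exponential moments. But the dominance you invoke holds under the law conditioned only on the initial state $(\tilde x_0,\zeta_0)$; your bound must hold under the law conditioned also on the \emph{terminal} state $(\tilde x_1,\zeta_1)$, i.e.\ under $\Px(\,\cdot\,|\,\hat x)$. Conditioning on the realized transition (which queue got depleted, what the new queue size is) can in principle tilt the law of $\tau_1$ and destroy a stochastic-dominance comparison. The correct way to close this is the device the paper uses in the proof of Lemma~\ref{FntnessIntg}: since $\Ex(\tau_1^2\,|\,\hat x)\Px\big((\tilde x_1,\zeta_1)=(x_1,c_1)\,\big|\,(\tilde x_0,\zeta_0)=(x_0,c_0)\big)\leq\Ex\big(\tau_1^2\,\big|\,(\tilde x_0,\zeta_0)=(x_0,c_0)\big)\leq\Ex\big(\varsigma((N^*,N^*))^2\big)<\infty$, and the transition probabilities $r^\pm_{x_1}$ are bounded below by $\min_{x_0}r_{\min}(x_0)>0$ over the finite set $\Omega_{N^*}^2$ (and do not depend on the spread value once it exceeds one), one gets $\Ex(\tau_1^2\,|\,\hat x)\leq \Ex(\varsigma((N^*,N^*))^2)/\min_{x_0}r_{\min}(x_0)$ uniformly. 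With that repair in place, the rest of your proof — the Taylor bound $|R_n|\leq \frac{u^2}{2n^2}\sup_{[0,u_0]}|\partial_u^2\ln G|$, the lower bound $G\geq 1-u_0C\geq 1/2$ from the uniform first moment, the summation estimate $\frac{Cu^2}{2n}\to 0$, and the separate treatment of $u=\infty$ — is correct and delivers the claim.
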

\begin{proof} First note that the statement is trivial for $u=0$. By (\ref{expectau}), $\tau_1\less\infty$ a.s., thus, $\Ex\left(\left.e^{-u\tau_{1}}
	\right|\tilde{x}_{0},\zeta_{0},\tilde{x}_{1},\zeta_{1}\right)\more0$ a.s. Assume now that $u\in(0,\infty)$, then, by Jensen's inequality,
\begin{align*}
	\frac{1}{u}\ln G(u|\tilde{x}_{0},\zeta_{0},\tilde{x}_{1},\zeta_{1})\geq \frac{1}{u}\Ex\left(\left.\ln e^{-u\tau_{1}}
	\right|\tilde{x}_{0},\zeta_{0},\tilde{x}_{1},\zeta_{1}\right)=\kappa(\tilde{x}_{0},\zeta_{0},\tilde{x}_{1},\zeta_{1}).
\end{align*}
Therefore, $\ln G\left.\left(\frac{u}{n}\right|\tilde{x}_{0},\zeta_{0},\tilde{x}_{1},\zeta_{1}\right)\geq \frac{u}{n}\kappa(\tilde{x}_{0},\zeta_{0},\tilde{x}_{1},\zeta_{1})$ and, thus,
\begin{equation}\label{lowbddg1}
\sum\limits_{k=1}^n \ln G\left.\left(\frac{u}{n}\right|\tilde{x}_{k-1},\zeta_{k-1},\tilde{x}_{k},\zeta_{k}\right)\geq \frac{u}{n}\sum\limits_{k=1}^n \kappa(\tilde{x}_{k-1},\zeta_{k-1},\tilde{x}_{k},\zeta_{k}),
\end{equation}
which, by {Eq. (\ref{CondLLN}-i),} implies that,
\[\liminf\limits_{n\rightarrow\infty}\sum\limits_{k=1}^n \ln G\left.\left(\frac{u}{n}\right|\tilde{x}_{k-1},\zeta_{k-1},\tilde{x}_{k},\zeta_{k}\right)\geq \liminf\limits_{n\rightarrow\infty}\frac{u}{n}\sum\limits_{k=1}^n \kappa(\tilde{x}_{k-1},\zeta_{k-1},\tilde{x}_{k},\zeta_{k})=-u\Ex_\pi(\tau_1)\quad\quad P_*-\hbox{a.s.}.\]
Next, note that
\begin{align}\nonumber
	{\frac{1}{u}\ln G(u|\tilde{x}_{0},\zeta_{0},\tilde{x}_{1},\zeta_{1})
	\leq \Ex\left( \left.\frac{e^{-u\tau_{1}} -1}{u}\right|\tilde{x}_{0},\zeta_{0},\tilde{x}_{1},\zeta_{1}\right)=\int_0^\infty -e^{-ut}\Px\left[\tau_1\more t\left.\right|\tilde{x}_{0},\zeta_{0},\tilde{x}_{1},\zeta_{1}\right]dt},
\end{align}
{where for the first inequality we used that $\ln(x)\leq x-1$, for $x>0$, and for the last equality we used the the identity $\Ex(g(X))=g(0)+\int_0^\infty g'(t)P[X\more t]dt${, which is}  valid for any positive random variable $X$ and monotonic differentiable function $g:[0,\infty)\rightarrow\R$.} Therefore, we have that:
	\[
	\ln G\left.\left(\frac{u}{n}\right|\tilde{x}_{0},\zeta_{0},\tilde{x}_{1},\zeta_{1}\right)\leq \frac{u}{n}\int_0^\infty -e^{-\frac{u}{n}t}\Px\left[\tau_1\more t\left.\right|\tilde{x}_{0},\zeta_{0},\tilde{x}_{1},\zeta_{1}\right]dt.
	\]
This last inequality, Fatou's Lemma, and {Eq.~(\ref{CondLLN}-i)} yield,
\small
\begin{align}\label{lowbddg2}\nonumber
\limsup\limits_{n\rightarrow\infty}\sum\limits_{k=1}^n \ln G\left.\left(\frac{u}{n}\right|\tilde{x}_{k-1},\zeta_{k-1},\tilde{x}_{k},\zeta_{k}\right)&\leq \limsup\limits_{n\rightarrow\infty}\sum\limits_{k=1}^n -\frac{u}{n}\int_0^\infty e^{-\frac{u}{n}t}\Px\left(\tau_1\more t\left.\right|\tilde{x}_{k-1},\zeta_{k-1},\tilde{x}_{k},\zeta_{k}\right)dt\\ \nonumber
			&\leq\int_0^\infty \limsup\limits_{n\rightarrow\infty}\left(-ue^{-\frac{u}{n}t}\right)\left(\frac{1}{n}\sum\limits_{k=1}^n \Px\left(\tau_1\more t\left.\right|\tilde{x}_{k-1},\zeta_{k-1},\tilde{x}_{k},\zeta_{k}\right)\right)dt\\
			&=\int_0^\infty-u \Px_\pi\left(\tau_1\more t\right)dt=-u\Ex_\pi(\tau_1),\quad\quad P_*-\hbox{a.s.}.
\end{align}
\normalsize
{Together (\ref{lowbddg1}) and (\ref{lowbddg2}) imply (\ref{WWPH}).}
\end{proof}

{We are now ready to show the main result of this section.}
\begin{thm}
Under the assumptions of Proposition \ref{prop1}, we have
\begin{equation} \label{LLNtau}
\frac{1}{n}\sum\limits_{k=1}^n \tau_k\stackrel{\Px}{\rightarrow} \Ex_\pi(\tau_1),\qquad\qquad\text{as }n\rightarrow\infty,
\end{equation}
where $\pi$ is the stationary measure of the Markov chain $\{Z_n\}_{n\geq0}$.
\end{thm}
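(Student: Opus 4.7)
The plan is to derive the LLN from the Laplace-transform convergence in Proposition \ref{prop1} via the standard route: pointwise convergence of Laplace transforms to the Laplace transform of a constant, hence convergence in distribution to that constant, hence convergence in probability.

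First I would exploit the conditional independence (\ref{MIRNL}) together with the fact that the conditional law of $\tau_k$ given the full sequence $\{(\tilde x_{i-1},\zeta_{i-1})\}_{i\ge 0}$ depends only on $(\tilde x_{k-1},\zeta_{k-1},\tilde x_{k},\zeta_{k})=Z_{k}$. This gives the factorization
\[
\Ex\!\left[e^{-\frac{u}{n}T_{n}}\,\bigl|\,Z_{1},\dots,Z_{n}\right]=\prod_{k=1}^{n}G\!\left(\tfrac{u}{n}\,\big|\,Z_{k}\right),
\]
so that, taking logarithms,
\[
\ln\Ex\!\left[e^{-\frac{u}{n}T_{n}}\,\bigl|\,Z_{1},\dots,Z_{n}\right]=\sum_{k=1}^{n}\ln G\!\left(\tfrac{u}{n}\,\big|\,Z_{k}\right).
\]
By Proposition \ref{prop1}, the right-hand side converges $\Px_{*}$-a.s.\ to the deterministic constant $-u\,\Ex_{\pi}(\tau_{1})$ for every fixed $u\ge 0$.

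Next I would unconditionalize. Because $0\le\Ex[e^{-\frac{u}{n}T_{n}}\mid Z_{1},\dots,Z_{n}]\le 1$, the bounded convergence theorem yields
\[
\lim_{n\to\infty}\Ex\!\left[e^{-\frac{u}{n}T_{n}}\right]=e^{-u\,\Ex_{\pi}(\tau_{1})},\qquad u\ge 0.
\]
This is exactly the Laplace transform of the point mass at $\Ex_{\pi}(\tau_{1})$. By the continuity theorem for Laplace transforms on $[0,\infty)$ (e.g., Feller, Vol.~II, Thm.~XIII.1.2), it follows that
\[
\frac{T_{n}}{n}=\frac{1}{n}\sum_{k=1}^{n}\tau_{k}\Longrightarrow \Ex_{\pi}(\tau_{1}),\qquad n\to\infty,
\]
and convergence in distribution to a constant is equivalent to convergence in probability, which is the claim (\ref{LLNtau}).

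The potentially delicate step is justifying that Proposition \ref{prop1} can indeed be invoked under the starting distribution of interest: the almost-sure statement there is $\Px_{*}$-a.s., i.e.\ a.s.\ under $\Px_{\hat y}$ for every initial state $\hat y\in\Xi$, so the resulting convergence of conditional Laplace transforms holds $\Px_{\mu}$-a.s.\ for any initial distribution $\mu$ on $\Xi$, and the bounded-convergence step then carries the statement to the unconditional law under $\Px_{\mu}$. Apart from this bookkeeping, the argument is essentially routine once Proposition \ref{prop1} is in hand; the real work has already been done in establishing positive Harris recurrence (Theorem \ref{recurrence}) and the integrated-LLN of Lemma \ref{FntnessIntg}.
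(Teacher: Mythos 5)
Your proposal follows exactly the same route as the paper's proof: factor the conditional Laplace transform via the conditional independence (\ref{MIRNL}), invoke Proposition \ref{prop1} for the almost-sure convergence of $\sum_k \ln G(u/n\mid Z_k)$, pass to the unconditional Laplace transform by bounded convergence (the paper uses $\ln x\le x-1$ to bound the exponent by $0$, you observe directly that the conditional expectation is bounded by $1$ — these are interchangeable), and conclude via the continuity theorem for Laplace transforms. Your closing remark about the initial distribution being arbitrary thanks to the $\Px_*$-a.s. statement is a small but correct piece of bookkeeping that the paper leaves implicit.
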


\begin{proof}
Let $\varphi_{n}(u):=\Ex_\pi\left(e^{-u n^{-1}\sum_{k=1}^{n}\tau_{k}}\right)$ and $\calF_{n}:=\sigma((\tilde{x}_{k},\zeta_{k}):k\leq{}n)$. {By the conditional independence in (\ref{MIRNL}),  
\begin{align*}
\varphi_{n}(u)
	&= \Ex\left(\prod\limits_{k=1}^n \Ex\left(\left.e^{-u n^{-1} \tau_k}\right|\calF_n\right)\right)= \Ex\left(\prod\limits_{k=1}^n \Ex\left(\left.e^{-u n^{-1} \tau_k}\right|\tilde{x}_{k-1},\zeta_{k-1},\tilde{x}_{k},\zeta_{k}\right)\right)= {\Ex\left(e^{\sum_{k=1}^{n}\ln G\left.\left(\frac{u}{n}\right|\tilde{x}_{k-1},\zeta_{k-1},\tilde{x}_{k},\zeta_{k}\right)}\right)}.
\end{align*}
Since} for any positive $x$, $\ln(x)\leq x-1$, 
\[\ln G\left.\left(\frac{u}{n}\right|\tilde{x}_{0},\zeta_{0},\tilde{x}_{1},\zeta_{1}\right)\leq G\left.\left(\frac{u}{n}\right|\tilde{x}_{0},\zeta_{0},\tilde{x}_{1},\zeta_{1}\right) -1=\Ex\left(\left.e^{-\frac{u}{n}\tau_{1}}-1\right|\tilde{x}_{0},\zeta_{0},\tilde{x}_{1},\zeta_{1}\right)\leq 0,
\]
and, therefore, for every $n$,
\[
\exp\left\{\sum_{k=1}^{n}\ln G\left(\left.\frac{u}{n}\right|\tilde{x}_{k-1},\zeta_{k-1},\tilde{x}_{k},\zeta_{k}\right)\right\}\leq 1
\]
and, by Dominated Convergence Theorem and Proposition \ref{prop1}, we get
\[
\lim\limits_{n\rightarrow\infty}\varphi_{n}(u) = \lim\limits_{n\rightarrow\infty}\Ex\left(e^{\frac{-u}{n}\sum_{k=1}^{n}\tau_{k}}\right)= e^{-u\Ex_\pi(\tau_1)}.
\]
Finally, since $\tau_k$ is supported on the positive numbers, by the continuity theorem for Laplace transforms (see theorem 2 in section XIII.1 in \cite{FellerII}), we {obtain (\ref{LLNtau})}.
\end{proof}

\subsection{Long-run dynamics of the price process}\label{longrunsec}
In this section, {we obtain {a diffusive approximation for the} dynamics of the midprice process of the model defined in Section \ref{Sec:Mod2}}. Throughout, {$s_t$ denotes the {stock's} midprice} at time {$t\in[0,\infty)$}, while {$\tau_n$} represents the time elapsed between the $(n-1)-$th and the $n-$th price change as described in {Section} \ref{Sec:Mod2}. Let $\{{\tilde{u}}_n\}_{n\geq1}$ be the sequence of midprice changes. Clearly, our assumptions for the LOB dynamics described in Section \ref{Sec:Mod2} imply that
$
{\tilde{u}}_n\in\left\{-1/2,1/2\right\}.
$
{It is also easy to see that the midprice process is given by
\begin{equation}\label{priceprocess}
s_t:=s_0+\sum_{j=1}^{N_t} {\tilde{u}}_j,
\end{equation}
where hereafter $N_t:=\max\{n\left.\right| \tau_1+\ldots+\tau_n\leq t\}$ denotes the number of price changes up to time $t$. {In this section, we establish the relation (\ref{GlAsMp}), for some constants $\sigma>0$ and $m$.}

Recall from Section \ref{Sec:Mod2} that {$U_n:=(\tilde{x}_{n},\zeta_{n})=((x_{n}^{b},x_{n}^{a}),\zeta_{n})$}, the number of orders in the level I of the book and the spread after the $n-$th price change, is a Markov chain  {(cf. Eqs.~(\ref{MPI})-(\ref{MPII}))}. Also, recall that, for $i\geq0$, $Q^{a,i}(x)$ and $Q^{b,i}(x)$ are independent continuous-time Markov processes with common generator defined by (\ref{TrnsMtrQ}). Define $\varsigma_n:=\inf\{t\more0:{Q_t^{a,n}({\tilde{x}_{n-1}^{a}})\wedge Q_t^{b,n}({\tilde{x}_{n-1}^{b}})}=0\}$ {and also consider} the following {events:} 
\begin{align*}
A_n&=\left\{Q^{\,{b,n}}_{{\varsigma_{n}}}({\tilde{x}^{\, b}_{n-1}})=0,\; \zeta_{n-1}=1\quad\text{ or }\quad Q^{\, {b,n}}_{{\varsigma_{n}}}({\tilde{x}^{\, b}_{n-1}})=0,\;\zeta_{n-1}\more1,\;
{L_{n}(\zeta_{n-1})}\geq{\varsigma_{n}}\right\},\\ 
B_n&=\left\{{L_{n}(\zeta_{n-1})}\less{\varsigma_{n}},\; {L_{n}(\zeta_{n-1})=L^{a}_{n}(\zeta_{n-1})},\;\zeta_{n-1}\more1\right\},\\ 
C_n&= \left\{Q^{\,{a,n}}_{{\varsigma_{n}}}(\tilde{x}^{\,a}_{n-1})=0,\;
\zeta_{n-1}=1\quad\text{ or }\quad {L_{n}(\zeta_{n-1})}\geq{\varsigma_{n}},\; Q^{\, {a,n}}_{{\varsigma_{n}}}({\tilde{x}^{\,a}_{n-1}})=0,\;\zeta_{n-1}\more1\right\},\\ 
D_n&= \left\{{L_{n}(\zeta_{n-1})\less\varsigma_{n},\; {L_{n}(\zeta_{n-1})=L^{b}_{n}(\zeta_{n-1})},\; \zeta_{n-1}\more1}\right\},
\end{align*}
where $\{L^{a}_{k}(\zeta)\}_{k,\zeta\in\mathbb{Z}_{+}},\{L^{b}_{k}(\zeta)\}_{k,\zeta\in\mathbb{Z}_{+}}$ and $\{L_{k}(\zeta)\}_{k,\zeta\in\mathbb{Z}_{+}}$ are the random variables defined in Section \ref{Sec:Mod2}.

A positive price change would occur at time {$T_n$} if, either the ask queue got depleted (event $A_n$ above) or a new queue arrived at the bid side (event $B_n$). Similarly, a negative price change would occur if either the bid queue got depleted (event $C_n$) or a new queue arrived at the ask side (event $D_n$). {Therefore,}
\begin{equation}\label{defnun}
{{\tilde{u}}}_n:=\frac{1}{2}\left[\indicator{A_n}+\indicator{B_n}\right]-\frac{1}{2}\left[\indicator{C_n}+\indicator{D_n}\right],
\end{equation}
{represents} the $n-$th price change, for $n\geq{}1$.}

As in the preceding section, {an important step for analyzing} the price changes would be {to express those in terms of an appropriate Markov chain}. Let $\Lambda:=\{\overline{z}=(y_1,c_1,u):y_1\in\Omega_{N^*}^2,c_1\in{\mathbb{Z}_{+}},u\in\{-1/2,1/2\}\}$ and
\begin{equation}\label{Vndefn}
V_n:=(\tilde{x}_{n},\zeta_{n}, {{\tilde{u}}}_n),
\end{equation}
for $n\geq1$. {Note that $V:=\{V_n\}_{n\geq0}$ is a Markov chain over $\Lambda$} since $\tilde{x}_n, \zeta_n$ and ${{\tilde{u}}}_n$ depend only on $(\tilde{x}_{n-1},\zeta_{n-1}$). Moreover, one can see that the states of $V$ communicate to one another and, thus, $V$ is irreducible. Also, provided that the assumptions of Lemmas \ref{phiharm} and \ref{alpha} hold, {one} can prove that $V$ is recurrent, similarly to the proof of Theorem \ref{recurrence}, and {$V$ will then be Harris recurrent due to the} countability of $V$'s state space. As a consequence, $V$ would also be positive Harris. Hereafter, we denote the stationary measure {and the {transition probabilities} of $V$ by $\nu$ and {$P^{ext}(\bar{y},\bar{z})$}, respectively.}

{As mentioned above}, our main goal is to establish the {coarse-grained behavior} of the price process (\ref{priceprocess}). In order to do so, we first analyze the convergence of the process $W^n:=\sum_{j=1}^{n} {\tilde{u}}_j$, properly rescaled. To this end, {the following Functional Central Limit Theorem (FCLT) for Markov Chains on a countable state space {will be useful}:}

\begin{thm}[{\cite{Meyn}, Theorem 17.4.4}] \label{FCLTHarris}
Suppose that $\{V_n\}_{n\geq0}$ is positive Harris {on a  countable state space $\Lambda$ with transition and stationary probability measures $P^{ext}$ and $\nu$, respectively. Let $h$ be} a function on $\Lambda$ for which a solution $\hat{h}$ to the Poisson equation, 
\begin{equation}\label{poissoneqn}
\hat{h}-P^{ext}\hat{h}=h-\nu(h),
\end{equation}
exists with $\nu(\hat{h}^2)\less\infty$. {Consider the partial sums of the centered functional $\bar{h}(V_k):=h(V_k)-\nu(h)$,}
\begin{equation}\label{Sndefn}
S_n(\bar{h}):=\sum\limits_{k=1}^n {\bar{h}(V_k)},
\end{equation}
and let $r_n(t)$ be the continuous piece-wise linear function that interpolates the values of {$\left\{S_{n}(\bar{h})\right\}_{n\geq{}0}$}; i.e., 
\begin{equation}\label{rndefn}
r_n(t):=S_{\lfloor nt\rfloor}(\bar{h}) + (nt-\lfloor nt\rfloor)\left[S_{\lfloor nt\rfloor+1}(\bar{h})-S_{\lfloor nt\rfloor}(\bar{h})\right].
\end{equation}
Then, if the constant 
\begin{equation}\label{sigmaFCLT}
\gamma^2(h):={\nu\left(\hat{h}^2-({P^{ext}}\hat{h})^2\right)}
\end{equation}
is positive, it holds that,
\begin{equation}\label{InvTh2}
	\left\{\frac{r_n(t)}{\sqrt{n{\gamma^2(h)}}}\right\}_{t\geq0}\Rightarrow \left\{W_t\right\}_{t\geq0},\quad {n\rightarrow\infty.}
\end{equation}
\end{thm}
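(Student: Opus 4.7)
The plan is to prove this FCLT by the Gordin--Lifshitz (or Kipnis--Varadhan) martingale method, which is the standard route when, as here, a square-integrable solution $\hat{h}$ of the Poisson equation (\ref{poissoneqn}) is available. Setting $D_k:=\hat{h}(V_k)-(P^{ext}\hat{h})(V_{k-1})$ and $\calG_n:=\sigma(V_0,\dots,V_n)$, the tower property gives $\Ex[D_k\mid\calG_{k-1}]=0$, so $M_n:=\sum_{k=1}^{n}D_k$ is a $\calG_n$-martingale. Square integrability of $D_k$ under $\nu$ follows from $\nu(\hat{h}^2)<\infty$ together with the conditional Jensen inequality $(P^{ext}\hat{h})^2\leq P^{ext}(\hat{h}^2)$ and the stationarity identity $\nu(P^{ext}(\hat{h}^2))=\nu(\hat{h}^2)$. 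The key algebraic step is the telescoping decomposition
\[
S_n(\bar{h})=\sum_{k=1}^{n}\bigl(\hat{h}(V_k)-P^{ext}\hat{h}(V_k)\bigr)=M_n+(P^{ext}\hat{h})(V_0)-(P^{ext}\hat{h})(V_n),
\]
which isolates a boundary term that is $O(1)$ uniformly in $n$ and hence negligible after dividing by $\sqrt{n}$.

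Next I would compute the predictable quadratic variation:
\[
\Ex[D_k^2\mid\calG_{k-1}]=(P^{ext}\hat{h}^2)(V_{k-1})-\bigl((P^{ext}\hat{h})(V_{k-1})\bigr)^2.
\]
Applying the Harris ergodic theorem (Theorem \ref{LLNHarris}) to the $L^1(\nu)$ functional $f:=P^{ext}(\hat{h}^2)-(P^{ext}\hat{h})^2$ on $\Lambda$ gives, $\Px_*$-a.s.,
\[
\frac{1}{n}\sum_{k=1}^{n}\Ex[D_k^2\mid\calG_{k-1}]\longrightarrow \nu(f)=\nu(\hat{h}^2)-\nu\bigl((P^{ext}\hat{h})^2\bigr)=\gamma^2(h),
\]
again using $\nu\circ P^{ext}=\nu$. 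With this convergence of the predictable quadratic variation and the $L^2$ control on the increments, a standard martingale invariance principle (for example Brown's theorem, or Theorem 18.2 of Billingsley) yields weak convergence of the piecewise-constant rescaled process $t\mapsto M_{\lfloor nt\rfloor}/\sqrt{n\gamma^2(h)}$ to a standard Brownian motion in the Skorokhod topology; the conditional Lindeberg condition is verified by truncation and a second appeal to the ergodic theorem, since the increments $\{D_k\}$ are stationary under $\nu$ and lie in $L^2(\nu)$.

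To conclude, I would pass from the piecewise-constant process to the piecewise-linear interpolant $r_n$: the two differ at time $t$ by at most $|\bar{h}(V_{\lfloor nt\rfloor+1})|$, which after normalization by $\sqrt{n}$ vanishes in probability by Markov's inequality and the $L^2(\nu)$-integrability of $\bar{h}$. Combined with the negligibility of the boundary term in the telescoping identity above, this upgrades the convergence to (\ref{InvTh2}) on $C[0,\infty)$. The hardest step will be verifying the conditional Lindeberg-type condition for the martingale FCLT over the countable but unbounded state space $\Lambda$; stationarity of the increments under $\nu$ together with the hypothesis $\nu(\hat{h}^2)<\infty$ is what makes this tractable.
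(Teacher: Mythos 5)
The paper does not prove this theorem; it is quoted (with a minor change of notation) from Meyn and Tweedie \cite{Meyn}, Theorem 17.4.4, and used as a black-box tool, so there is no ``paper's own proof'' to compare against. That said, your sketch is a faithful outline of the standard Gordin-type martingale-approximation argument, which is also the route taken in the cited source: use the Poisson equation to write $\bar{h}=\hat{h}-P^{ext}\hat{h}$, telescope to extract an $L^2$-martingale $M_n$ plus a boundary term, identify the almost-sure limit of the predictable quadratic variation via the Harris ergodic theorem, and invoke a martingale functional central limit theorem, then pass to the linear interpolant.

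Two cautions are worth flagging. First, the boundary term $(P^{ext}\hat{h})(V_0)-(P^{ext}\hat{h})(V_n)$ is not ``$O(1)$ uniformly in $n$'' unless $P^{ext}\hat{h}$ happens to be bounded, which the hypotheses do not guarantee on a countable but unbounded state space. What one actually needs, and can prove, is that $(P^{ext}\hat{h})(V_n)/\sqrt{n}\to 0$ in probability, which follows from $\nu\left((P^{ext}\hat{h})^2\right)\le\nu(\hat{h}^2)<\infty$ combined with either stationarity or the convergence of the law of $V_n$ to $\nu$. Second, your appeals to stationarity of the increment sequence $\{D_k\}$ and to Billingsley's martingale FCLT tacitly assume the chain is started from the stationary law $\nu$, whereas the theorem asserts $\Px_*$-a.s.\ convergence, i.e.\ from any initial state. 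Passing from the stationary case to an arbitrary initial distribution requires the Harris-chain machinery (the $f$-norm ergodic theorem, or a coupling argument) and is where a genuine gap would appear if left unaddressed; this step is handled carefully in \cite{Meyn} and should not be elided. With those two points repaired, your outline is correct.
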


\begin{rem} By taking $t=1$, it follows that
\[
\sqrt{n}\left(\frac{1}{n}\sum\limits_{k=1}^n h(V_k)-\nu(h)\right)\Rightarrow \mathcal{N}(0,\gamma^2(h)).
\]
\end{rem}
\noindent
{The proof of the next result} is deferred to {Appendix} \ref{ApndB}.

\begin{thm} \label{FCLTh} {Let {$V:=\{V_n\}_{n\geq{}1}=\{(\tilde{x}_{n},\zeta_{n}, {{\tilde{u}}}_n)\}_{n\geq{}1}$} be the Markov chain {defined} on (\ref{Vndefn}) with stationary {probability} measure $\nu$. Then, for $h:\Lambda\rightarrow\R$ given by $h(x,c,u)= u$,  }there exists a solution to the Poisson equation $\hat{h}$ with $\nu(\hat{h}^2)\less\infty$. Furthermore, the invariance {principle} (\ref{InvTh2}) holds true {and the variance $\gamma^{2}(h)$ admits the representation
\begin{equation}\label{exprsigma} 
\gamma^2(h)=\Ex_{\nu}\left(\bar{h}^2({V_1})\right)+2\sum\limits_{k={2}}^\infty \Ex_\nu\left(\bar{h}({V_1})\bar{h}(V_k)\right),
\end{equation}
where $\bar{h}=h-\nu(h)$ and the sum converges absolutely.}
\end{thm}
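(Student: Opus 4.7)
The plan is to verify the hypotheses of Theorem \ref{FCLTHarris} for the Markov chain $V$ and the function $h(x,c,u)=u$, and subsequently to derive the representation \eqref{exprsigma}. By the discussion preceding the theorem, $V$ is positive Harris on the countable space $\Lambda$, so every state is an accessible atom. Fix any reference state $\bar v_{0}\in\Lambda$ and write $\sigma_{0}:=\inf\{n\geq 1:V_{n}=\bar v_{0}\}$ for its first return time. The natural candidate for the Poisson solution is
\[
\hat h(v):=\Ex_{v}\left[\sum_{k=0}^{\sigma_{0}-1}\bar h(V_{k})\right],
\]
which satisfies $\hat h-P^{ext}\hat h=\bar h$ by the standard first-step / strong Markov decomposition, whenever the sum is absolutely convergent.

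The crux of the argument, and the step I expect to be the main obstacle, is to establish that $\hat h$ is well defined and that $\nu(\hat h^{2})\less\infty$. Since $h$ takes values in $\{-1/2,1/2\}$, the centered functional satisfies $|\bar h|\leq 1$, so one has the pointwise bound $|\hat h(v)|\leq \Ex_{v}[\sigma_{0}]$ and it suffices to verify that $\nu(\Ex_{\sbt}[\sigma_{0}]^{2})\less\infty$, i.e., square-integrability of the mean return time under the invariant measure. I would obtain this by a quadratic refinement of the drift argument used in the proof of Theorem \ref{recurrence}. Concretely, the super-harmonic function $\varphi$ of Lemma \ref{phiharm} satisfies $P^{ext}\varphi\leq \varphi-\epsilon$ off the finite set $F$ (cf.\ \eqref{rel:used:pos:recurr}); using $\varphi^{2}$ as an auxiliary test function and invoking Theorem 14.3.7 of \cite{Meyn} gives a drift inequality of the form $P^{ext}\varphi^{2}\leq \varphi^{2}-c\,\varphi+b\,\indicator{\hat y\in F}$, which controls $\Ex_{\hat y}[\sigma_{0}^{2}]$ by an affine function of $\varphi(\hat y)^{2}$. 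Since, by Theorem 14.0.1 of \cite{Meyn}, $\nu$ integrates any function dominated by the Foster drift increment, the desired $L^{2}(\nu)$ bound reduces to $\nu(\varphi)\less\infty$, which follows from the same drift inequality.

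With both hypotheses of Theorem \ref{FCLTHarris} in place, the invariance principle \eqref{InvTh2} is immediate. For the representation \eqref{exprsigma}, substitute the Poisson identity $P^{ext}\hat h=\hat h-\bar h$ into the definition of $\gamma^{2}(h)$ to obtain
\[
\gamma^{2}(h)=\nu\!\left(\hat h^{2}-(P^{ext}\hat h)^{2}\right)=\nu\!\left(2\hat h\bar h-\bar h^{2}\right).
\]
Iterating the Poisson equation yields the $L^{2}(\nu)$-convergent series $\hat h=\sum_{k=0}^{\infty}(P^{ext})^{k}\bar h$, and invariance of $\nu$ gives
\[
\nu(\hat h\bar h)=\sum_{k=0}^{\infty}\Ex_{\nu}\!\left[\bar h(V_{0})\bar h(V_{k})\right].
\]
Plugging this into the previous display, separating the $k=0$ term to cancel $-\nu(\bar h^{2})$, and shifting indices via the stationarity identity $\Ex_{\nu}[\bar h(V_{0})\bar h(V_{k})]=\Ex_{\nu}[\bar h(V_{1})\bar h(V_{k+1})]$ recovers \eqref{exprsigma}. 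Absolute convergence of the resulting series is a consequence of Cauchy--Schwarz together with $\nu(\hat h^{2})\less\infty$.
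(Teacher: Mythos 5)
Your overall structure parallels the paper's---verify the hypotheses of Theorem \ref{FCLTHarris} via an atom-based Poisson solution, use $|\hat h(v)|\le\Ex_v[\sigma]$ (since $|\bar h|\le 1$), then derive the series representation for $\gamma^2(h)$---but the crucial step, square integrability of the mean return time under $\nu$, is handled by an argument that does not go through.

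You propose to use $\varphi^2$ as a drift test function, claiming $P^{ext}\varphi^2\le\varphi^2-c\,\varphi+b\,\indicator{\hat y\in F}$. This inequality is false. Writing $\varphi(j)=\rho^j$ for $\rho$ the larger root of $p_{\mathbf 1}\rho^2-\rho+(1-p_{\mathbf N^*})=0$, so that $p_{\mathbf 1}\rho+(1-p_{\mathbf N^*})\rho^{-1}=1$, the one-step bound used in Lemma \ref{phiharm} applied to $\varphi^2$ gives, off the finite set, the upper estimate $p_{\mathbf 1}\rho^{2(j+1)}+(1-p_{\mathbf N^*})\rho^{2(j-1)}=\rho^{2j}\bigl(p_{\mathbf 1}\rho^2+(1-p_{\mathbf N^*})\rho^{-2}\bigr)$. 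Because $\rho^2>\rho>\sqrt{(1-p_{\mathbf N^*})/p_{\mathbf 1}}$ lies to the right of the minimizer of the convex map $x\mapsto p_{\mathbf 1}x+(1-p_{\mathbf N^*})x^{-1}$, we have $p_{\mathbf 1}\rho^2+(1-p_{\mathbf N^*})\rho^{-2}>1$; hence $\varphi^2$ is sub-harmonic outside $F$, not super-harmonic, and no negative drift can be extracted from it. Even if it were, the resulting bound $\Ex_{\hat y}[\sigma^2]\lesssim\varphi(\hat y)^2\sim\rho^{2j}$ grows exponentially in the spread $j$, and the geometric decay of $\nu$ across spread levels is not fast enough to make $\nu(\varphi^2)$ finite. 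The paper sidesteps this entirely: it takes $C_1$ (the \emph{finite set} of states with spread one, not a single reference state) as the atom, proves the \emph{linear} bound $\Ex_{\oz}[\sigma_{C_1}]\le(j-1)/(1-2p_{\mathbf N^*})$ for $\oz\in C_j$ by coupling the spread with a birth--death walk with birth probability $p_{\mathbf 1}<1/2$, shows via a batch detailed-balance identity that $\nu(C_{j+1})\le\bigl(p_{\mathbf 1}/(1-p_{\mathbf 1})\bigr)^{j-1}\nu(C_2)$, and sums $\sum_j j^2\bigl(p_{\mathbf 1}/(1-p_{\mathbf 1})\bigr)^{j}<\infty$. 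The choice of $C_1$ over a single state is not cosmetic: the coupling controls the time for the \emph{spread} to reach one, not the time to land on a specific state, and your $\sigma_0$ (first return to a single point) would need a further argument to inherit the same moment bound.

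For the representation \eqref{exprsigma}, your route---substitute the Poisson identity into $\gamma^2(h)=\nu\bigl(\hat h^2-(P^{ext}\hat h)^2\bigr)$ to get $\nu(2\hat h\bar h-\bar h^2)$, then expand $\hat h=\sum_{k\ge 0}(P^{ext})^k\bar h$---is a legitimate alternative to the paper's appeal to Theorem 17.5.3 of \cite{Meyn}, and in some ways more transparent. However, you assert that the resolvent series converges in $L^2(\nu)$ without justification; absolute convergence of the autocovariance series does not follow from $\nu(\hat h^2)<\infty$ alone but requires some rate of mixing (e.g., $V$-geometric ergodicity), which you would still need to establish. The paper instead verifies ergodicity plus the drift $\Delta F\le -1+b\,\indicator{\oz\in B}$ with $F(\oz)=\Ex_{\oz}[\sigma_B]$, $B=C_1$, which rests precisely on the moment estimates already obtained, so the two halves of its proof are self-reinforcing in a way your sketch is not.
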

In the following, we will write {$f_{n}\stackrel{\Px}{\sim} g_{n}$ {if $\lim_{n\rightarrow\infty}f_{n}/g_{n}=1$, in probability.}
The following {result} is the final ingredient towards (\ref{GlAsMp}):
\begin{lemma}\label{Ntnassympthm2}
 Using the notation of {Section} \ref{LLNmodtau},
\begin{equation}\label{NdRsl}
N_{{tn}}\stackrel{\Px}{\sim}\frac{tn}{\Ex_{\pi}(\tau_1)},\qquad  \text{as }\,n\rightarrow\infty,
\end{equation}
where {we recall that}
$N_t=\max\{n\left.\right| \tau_1+\ldots+\tau_n\leq t\}$ and $\pi$ is the stationary measure of the chain $Z_n=(\tilde{x}_{n-1},\zeta_{n-1},\tilde{x}_{n},\zeta_{n})$, whose existence is {guaranteed by Theorem \ref{recurrence}}.
\end{lemma}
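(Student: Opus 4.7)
My plan is to adapt the classical inversion argument relating the cumulative holding times $T_n := \tau_1 + \cdots + \tau_n$ and the counting process $N_t$ to our Markov-modulated setting. The crucial input already at hand is the LLN (\ref{LLNtau}): $T_n/n \stackrel{\Px}{\to} \mu$ as $n\to\infty$, where $\mu := \Ex_\pi(\tau_1)$. Note that $\mu$ is finite by Lemma \ref{FntnessIntg} and strictly positive since each $\tau_i>0$ a.s., so $t/\mu$ is a well-defined finite positive number.

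First, I would exploit the standard duality $\{N_s \geq k\} = \{T_k \leq s\}$ to bound the probability that $N_{tn}/n$ deviates from $t/\mu$. Fix $t>0$ and any $\epsilon \in (0, t/\mu)$. Setting $m_n^+ := \lceil n(t/\mu + \epsilon)\rceil + 1$ and $m_n^- := \lfloor n(t/\mu - \epsilon)\rfloor$, the duality gives
\[
\Px\!\left(N_{tn}/n > t/\mu + \epsilon\right) \leq \Px\!\left(T_{m_n^+} \leq tn\right), \qquad
\Px\!\left(N_{tn}/n < t/\mu - \epsilon\right) \leq \Px\!\left(T_{m_n^-} > tn\right).
\]

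Next, I would apply (\ref{LLNtau}) along the sub-sequences $m_n^{\pm}\to\infty$. Convergence in probability to a constant is preserved under passing to any sub-sequence tending to infinity, so $T_{m_n^\pm}/m_n^\pm \stackrel{\Px}{\to} \mu$. Since $m_n^\pm/n \to t/\mu \pm \epsilon$, it follows that
\[
\frac{T_{m_n^\pm}}{n} \;=\; \frac{T_{m_n^\pm}}{m_n^\pm}\cdot\frac{m_n^\pm}{n} \;\stackrel{\Px}{\longrightarrow}\; \mu\!\left(\tfrac{t}{\mu} \pm \epsilon\right) \;=\; t \pm \mu\epsilon.
\]
Because $t+\mu\epsilon > t$ and $t-\mu\epsilon < t$, both $\Px(T_{m_n^+} \leq tn) \to 0$ and $\Px(T_{m_n^-} > tn) \to 0$. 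Combining the two bounds gives $N_{tn}/n \stackrel{\Px}{\to} t/\mu$, which is exactly (\ref{NdRsl}).

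There is no real obstacle here once (\ref{LLNtau}) is in hand; the only point requiring a sentence of care is the legitimacy of invoking the LLN along the sub-sequences $m_n^\pm$, which is immediate because convergence in probability to a constant is sub-sequence stable. Everything else is a direct adaptation of the classical ``$T_n/n\to\mu$ implies $N_t/t\to 1/\mu$'' argument for renewal processes, with the i.i.d.\ LLN replaced by the Markovian LLN established in the previous subsection.
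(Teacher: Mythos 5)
Your argument takes a genuinely different route from the paper's. The paper's proof is the one-line random-index sandwich: since $T_{N_{t_n}}\leq t_n < T_{N_{t_n}+1}$, divide by $N_{t_n}$ and appeal to (\ref{LLNtau}) to conclude $t_n/N_{t_n}\to\Ex_\pi(\tau_1)$. You instead translate deviations of $N_{tn}/n$ into statements about $T_k$ at \emph{deterministic} indices via the inverse relation $\{N_s\geq k\}=\{T_k\leq s\}$, and then apply (\ref{LLNtau}) only along deterministic subsequences $m_n^\pm\to\infty$. This is in fact a modest gain in rigor: (\ref{LLNtau}) is convergence in probability, so the paper's invocation of it at the random index $N_{t_n}$ leaves implicit a random-index (Anscombe-type) step; your version needs nothing beyond subsequence stability of convergence in probability to a constant, exactly as you observe.

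That said, your choices of $m_n^\pm$ are each off by one in the unsafe direction, which reverses both of your displayed inequalities. Combining $\{N_s\geq k\}=\{T_k\leq s\}$ with the integrality of $N_{tn}$ gives the exact identities
\[
\left\{\frac{N_{tn}}{n}>\frac{t}{\mu}+\epsilon\right\}=\left\{T_{\lfloor n(t/\mu+\epsilon)\rfloor+1}\leq tn\right\},
\qquad
\left\{\frac{N_{tn}}{n}<\frac{t}{\mu}-\epsilon\right\}=\left\{T_{\lceil n(t/\mu-\epsilon)\rceil}> tn\right\}.
\]
Since $T_k$ is nondecreasing in $k$ and $\lceil n(t/\mu+\epsilon)\rceil+1\geq\lfloor n(t/\mu+\epsilon)\rfloor+1$, your $m_n^+$ gives $\Px(T_{m_n^+}\leq tn)\leq\Px(N_{tn}/n>t/\mu+\epsilon)$, the reverse of what you wrote; symmetrically, $m_n^-=\lfloor n(t/\mu-\epsilon)\rfloor\leq\lceil n(t/\mu-\epsilon)\rceil$ gives $\Px(T_{m_n^-}>tn)\leq\Px(N_{tn}/n<t/\mu-\epsilon)$. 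The fix is immediate: take $m_n^+=\lfloor n(t/\mu+\epsilon)\rfloor+1$ and $m_n^-=\lceil n(t/\mu-\epsilon)\rceil$, which still satisfy $m_n^\pm/n\to t/\mu\pm\epsilon$ as $n\to\infty$, after which the rest of your argument applies verbatim.
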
 
\begin{proof}
{Throughout, let $t_n:=tn$.} Since $N_{t_n}$ denote the number of price changes up to time $t_n$,
$$\frac{\tau_1+\ldots+\tau_{N_{t_n}}}{N_{t_n}} \leq \frac{t_n}{N_{t_n}} \less\frac{\tau_1+\ldots+\tau_{N_{t_n}+1}}{N_{t_n}}$$
and, thus, by (\ref{LLNtau}), as $n\rightarrow\infty$, $\frac{t_n}{N_{t_n}}\stackrel{\Px}{\rightarrow}\Ex_{\pi}(\tau_1)$, {which in turn implies (\ref{NdRsl}).}
\end{proof}

Finally, we can state the main result on this section.

\begin{thm} \label{Thm:finalFCLT}
Let {$\{s_t\}_{t\geq{}0}$} be the price process as defined {in Eq.~(\ref{priceprocess})}. Then, 
\begin{equation}\label{Nth2}
\left\{\sqrt{n}\left(\frac{s_{{tn}}}{n}- \frac{\nu(h)}{\Ex_\pi(\tau_1)}t\right)\right\}_{t\geq0}{\Rightarrow}\,\left\{\gamma(h)W_t\right\}_{t\geq0},\quad\text{as }n\rightarrow\infty,
\end{equation}
where {the variance $\gamma^{2}(h)$ is given as in Eq.~(\ref{exprsigma}).}
\end{thm}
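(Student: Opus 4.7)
The plan is to reduce (\ref{Nth2}) to the FCLT of Theorem \ref{FCLTh} through a random time change driven by the counting process $N_{t}$, whose law of large numbers is given by Lemma \ref{Ntnassympthm2}. Since $h(V_{j})=\tilde{u}_{j}$ and $\bar{h}=h-\nu(h)$, decomposing (\ref{priceprocess}) yields
\begin{equation*}
s_{tn}-s_{0}=\sum_{j=1}^{N_{tn}}\tilde{u}_{j}=S_{N_{tn}}(\bar h)+N_{tn}\,\nu(h),
\end{equation*}
so that
\begin{equation*}
\sqrt{n}\!\left(\frac{s_{tn}}{n}-\frac{\nu(h)\,t}{\mathbb{E}_{\pi}(\tau_{1})}\right)=\frac{s_{0}}{\sqrt n}+\frac{S_{N_{tn}}(\bar h)}{\sqrt n}+\frac{\nu(h)}{\sqrt n}\!\left(N_{tn}-\frac{tn}{\mathbb{E}_{\pi}(\tau_{1})}\right).
\end{equation*}
The deterministic first summand vanishes in the limit.

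For the second summand, I would invoke Theorem \ref{FCLTh}: the interpolated process $r_{n}(\cdot)/\sqrt n$ converges in $D([0,T])$ to $\gamma(h)\,W_{\cdot}$. Combining with Lemma \ref{Ntnassympthm2}, which yields $N_{tn}/n\stackrel{\mathbb{P}}{\to}\lambda t$ for $\lambda:=1/\mathbb{E}_{\pi}(\tau_{1})$, and using the a.s.\ continuity of Brownian motion together with a standard random time-change argument (e.g.\ Billingsley, Thm.~14.4),
\begin{equation*}
\frac{S_{N_{tn}}(\bar h)}{\sqrt n}=\frac{r_{n}(N_{tn}/n)}{\sqrt n}\Rightarrow \gamma(h)\,W_{\lambda t}\quad\text{in }D([0,T]).
\end{equation*}

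The third (drift-correction) summand is the technical crux: controlling it demands upgrading the LLN for $T_n=\sum_{k=1}^n\tau_k$ of Section \ref{LLNmodtau} to a CLT for $N_{tn}$. I would carry this out by verifying that the Poisson equation for the centered interarrival functional $f(Z_k)-\mathbb{E}_\pi(\tau_1)$ admits a square-integrable solution under $\pi$, invoking Theorem 17.4.4 of \cite{Meyn} (the very engine underlying Theorem \ref{FCLTh}) to obtain a Markov FCLT for $T_n$, and then inverting it through the duality $\{N_{tn}\ge k\}=\{T_k\le tn\}$. A joint bivariate FCLT for the pair $\bigl(h(V_k),\,f(Z_k)-\mathbb{E}_\pi(\tau_1)\bigr)$ then yields the joint weak limit of the second and third summands, which collapses via continuous mapping to a single centered Gaussian process.

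The principal obstacle is this joint analysis together with the cross-covariance bookkeeping needed to identify the resulting variance coefficient with the $\gamma^{2}(h)$ of the statement, after accounting for the time-rescaling factor (noting that $W_{\lambda t}$ has the law of $\sqrt\lambda\,W_t$ as a process). Under a symmetry assumption forcing $\nu(h)=0$, the drift summand vanishes identically and the argument reduces to the pure random time change of the second summand; in the general asymmetric setting, however, the joint CLT and the careful identification of the limiting variance are unavoidable and constitute the delicate part of the proof.
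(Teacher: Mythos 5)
Your decomposition differs genuinely from the paper's. You write
\begin{equation*}
\sqrt{n}\left(\frac{s_{tn}}{n}-\frac{\nu(h)\,t}{\mathbb{E}_{\pi}(\tau_{1})}\right)=\frac{s_{0}}{\sqrt n}+\frac{S_{N_{tn}}(\bar h)}{\sqrt n}+\frac{\nu(h)}{\sqrt n}\left(N_{tn}-\frac{tn}{\mathbb{E}_{\pi}(\tau_{1})}\right),
\end{equation*}
so the centered sum is evaluated at the \emph{random} index $N_{tn}$ (requiring a Billingsley-type random time change applied to the output of Theorem~\ref{FCLTh}), and the drift-corrected $N_{tn}$ appears as a free-standing summand. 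The paper instead evaluates the centered sum at the \emph{deterministic} index $[tn/\Ex_\pi(\tau_1)]$ (its term $\hbox{II}_n$, so Theorem~\ref{FCLTh} applies directly), and controls the discrepancy term $\hbox{III}_n = n^{-1/2}\bigl(\sum_{j\leq N_{tn}}\tilde u_j-\sum_{j\leq[tn/\Ex_\pi(\tau_1)]}\tilde u_j\bigr)$ by the crude bound $\tfrac{1}{2\sqrt n}\bigl|N_{tn}-[tn/\Ex_\pi(\tau_1)]\bigr|$, invoking only the LLN of Lemma~\ref{Ntnassympthm2}.

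What your route makes explicit is actually a real issue that the paper's route obscures. Controlling $n^{-1/2}\bigl(N_{tn}-tn/\Ex_\pi(\tau_1)\bigr)$ requires $\sqrt n$-scale (CLT-type) control of $N_{tn}$, not just the LLN; the same is true of the paper's $\hbox{III}_n$, since the LLN yields only $N_{tn}-[tn/\Ex_\pi(\tau_1)]=o_P(n)$, which is insufficient for $\tfrac{1}{2\sqrt n}|N_{tn}-[tn/\Ex_\pi(\tau_1)]|\to0$. In this respect your version is the more candid of the two. You have also spotted a genuine subtlety: the random-time-change limit is $\gamma(h)W_{t/\Ex_\pi(\tau_1)}$, which has the law of $\gamma(h)\Ex_\pi(\tau_1)^{-1/2}W_t$, so a factor of $\Ex_\pi(\tau_1)^{-1}$ should appear in the limiting variance even before any cross-covariance corrections; the statement (\ref{Nth2}) as written does not reflect this.

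Against that, your proposal remains a plan, not a proof: the joint Poisson-equation analysis for $(h(V_k),f(Z_k)-\Ex_\pi(\tau_1))$, its square-integrable solution, and the cross-covariance computation needed to identify the limiting variance are described but not executed. A concrete suggestion: in the symmetric case $\nu(h)=0$, your drift term vanishes identically, and the argument reduces to the random time change applied to Theorem~\ref{FCLTh}; carrying this out with Billingsley's theorem (using that $N_{tn}/n\to t/\Ex_\pi(\tau_1)$ in probability, uniformly on compacts by monotonicity) would give a clean and complete proof — cleaner than the paper's in that it does not rely on the crude bound for $\hbox{III}_n$. In the general asymmetric case, the bivariate FCLT and the resulting variance bookkeeping are unavoidable and must actually be carried out before the statement (with the corrected variance) can be asserted.
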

\begin{proof} {Throughout, let $t_n:=tn$.} {Let us recall that $s_{t}=s_0+\sum\limits_{j=1}^{N_t}{\tilde{u}}_j$ and ${\tilde{u}}_n=h(V_n)$,} for the Markov chain $\{V_n\}_{n\geq0}$ and $h:\Lambda\rightarrow\R$ {given} by {$h(y,c,u)=u$}.
Now, we decompose the process {$\bar{s}_{t_{n}}:=n^{1/2}\left(s_{t_n}/n -t\nu(h)/\Ex_\pi(\tau_1)\right)$} as:
\small
\[
	{\bar{s}_{t_n}}= \underbrace{\frac{s_0}{\sqrt{n}}}_{\hbox{I}_n}+\underbrace{\frac{1}{\sqrt{n}}\sum\limits_{j=1}^{[tn/\Ex_{\pi}(\tau_1)]}\left({\tilde{u}}_j- \nu(h)\right)}_{\hbox{II}_n} + \underbrace{\left(\frac{1}{\sqrt{n}}\sum\limits_{j=1}^{N_{t_n}}{\tilde{u}}_j-\frac{1}{\sqrt{n}}\sum\limits_{j=1}^{[tn/\Ex_{\pi}(\tau_1)]}{\tilde{u}}_j\right)}_{\hbox{III}_n}+\underbrace{\left(\frac{1}{\sqrt{n}}\sum\limits_{j=1}^{[tn/\Ex_{\pi}(\tau_1)]}\nu(h)-\sqrt{n}\frac{t\nu(h)}{\Ex_\pi(\tau_1)}\right)}_{\hbox{IV}_n} ,
\]
\normalsize
\noindent
where, as in Theorem \ref{Ntnassympthm2}, {$\nu$} is the stationary measure of the Markov chain $\{V_n\}_{n\geq0}$. As $n\rightarrow\infty$, clearly, I$_n\Rightarrow0$.
Also, by Theorem \ref{FCLTh},
\begin{align*}
\hbox{II}_n&\Rightarrow \gamma(h)^2 W_t,
\end{align*}
where $\gamma^2(h)$ is given by  {Eq.~(\ref{exprsigma})}. Now, since ${\tilde{u}}_j\in\left\{\frac{1}{2},-\frac{1}{2}\right\}$, for any $\epsilon\more0$,
\begin{align*}
\Px\left(\left|\sum\limits_{j=1}^{N_{t_n}}{\tilde{u}}_j-\sum\limits_{j=1}^{[tn/\Ex_{\pi}(\tau_1)]}{\tilde{u}}_j\right|\geq\epsilon\sqrt{n}\right)&\leq\Px\left(\left|\sum\limits_{j=N_{t_n}\wedge[tn/\Ex_{\pi}(\tau_1)]}^{N_{t_n}\vee[tn/\Ex_{\pi}(\tau_1)]}{\tilde{u}}_j\right|\geq\epsilon\sqrt{n}\right)\\
	&\leq \Px\left(\frac{1}{2}\left|N_{t_n}-[tn/\Ex_{\pi}(\tau_1)]\right|\geq\epsilon\sqrt{n}\right)\\
	&\leq \Px\left(\left|\frac{N_{t_n}}{[tn/\Ex_{\pi}(\tau_1)]}-1\right|\geq\frac{2\epsilon\sqrt{n}}{[tn/\Ex_{\pi}(\tau_1)]}\right),
\end{align*}
which, by {Proposition \ref{Ntnassympthm2}}, converges to 0 as $n\rightarrow\infty$. Thus, III$_n$ converges to 0 in probability. Finally, since $\hbox{IV}_n=\nu(h)\frac{[tn/\Ex_{\pi}(\tau_1)]}{\sqrt{n}}-\sqrt{n}\frac{t\nu(h)}{\Ex_\pi(\tau_1)}$ is such that $0\leq-\hbox{IV}_n\less\frac{\nu(h)}{\sqrt{n}}$, it follows that IV$_n\rightarrow0$, as $n\rightarrow\infty$, {and, thus, we conclude  (\ref{Nth2}).} 
\end{proof}

\section{Computation of Some {LOB} Features of Interest}\label{CmpImpQnt}

In this section we develop some numerical tools to evaluate some {LOB model features} of practical relevance such as the distribution of the time span between price changes, the probability of a price increase, and the probability of two consecutive price increments. The proposed method is based on an explicit characterization of the joint distribution of the time and position at which a certain two-dimensional Markov chain starting in the first quadrant hits the coordinate axes. The developed tools will also be used in Section \ref{Implementation} to {devise} an efficient simulation algorithm for the midprice dynamics of the order book.

Recall that  {$\bar\Omega_{{N^{*}}}:=\{0,1,2,\ldots,{N^{*}}\}$ and $\Omega_{N^{*}}:=\{1,2,\ldots,{N^{*}}\}$. Throughout this section, we let $\left\{Y(x,y)\right\}_{(x,y)\in\Omega^{2}_{N^{*}}}$ be a collection of independent processes such that, for each $(x,y)\in\Omega^{2}_{N^{*}}$,
\begin{equation}\label{ChainY}
	{Y(x,y):=\left\{Y_t(x,y)\right\}_{t\in\mathbb{N}}:=\left\{\left(Q_t^{a,0}\left(x\right),Q_t^{b,0}\left(y\right)\right)\right\}_{t\in\mathbb{N}}},
\end{equation}
where $Q^{a,0}(x):=\{Q^{a,0}_{t}(x)\}_{t\geq{}0}$ and $Q^{b,0}(x):=\{{Q}^{b,0}_{t}(x)\}_{t\geq{}0}$
are defined as in Section \ref{Sec:Mod2} (see Eq.~(\ref{DfnQaQb})).}
{We also set
\begin{align*}
	&\mathscr{A}_A:=\{(0,1),(0,2),\ldots,{(0,N^*)}\},\quad 
	\mathscr{A}_B:=\{(1,0),(2,0),\ldots,(N^*,0)\},\quad \mathscr{A}:= \mathscr{A}_A\cup  \mathscr{A}_B,\\
 	&\varsigma(x,y):=\inf\{t\more0: {Y_{t}(x,y)\in \mathscr{A}}\}, \quad L:=L^{a}\wedge L^{b},
\end{align*}
where $L^{a}$ and $L^{b}$ are independent exponential variables with parameter $\alpha$. These variables are meant to represent the times for a new set of orders to arrive at the ask and bid side, respectively. Finally, $\stackrel{\mathcal{D}}{=}$ denotes equality in distribution.}

\subsection{{Distribution of the duration between price changes}}\label{sec:Distr:time}
Here, we develop a numerical method to find the distribution of the {first price change time $\tau_{1}$ given that, initially at time $0$, there are $x$ orders at the bid, $y$ at the ask, and the spread is $z$.
To this end, we first  compute the joint distribution of the vector $(\varsigma(x,y),Y_{\varsigma(x,y)}(x,y))$. 
This} is obtained via the following two {lemmas}, whose proofs can be found in the Appendix {\ref{ApndB}}
\begin{lemma} \label{jointdist} 
{Suppose that, for each fixed $\ax:=(\bar{a}_{1},\bar{a}_{2})\in\mathscr{A}$, $u_{\bar{a}}:[0,T]\times\bar\Omega_{{N^{*}}}\to\mathbb{R}$ satisfies the {following system of differential equations}:
\begin{equation}\label{PDEu}
\begin{array}{ll}
\left.\left(-\frac{\partial}{\partial t} + \mathscr{L}\right) u_{{\ax}}(t,x,y)\right|_{t=T-r}=0, & \text{for}\quad {0\leq r < T}, \quad (x,y)\in\Omega_{N^{*}}\\
u_{{\ax}}(T-r,x,y)=\indicator{(x,y)=\ax}, & \text{for}\quad  0\leq r\leq T,\quad (x,y)\in\mathscr{A},\\
u_{{\ax}}(0,x,y)=\indicator{(x,y)=\ax}, & \text{for}\quad  (x,y)\in{\bar\Omega_{N^{*}}^{2}},\end{array}
\end{equation}
where $\mathscr{L}u(t,x,y)$ is {the} finite difference operator given by
\begin{equation} \label{uGenerator}
\mathscr{L}u(t,x,y)=\left\{\begin{array}{ll}
\lambda (u_1^++u_2^+) +\up(u_1^-+u_2^-) -2(\lambda+\up)u , & (x,y)\in\{1,2,\ldots,{N^{*}}-1\}^2,\\
\lambda u_2^+ +\up(u_1^-+u_2^-) -(\lambda+2\up)u,& x={N^{*}},\ y\in\{1,2,\ldots,{N^{*}}-1\},\\
\lambda u_1^+ +\up(u_1^-+u_2^-) -(\lambda+2\up)u,& x\in\{1,2,\ldots,{N^{*}}-1\},\ y={N^{*}},\\
\up(u_1^-+u_2^-) -2\up u, & (x,y)=({N^{*}},{N^{*}}),\\
0,& (x,y)\in\mathscr{A},
\end{array}\right.
\end{equation}
and $u_1^+=u(t,x+1,y)$, $u_2^+=u(t,x,y+1)$, $u_1^-=u(t,x-1,y)$, $u_2^-=u(t,x,y-1)$, and $u=u(t,x,y)$.
Then, for ${t}\more0$, {$(x,y)\in\bar\Omega_{N^{*}}^{2}$}, and $\ax:=(\bar{a}_{1},\bar{a}_{2})\in\mathscr{A}$,}
\begin{equation}\label{RPDJH}
	{u_{\bar{a}}(t,x,y):=\Px\left[\varsigma(x,y)\leq {t}, {Y_{\varsigma(x,y)}(x,y)}=\ax\right]}.
\end{equation}
\end{lemma}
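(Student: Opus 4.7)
My plan is to identify $u_{\bar a}$ with the hitting-time functional of the two-dimensional chain $Y(x,y)$ killed on first entry into $\mathscr{A}$. Set
\[
v_{\bar a}(t,x,y):=\Px\bigl[\varsigma(x,y)\le t,\ Y_{\varsigma(x,y)}(x,y)=\bar a\bigr],
\]
and show that $v_{\bar a}$ solves the same system (\ref{PDEu}); since $\bar\Omega_{N^*}^2$ is finite and the PDE is a linear autonomous ODE in $t$ with values on $\mathscr{A}$ prescribed and held constant in $t$, standard ODE uniqueness will give $u_{\bar a}=v_{\bar a}$, which is exactly (\ref{RPDJH}). The side conditions for $v_{\bar a}$ are immediate: on $\mathscr{A}$, one has $\varsigma(x,y)=0$ and $Y_{\varsigma(x,y)}(x,y)=(x,y)$, so $v_{\bar a}(t,x,y)=\indicator{(x,y)=\bar a}$ for every $t\ge 0$; and for $(x,y)\in\Omega_{N^*}^2$, one has $\varsigma(x,y)>0$ almost surely, so $v_{\bar a}(0,x,y)=0=\indicator{(x,y)=\bar a}$.

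For the evolution equation in the interior I would run a first-jump analysis. Because $Q^{a,0}(x)$ and $Q^{b,0}(y)$ are independent birth-death chains governed by (\ref{TrnsMtrQ}), the waiting time to the first jump of $Y(x,y)$ from $(x,y)\in\Omega_{N^*}^2$ is exponential with rate $\Lambda(x,y)$ equal to the sum of the outgoing rates, and the target is an admissible neighbour weighted by its rate. Conditioning on whether a jump has occurred in $[0,h]$ and on its target, and applying the strong Markov property at the first jump time, yields
\[
v_{\bar a}(t+h,x,y)=v_{\bar a}(t,x,y)+h\,\mathscr{L}v_{\bar a}(t,x,y)+O(h^2),\qquad h\downarrow 0,
\]
hence $\partial v_{\bar a}/\partial t=\mathscr{L}v_{\bar a}$ on $(0,\infty)\times\Omega_{N^*}^2$. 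Since the equation is autonomous in $t$, the formulation $(-\partial_t+\mathscr{L})u|_{t=T-r}=0$ used in (\ref{PDEu}) is just the same equation after the inessential substitution $t=T-r$.

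The only content beyond a textbook backward-Kolmogorov argument is verifying that the four branches of $\mathscr{L}$ in (\ref{uGenerator}) match exactly the admissible jumps from each stratum of $\bar\Omega_{N^*}^2$: four neighbours in the open interior $\{1,\dots,N^*-1\}^2$; three on each of the sides $\{x=N^*,\ 1\le y\le N^*-1\}$ and $\{1\le x\le N^*-1,\ y=N^*\}$, where the birth at $N^*$ is suppressed by $\mathcal{Q}_{N^*,N^*+1}=0$ in (\ref{TrnsMtrQ}); two at the corner $(N^*,N^*)$; and no motion on $\mathscr{A}$, which is encoded by $\mathscr{L}u=0$ together with the prescribed boundary values. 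Uniqueness for (\ref{PDEu}) then follows by viewing the interior values as a vector $U(t):=(u_{\bar a}(t,x,y))_{(x,y)\in\Omega_{N^*}^2}$; the system becomes $U'(t)=AU(t)+b$ with $A$ a fixed matrix, $b$ a time-independent vector encoding the boundary values on $\mathscr{A}$, and $U(0)$ prescribed, which has a unique solution. I do not anticipate any serious obstacle beyond this case bookkeeping.
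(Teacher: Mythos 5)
Your proof is correct, but it takes a genuinely different route from the paper's. The paper proves the identity by a martingale (Dynkin/optional-sampling) argument: it observes that $\mathscr{L}$ is the generator of $Y$, constructs the martingale $M_t := \bar{u}(T-t,Y_t)-\int_0^t(-\partial_t+\mathscr{L})\bar{u}(T-r,Y_r)\,dr$ for a bounded $C^1$-in-$t$ function $\bar{u}$, and applies the optional sampling theorem at $\sigma := T\wedge\varsigma$. If $\bar{u}$ solves (\ref{PDEu}), the integral term vanishes, and evaluating the boundary and initial data at the stopped position yields $\bar{u}(T,x,y)=\Px[Y_\varsigma=\ax,\varsigma\le T]$ in one step. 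Your argument goes in the opposite direction: you first show that the probabilistic functional $v_{\bar a}$ itself solves (\ref{PDEu}) via a backward-Kolmogorov/first-jump expansion, and then invoke uniqueness for the resulting finite-dimensional autonomous linear ODE system (with the $\mathscr{A}$-values frozen as a constant forcing vector) to conclude $u_{\bar a}=v_{\bar a}$. Both routes are standard Feynman--Kac machinery and both ultimately rest on identifying the branches of $\mathscr{L}$ with the allowed jumps of the chain (your explicit case bookkeeping makes this visible; the paper states the generator identification without spelling it out). What the martingale route buys is economy: it goes directly from ``$u$ solves the PDE'' to ``$u$ is the hitting functional'' without a separate uniqueness step. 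What your route buys is elementariness and transparency about where each branch of (\ref{uGenerator}) comes from; the uniqueness you need is just the Picard theorem for a constant-coefficient linear ODE on a finite-dimensional space, so there is no hidden gap there.
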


The next {result proves the existence of {a} solution $u$ to the system (\ref{PDEu}) by giving an explicit {representation} of $u$ in terms of the eigenvalues and eigenvectors of a certain finite difference operator. As a result, we obtain as well an explicit formulation of the joint distribution of $(\varsigma(x,y),Y_{\varsigma(x,y)}(x,y))$}.  Below, we let
\[
\overline{a+1}:=\left\{\begin{array}{rcl} {\bar{a}+(0,1),} & if & \ax\in\{(1,0),(2,0),\ldots,({N^{*}},0)\}\\ {\bar{a}+(1,0),}& if & \ax\in\{(0,1),(0,2),\ldots,(0,{N^{*}})\}.\end{array}\right.
\]
\begin{prop} \label{lemmausol}
Let $\Delta$ be the symmetric finite difference operator defined for functions {$w:\bar{\Omega}^{2}_{N^{*}}\to\mathbb{R}$} as
\begin{equation} \label{vGenerator}
\Delta {w(x,y)}=\left\{\begin{array}{ll}
w_1^++w_2^+ + w_1^-+w_2^- -4w , & (x,y)\in\{1,2,\ldots,{N^{*}}-1\}^2\\
w_2^+ +w_1^-+w_2^- -\left(4-\sqrt{\frac{\lambda}{\up}}\right)w,& x={N^{*}},\ y\in\{1,2,\ldots,{N^{*}}-1\}\\
w_1^+ +w_1^-+w_2^- -\left(4-\sqrt{\frac{\lambda}{\up}}\right)w,& x\in\{1,2,\ldots,{N^{*}}-1\},\ y={N^{*}}\\
w_1^-+w_2^- -\left(4-2\sqrt{\frac{\lambda}{\up}}\right)w, & (x,y)=({N^{*}},{N^{*}})\\
0,& (x,y)\in\mathscr{A}\end{array}\right.,
\end{equation}
where $w_1^+=w(x+1,y)$, $w_2^+=w(x,y+1)$, $w_1^-=w(x-1,y)$, $w_2^-={w(x,y-1)}$, and $w=w(x,y)$.
Let $\{\xi_k\}_{k=1}^{{N^{*}}^{2}}$ be the eigenvalues of $\Delta$ and $\{f_k(x,y)\}_{k=1}^{{N^{*}}^{2}}$ be their corresponding eigenvectors so that they constitute an orthonormal basis of $\mathbb{R}^{{N^{*}}^{2}}$. 
{For $\ax:=(\bar{a}_{1},\bar{a}_{2})\in\mathscr{A}$, let} $u_{\bar{a}}:[0,T]\times{\bar\Omega^{2}_{{N^{*}}}}\to\mathbb{R}$ be defined by 
\begin{equation}\label{uexplicit}
u_{{\bar{a}}}(t,x,y)=\left(\frac{\lambda}{\up}\right)^{\frac{\ax_1+\ax_2-x-y}{2}}\left[\sum\limits_{k=1}^{\ {N^{*}}^2}\frac{\sqrt{\lambda\up}f_k\left(\overline{a+1}\right)}{2(\lambda+\up)-\sqrt{\lambda\up}(4+\xi_k)}\left(1 - e^{-t\left[2(\lambda+\up)-(4+\xi_k)\sqrt{\lambda\up}\right]}\right)f_k(x,y){\indicator{(x,y)\in{\Omega}_{N^{*}}^{2}}}+\indicator{(x,y)=\ax}\right].
\end{equation}
Then, the function $u_{\bar{a}}$ satisfies the system of differential {equations} (\ref{PDEu}) {and, therefore, the identity (\ref{RPDJH}) holds true.} 
\end{prop}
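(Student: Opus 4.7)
The plan is to reduce (\ref{PDEu}) to a self-adjoint problem via a weight substitution, then diagonalize in the eigenbasis of $\Delta$. Formula (\ref{uexplicit}) is the ansatz obtained this way; the task is only to verify that it satisfies (\ref{PDEu}).

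I would substitute $u_{\bar a}(t,x,y) = c(x,y)\,v(t,x,y)$ with $c(x,y):=(\lambda/\upsilon)^{(\bar a_1+\bar a_2-x-y)/2}$ and compute $\mathscr{L}(cv)$ case by case. Since $c(x\pm1,y)/c(x,y) = c(x,y\pm1)/c(x,y) = (\upsilon/\lambda)^{\pm 1/2}$, all four transition coefficients $\lambda c(x+1,y)$, $\lambda c(x,y+1)$, $\upsilon c(x-1,y)$, $\upsilon c(x,y-1)$ collapse to $\sqrt{\lambda\upsilon}\,c(x,y)$. Combined with $-2(\lambda+\upsilon)=-4\sqrt{\lambda\upsilon}-2(\sqrt{\lambda}-\sqrt{\upsilon})^2$, this gives, for any $v$ and every $(x,y)\in\Omega_{N^*}^{2}$,
\[
\mathscr{L}(c\,v)(x,y) = c(x,y)\!\left[\sqrt{\lambda\upsilon}\,\widetilde{\Delta} v(x,y) - 2(\sqrt{\lambda}-\sqrt{\upsilon})^2 v(x,y)\right],
\]
where $\widetilde{\Delta}$ is defined by the same piecewise formulas as $\Delta$ in (\ref{vGenerator}), except that the neighbor values on $\mathscr{A}$ are taken to be the actual values of $v$ there. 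The boundary modifications at $x=N^*$ or $y=N^*$ (namely $-(\lambda+2\upsilon)$ in $\mathscr{L}$ versus $-(4-\sqrt{\lambda/\upsilon})$ in $\Delta$) are tuned precisely so that this single identity holds in the interior, on the edges, and at the corner. Next, splitting $v = \tilde v + b$ with $b(x,y):=\indicator{(x,y)=\bar a}$ supported on $\mathscr{A}$ and $\tilde v(t,\cdot)$ supported on $\Omega_{N^*}^{2}$, and observing that $\bar a$ has the unique interior neighbor $\overline{a+1}$, the function $b$ contributes to $\widetilde{\Delta} v$ only at $\overline{a+1}$ and by exactly $1$, so $\widetilde{\Delta} v = \Delta \tilde v + \indicator{(x,y)=\overline{a+1}}$ on $\Omega_{N^*}^{2}$. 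Hence (\ref{PDEu}) transforms into the inhomogeneous problem
\[
\partial_t \tilde v = \sqrt{\lambda\upsilon}\,\Delta \tilde v - 2(\sqrt{\lambda}-\sqrt{\upsilon})^2\,\tilde v + \sqrt{\lambda\upsilon}\,\indicator{(x,y)=\overline{a+1}},\qquad \tilde v(0,\cdot)=0.
\]

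Next I would expand $\tilde v(t,\cdot) = \sum_{k=1}^{{N^*}^2} A_k(t) f_k$ in the orthonormal eigenbasis of $\Delta$, which exists because $\Delta$ is symmetric on $\mathbb{R}^{{N^*}^{2}}$ (its off-diagonal entries are $0$ or $1$, with all modifications on the diagonal). Using $\indicator{(x,y)=\overline{a+1}} = \sum_k f_k(\overline{a+1}) f_k(x,y)$, the PDE decouples into
\[
A_k'(t) = -\mu_k A_k(t) + \sqrt{\lambda\upsilon}\,f_k(\overline{a+1}),\qquad A_k(0) = 0,
\]
with $\mu_k := 2(\lambda+\upsilon)-(4+\xi_k)\sqrt{\lambda\upsilon}$, and integration gives $A_k(t) = \sqrt{\lambda\upsilon}\,f_k(\overline{a+1})\mu_k^{-1}(1-e^{-\mu_k t})$, which is exactly the bracketed term in (\ref{uexplicit}). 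The boundary condition on $\mathscr{A}$ and the initial condition on $\bar\Omega_{N^*}^{2}$ follow from (\ref{uexplicit}) by direct inspection: at $t=0$ every factor $1-e^{-\mu_k t}$ vanishes, and on $\mathscr{A}$ only the summand $\indicator{(x,y)=\bar a}$ survives with unit prefactor.

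The main obstacle is the bookkeeping in the symmetrization step: one must check that the boundary modifications of $\mathscr{L}$ and $\Delta$ interact correctly with the weight $c$ so that a single identity holds in the interior, along the two edges $x=N^*$ and $y=N^*$, and at the corner $(N^*,N^*)$. A secondary issue is that (\ref{uexplicit}) implicitly requires $\mu_k\ne 0$ for every $k$; this can be justified from the positivity of $\lambda,\upsilon$ and the location of the eigenvalues $\xi_k$, but is not part of the stated hypotheses.
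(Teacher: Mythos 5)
Your argument is correct and is essentially the same as the paper's: both symmetrize $\mathscr{L}$ by the weight $(\lambda/\upsilon)^{\pm(x+y)/2}$, peel off the boundary contribution supported on $\mathscr{A}$ to obtain a forced equation in the eigenbasis of $\Delta$, and solve the resulting decoupled ODEs (the paper routes this through Lemma \ref{vproblem} by also absorbing the constant $2(\lambda+\upsilon)-4\sqrt{\lambda\upsilon}=2(\sqrt\lambda-\sqrt\upsilon)^2$ into a time-exponential factor, whereas you leave it in the equation as a decay term, so your forcing is time-independent — a cosmetic reparametrization). Your closing remark that the formula tacitly requires $\mu_k\neq 0$ is a fair observation; the paper relies on the same implicit fact (its Remark \ref{Rem:Quot} notes the eigenvalues $\xi_k$ are nonpositive) without spelling it out either.
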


\begin{rem}\label{Rem:Quot} 
We can rewrite Eq.~(\ref{uexplicit}) as:
\begin{equation*} 
u_{\bar{a}}(t,x,y):=\chi^{\frac{\ax_1+\ax_2-x-y}{2}}\sum\limits_{k=1}^{\ {N^{*}}^2}\frac{f_k\left(\overline{a+1}\right)}{2\left(\chi^{1/2}-\chi^{-1/2}\right)^2-\xi_k}\left(1 - e^{-2\lambda t \left[\left(\chi^{-1/2}-1\right)^2- \frac{\xi_k}{2}\chi^{-1/2}\right]}\right)f_k(x,y){\indicator{(x,y)\in{\Omega}_{N^{*}}^{2}}}+{\chi^{\frac{\ax_1+\ax_2-x-y}{2}}}\indicator{(x,y)=\ax},
\end{equation*}
where $\chi:=\lambda/\up$. 
The previous expression shows that, as {$t$} gets larger, the joint probability distribution {$P\left[\varsigma(x,y)\leq {t}, {Y_{\varsigma(x,y)}(x,y)}=\ax\right]$} depends on the parameters $\up$ and $\lambda$ mostly through the quotient $\chi=\lambda/\up$. Let us also point out that the eigenvalues of $\Delta$ can be proven to be non-positive and, thus, $u_{\bar{a}}(T,x,y)\in[0,1]$.
\end{rem}

We are now ready to compute the distribution $F_{\tau_{1}}\left(t|x,y,z\right):=\Px[\left.\tau_{1}\leq t\right|x_{0}^{a}=x,x_{0}^{b}=y,\zeta_{0}=z]$ of the time $\tau_{1}$ it takes for the price to change conditioned on the initial state of the book. For simplicity of notation, throughout $\tau(x,y,z)$ represents a random time such that $\Px(\tau(x,y,z)\leq{}t)=\Px[\left.\tau_{1}\leq t\right|x_{0}^{a}=x,x_{0}^{b}=y,\zeta_{0}=z]$, for any $t\geq{}0$. It is clear that {$\tau(x,y,1)\stackrel{\mathcal{D}}{=}\varsigma(x,y)$} and, thus, from Eq.~(\ref{uexplicit}), for $(x,y)\notin\mathscr{A}$,
\begin{equation}\label{Distribution:tau:z1}
	{F_{\tau_{1}}\left(t|x,y,z\right)}=\left(\frac{\lambda}{\up}\right)^{-\frac{x+y}{2}}\sum\limits_{k=1}^{\ {N^{*}}^2}\frac{\sqrt{\lambda\up}\varsigma_k}{2(\lambda+\up)-\sqrt{\lambda\up}(4+\xi_k)}\left(1 - e^{-t(2(\lambda+\up)-(4+\xi_k)\sqrt{\lambda\up})}\right)f_k(x,y),
\end{equation}
where 
\begin{equation}\label{DfnSigmk}
	\varsigma_{k}:=\sum\limits_{\ax\in\mathscr{A}}\left(\frac{\lambda}{\up}\right)^{\frac{\ax_1+\ax_2}{2}}f_k\left(\overline{a+1}\right).
\end{equation}
On the other hand, for $z\geq2$, we have that {$\tau(x,y,z)\stackrel{\mathcal{D}}{=}\varsigma(x,y)\wedge L$, where as before $L$ represents the arrival time of a limit order within the spread.} Therefore, from the independence of $\varsigma(x,y)$ and ${L}$, {for any $z\geq{}2$ and $(x,y)\notin\mathscr{A}$,} 
\begin{align}\label{Distribution:tau:z2}
	{F_{\tau_{1}}(t|x,y,z)
		=\Px[L\leq{}t]+\Px[\varsigma(x,y)\leq{}t]\Px[L>t]=(1-e^{-2\alpha t})+F_{\tau_{1}}(t|x,y,1)e^{-2\alpha t}}.
\end{align}
The expressions (\ref{Distribution:tau:z1})-(\ref{Distribution:tau:z2}) provide {an} efficient numerical {method} to compute the distribution of the time span between price changes given some initial level I LOB setup. {The method is relatively efficient since} the main task in their evaluation is the computation of the eigenvalues $\{\xi_k\}_{k=1}^{{N^{*}}^{2}}$ and eigenvectors $\{f_k(x,y)\}_{k=1}^{{N^{*}}^{2}}$, which has to be done only once, for any $t\geq{}0$ and $z\in\{1,2,\dots\}$.}

\subsection{Probability of a price increase}\label{sec:PriceIncrease2}
We now consider the probability of a price increase conditioned on the current state of the order book:
\[
p(x,y,z):=\left.\Px\left[\text{Price increase }\right|x\text{ orders at Bid, }y\text{ orders at Ask, and a spread }z\right],\quad\text{for }(x,y)\notin\mathscr{A}.
\]
A price increase occurs if the {best ask} queue gets depleted or if a new set of orders arrives at the bid side. 
Recall from Lemma \ref{jointdist} that $u_{\bar{a}}(t,x,y):=P\left[\varsigma(x,y)\leq {t}, Y_{\varsigma(x,y)}(x,y)=\ax\right]$ has an explicit form given by {Eq.~(\ref{uexplicit})}. Set
\begin{equation}\label{uBdef}
	u_{B}(t,x,y):=\Px\left[\varsigma(x,y)\leq t , Y_{\varsigma(x,y)}\in \mathscr{A}_B\right]=\sum\limits_{\ax\in\mathscr{A}_B}u_{\ax}(t,x,y),
\end{equation}
and note that, if the spread is $z=1$,
\begin{equation}\label{pup:z1}
p(x,y,1)=u_{B}(\infty,x,y)=\left(\frac{\lambda}{\up}\right)^{-\frac{x+y}{2}}\sum\limits_{k=1}^{\ {N^{*}}^2}\frac{\sqrt{\lambda\up}\,\varsigma_{k,B}}{2(\lambda+\up)-\sqrt{\lambda\up}(4+\xi_k)}f_k(x,y),
\end{equation}
where
\[
	\varsigma_{k,B}:=\sum\limits_{\ax\in\mathscr{A}_{B}}\left(\frac{\lambda}{\up}\right)^{\frac{\ax_1+\ax_2}{2}}f_k\left(\overline{a+1}\right).
\]
In order to find $p(x,y,z)$ for $z\geq2$, {note that}
\begin{equation}\label{pup:z2}
p(x,y,z)=\Px\left[\varsigma(x,y)\leq {L},Y_{\varsigma(x,y)}(x,y)\in \mathscr{A}_B\right]+\Px\left[\varsigma(x,y)> {L},{L=L^{b}}\right]=:p_{1}(x,y)+p_{2}(x,y).
\end{equation}
By conditioning on ${L}$ and recalling that ${L}\sim {\rm exp}(2\alpha)$, 
\begin{align}\label{pup:z2I}\nonumber
p_{1}(x,y)
&=2\alpha \int_0^\infty u_{B}(t,x,y)e^{-2\alpha t} dt\\ 
&=\left(\frac{\lambda}{\up}\right)^{-\frac{x+y}{2}}\sum\limits_{k=1}^{\ {N^{*}}^2}\frac{\sqrt{\lambda\up}\,\varsigma_{k,B}}{2(\lambda+\up)-\sqrt{\lambda\up}(4+\xi_k)}\left(1 - \frac{2\alpha}{2(\lambda+\up+\alpha)-(4+\xi_k)\sqrt{\lambda\up}}\right)f_k(x,y).
\end{align}
For the second term, using the symmetry between ${L^{a}}$ and ${L^{b}}$, {$p_{2}(x,y)=\frac{1}{2}\Px\left[\varsigma(x,y)\geq N\right]$} and, thus, 
\begin{align} \nonumber
p_{2}(x,y)
		&=\frac{1}{2}\left(1-2\alpha \int_0^\infty \Px\left[\varsigma(x,y)\leq t\right]e^{-2\alpha t} dt\right)\\ 
		&={\frac{1}{2}\left(1- \left(\frac{\lambda}{\up}\right)^{-\frac{x+y}{2}}\sum\limits_{k=1}^{\ {N^{*}}^2}\frac{\sqrt{\lambda\up}\varsigma_k}{2(\lambda+\up)-\sqrt{\lambda\up}(4+\xi_k)}\left(1 - \frac{2\alpha}{2(\lambda+\up+\alpha)	-(4+\xi_k)\sqrt{\lambda\up}}\right)f_k(x,y)  \right),} \label{pup:II}
\end{align}
where $\varsigma_{k}$ is defined as in (\ref{DfnSigmk}). Again, once the  eigenvalues and eigenvectors of $\Delta$ have been computed, one can readily compute $p(x,y,z)$ via (\ref{pup:z2I})-(\ref{pup:II}), {for any $(x,y)\in \Omega_{N^{*}}^{2}$ and $z\in\mathbb{Z}_{+}$}.

\subsection{Probability of two consecutive price increments}
{Let $\hat{p}(x,y,z)$ be the probability of two consecutive increments in the price given that initially there were $x$ orders at the {best bid}, $y$ orders at the {best ask}, and a spread of $z$. These probabilities are highly dependent on the initial spread. {The case of an initial spread of 1 is relatively easier to analyze than any other spread due to the possibility of a new set of orders within the spread before the depletion of any of the level I queues. As will be shown below, in the latter situation, we will have to consider a probability of the form $\Px\left[{L}<\varsigma(x,y), Y_{{L}}(x,y)\in \{(1,j),\dots,(N^{*},j)\}\right]$, for any $j$. 
The aforementioned probability will be reformulated in terms of the solution to a certain initial value problem along the lines of Proposition \ref{lemmausol}.}

Recall that every time there is a price change, a new number of orders in the LOB side that got depleted is generated from a discrete distribution, {$f^{a}$ or $f^{b}$,} supported on $\{1,2,\ldots,N^*\}$, {depending on whether the best ask or bid queues got depleted. For simplicity, in what follows we assume that $f:=f^{a}=f^{b}$.} Denote {$H$ a random variable with distribution $f$}. In addition to the collection of random walks $\{Y(x,y)\}_{(x,y)\in{\Omega_{N^{*}}^{2}}}$ described at the beginning of Section \ref{sec:Distr:time}, we also need to consider another independent copy $\{\tilde{Y}(x,y)\}_{(x,y)\in{\Omega_{N^{*}}^{2}}}$ and fix $\tilde\varsigma(x,y):=\inf\{t\more0: \tilde{Y}_{t}(x,y)\in\mathscr{A}\}$. Similarly, in addition to {$(L^{a},L^{b})$}, we consider an independent copy {$(\tilde{L}^{a},\tilde{L}^{b})$} and fix {$\tilde{L}:=\tilde{L}^{a}\wedge \tilde{L}^{b}$}. 
We are ready to compute $\hat{p}(x,y,z)$. 

For $z=1$, {clearly,}
\begin{align*}
\hat{p}(x,y,1)&=\sum\limits_{i=1}^{N^*}\sum_{j=1}^{N^{*}}\Px\left[Y_{\varsigma(x,y)}(x,y)=(j,0), H=i, \tilde\varsigma(j,i)\leq {\tilde{L}}, \tilde{Y}_{\tilde\varsigma(j,i)}^2(j,i)\in \mathscr{A}_B\right]\\ 
			&\quad+ \sum\limits_{i=1}^{N^*}\sum_{j=1}^{N^{*}}\Px\left[Y_{\varsigma(x,y)}(x,y)=(j,0), H=i, \tilde\varsigma(j,i)\geq {\tilde{L}}, {\tilde{L}=\tilde{L}^{b}}\right]\\
			&=\sum\limits_{i=1}^{N^*}\sum_{j=1}^{N^{*}} u_{(j,0)}(\infty,x,y)f(i)p(j,i,2),
\end{align*}
where we recall that $p(x,y,2)$ denotes the probability of a price increase if there are $x$ orders at the bid, $y$ orders at the ask, and a spread of $2$. The probability $p(x,y,2)$ can be computed according to (\ref{pup:z2}), while $u_{(j,0)}(\infty,x,y)$ can readily be found from (\ref{uexplicit}) by making $t\to\infty$. It is worth mentioning that the case $z=1$ is arguably the most important in practice since, as empirically observed in several studies, the spread spends a great deal of time at level $1$.

{Next, let} $\mathscr{A}_{B_j}:=\{(1,j),(2,j),\ldots,(N^*,j)\}$. Now, for $z=2$,
\begin{align*}
\hat{p}(x,y,2)&=\sum\limits_{i=1}^{N^*}\sum\limits_{j=1}^{N^*}\Px[\varsigma(x,y)\leq {L},Y_{\varsigma(x,y)}(x,y)=(j,0), H=i, \tilde\varsigma(j,i)\leq {\tilde{L}}, \tilde{Y}_{\tilde\varsigma(j,i)}(j,i)\in \mathscr{A}_{B}]\\ 
			&\quad+ \sum\limits_{i=1}^{N^*}\sum\limits_{j=1}^{N^*}\Px[\varsigma(x,y)\leq {L}, Y_{\varsigma(x,y)}(x,y)=(j,0), H=i, \tilde{\varsigma}(j,i)\geq \tilde{L}, {\tilde{L}=\tilde{L}^{b}}]\\
			&\quad+ \sum\limits_{i=1}^{N^*}\sum\limits_{j=1}^{N^*}\Px[{L}<\varsigma(x,y),{L=L^{b}}, Y_{{L}}(x,y)\in \mathscr{A}_{B_j}, H=i,\tilde{Y}_{\tilde{\varsigma}(i,j)}(i,j)\in \mathscr{A}_{B}].
\end{align*}
{Hence, using that 
$\Px\left[L\leq\varsigma(x,y),L=L^{a}, Y_{a}(x,y)\in \mathscr{A}_{B_j}\right]=\Px\left[L\leq\varsigma(x,y),L=L^{b}, Y_{L}(x,y)\in \mathscr{A}_{B_j}\right]$, we can write
\begin{align*} 
	\hat{p}(x,y,2)&=\sum\limits_{i=1}^{N^*}\sum\limits_{j=1}^{N^*}f(i)\left\{\left( 2\alpha \int_0^\infty u_{(j,0)}(t,x,y)e^{-2\alpha t} dt\right) p(j,i,2)+ \frac{1}{2}\Px\left[{L}<\varsigma(x,y), Y_{{L}}(x,y)\in \mathscr{A}_{B_j}\right]p(i,j,1)\right\}.
\end{align*}
The} probability $p(x,y,1)$ can be computed according to (\ref{pup:z1}), while $2\alpha \int_0^\infty u_{(j,0)}(t,x,y)e^{-2\alpha t} dt$ can readily be found from (\ref{uexplicit}). The problem of computing {$\Px\left[L\leq\varsigma(x,y), Y_{L}(x,y)\in \mathscr{A}_{B_j}\right]$} is analyzed below. 
Before that, let us note that, using similar arguments,
\begin{align*}
\hat{p}(x,y,3)
			&=\sum\limits_{i=1}^{N^*}\sum\limits_{j=1}^{N^*} f(i)\left\{\left( 2\alpha \int_0^\infty u_{(j,0)}(t,x,y)e^{-2\alpha t} dt\right)p(j,i,2)+ \frac{1}{2}\Px\left[{L}<\varsigma(x,y), Y_{{L}}(x,y)\in \mathscr{A}_{B_j}\right]p(i,j,2)\right\}.
\end{align*}
A similar identity holds for $\hat{p}(x,y,z)$ with $z\geq{}4$. Therefore, the only {remaining step} is the computation of $\Px\left[L\leq\varsigma(x,y), Y_{L}(x,y)\in \mathscr{A}_{B_j}\right]$.
This can be done by first computing $v_{{j}}(t,x,y):=\Px[t\less \varsigma(x,y), Y_{t}(x,y)\in \mathscr{A}_{B_j} ]$ {using} similar arguments to those used in Proposition \ref{lemmausol}. More concretely, {it turns out that} $v_{j}(t,x,y)$ solves the initial value problem:}
\begin{equation}\label{PDE:prob1}
\left\{\begin{array}{rcl}
{\left.\left(-\frac{\partial}{\partial t} + \mathscr{L}\right) v_{j}(t,x,y)\right|_{t=T-r}=0} & \text{for} & 0\leq r \leq T,  
(x,y)\in\{1,2,\ldots,{N^*}\}^2, \\
v_{j}(T-r,x,y)=0 & \text{for} & 0\leq r\leq T,\, {(x,y)\in \mathscr{A}},\\
v_{j}(0,x,y)=\indicator{(x,y)\in \mathscr{A}_{B_j}} & \text{for} & (x,y)\in\{0,1,2,\ldots,{N^*}\}^2. \end{array}\right.
\end{equation}

\section{Numerical Examples}\label{Implementation}
{The purpose of this section is twofold. First, we analyze numerically the convergence of the midprice process towards  its diffusive limit process as established in Theorem \ref{Thm:finalFCLT}. Second, we compute some of the quantities of interest described in Section \ref{CmpImpQnt} and numerically study their behaviors under both our assumptions and those in \cite{CL2012}. For the first problem, {we develop} an efficient} simulation scheme {for the price process dynamics}, which is much more efficient than the {direct simulation of all the LOB events (i.e., limit, market, and cancellation orders)}. 

Recall that for the model {introduced} in Section \ref{MainModel}, the input parameters are the rates {$\lambda$, $\mu$, $\theta$, and $\alpha$. The first three parameters refer to the arrival rates of limit orders, market orders, and cancellation, respectively,} while $\alpha$ is the rate at which a new set of limit orders arrive in-between the bid-ask spread. Also, we need {the distributions $f^{b}$ and $f^{a}$  for the sizes of queues at the best bid and ask price, respectively, after the best bid and ask {price changes}. For simplicity, we set $f:=f^{a}=f^{b}$ and recall that we are assuming that $f^{a},f^{b}$ are supported on the finite set $\{1,\dots,{N^{*}}\}$.}

{For the subsequent numerical examples, we shall use the empirically estimated intensities described in Table \ref{table:parameters} below, which are {borrowed} from  \cite{CL2012} (see Table 3 therein). The time {units in the sequel are in seconds}. The maximum queue size ${N^{*}}$ is assumed to be $10$, with each unit representing a batch of 100 shares. {Unless otherwise specified, the} initial level I queue's configuration are set to be $(x,y)=(5,5)$, while the initial spread is {$\zeta_0=4$}. The distribution $f$ is taken to be uniformly distributed in $\{1,\dots,N^{*}\}$.  {Finally}, two different choices of $\alpha$ are {considered:} $\alpha=\upsilon+1$ and $\alpha=2\upsilon$.} 

\begin{center}
\begin{tabular}{|c|c|c|}\hline
 Stock & $\lambda$ & $\upsilon:=\mu+\theta$\\ \hline\hline
 Citigroup & 2204 & 2331\\ \hline
 General Electric & 317 & 325 \\ \hline
 General Motors & 102 & 104 \\ \hline
\end{tabular}
\captionof{table}{Estimates for the {intensities} of limit orders and market orders+cancellations, in number of batches per second (each batch representing 100 shares) on June 26th, 2008, {as reported in \cite{CL2012}}.}
\label{table:parameters}
\end{center}

\subsection{Simulation and Convergence Assessment}\label{Sect:SimSecondModel}

The most natural {(and naive)} way to simulate the price dynamics {would consist} of generating all {the LOB events or, equivalently, all the} Poisson {arrival times of orders (limit, market, and cancellations),} until {the time at which either the bid or ask queue gets depleted and there is consequently a price change. We would then reset the queue size at the side that got depleted and continue this process.} Unfortunately, this {procedure} is computationally intensive and not suitable to study the {coarse-grain behavior of the price process, especially for the purpose of Monte Carlo analysis where we require a large number of simulations}. Instead, we propose a more efficient method, in which we {directly simulate the random vector $(\varsigma(x,y),{Y_{\varsigma(x,y)}(x,y))}$, without simulating the events leading to it. This in turn would allow us to obtain directly the time at which the level I of the order book gets depleted (or equivalently, the time of a price change) and the amount of outstanding limit orders at the opposite side of the book. To simulate $(\varsigma(x,y),{Y_{\varsigma(x,y)}(x,y))}$, we take advantage of the representation for their joint probability given by Eq.~(\ref{uexplicit}). This representation has several advantages  since its computation requires to find the eigenvalues $\{\xi_k\}$ and eigenfunction  $\{f_k(x,y)\}$, \emph{only once},} regardless of $t$ and $\ax$.

By Proposition \ref{Ntnassympthm2} and Theorem \ref{Thm:finalFCLT}, we have
\begin{align}\label{RWWTC}
{\Ex\left(\frac{s_{{t}}}{t}\right)\stackrel{t\rightarrow\infty}{\rightarrow}\frac{\nu(h)}{\Ex_\pi(\tau_{1})},\qquad 
{\frac{{\rm Var}\left(s_{{t}}\right)}{t}}\stackrel{t\rightarrow\infty}{\rightarrow}\gamma^2(h),\qquad 
\Ex\left(\frac{{t}}{N_{{t}}}\right)\stackrel{t\rightarrow\infty}{\rightarrow} \Ex_\pi(\tau_{1})}.
\end{align}
In the sequel, we shall study the {performance of the above asymptotic approximations for {``large"} $t$. Our {goal} is to assess how close the distribution of $s_{t}$ is to its diffusive approximation for some sampling time spans, $t$, commonly used in practice (say, 1 min and 5 min). To compute the expectations and variances appearing in (\ref{RWWTC}), we use a Monte Carlo method} with 200 simulations of the order book. The results are shown in {Table \ref{table:Citigroup}.}
As expected, the larger {are} the rates $\lambda$ and $\up$, the {smaller $\Ex_\pi(\tau_{1})$ gets} and, as a result, the larger the expected rate of return {$\Ex\left(s_{{t}}\right)/t$} becomes. We also observe that, in that case, there seems to be a significant increment in the volatility {$\sqrt{{\rm Var}\left(s_{{t}}\right)}$} of the asset price. 
{This  is due to the fact that} increasing $\lambda$ and $\upsilon$ simultaneously is equivalent to {speeding up} the dynamics of the process, which will necessarily result in higher variability.

{\footnotesize 
\begin{center}
\begin{tabular}{|c | c c c c| c c c c| c c c c|} \hline
& \multicolumn{4}{|c|}{Scenario 1: $\lambda=2204$, $\up=2331$}&
\multicolumn{4}{|c|}{Scenario 2: {$\lambda=317$, $\up=325$}} &  
\multicolumn{4}{|c|}{Scenario 3: $\lambda=102$, $\up=104$}\\
\hline
\multirow{2}{*}{Case}& 
\multicolumn{2}{c}{$\alpha=2332$}& \multicolumn{2}{c|}{$\alpha=4662$}& \multicolumn{2}{c}{$\alpha=2332$}& \multicolumn{2}{c|}{$\alpha=4662$} & \multicolumn{2}{c}{$\alpha=2332$}& \multicolumn{2}{c|}{$\alpha=4662$} \\ 
 & $t=60$ & $t=300$ & $t=60$ & $t=300$  & $t=60$ & $t=300$ & $t=60$ & $t=300$ & $t=60$ & $t=300$ & $t=60$ & $t=300$  \\ \hline
$\Ex[s_{t}]/t$  & -7.02 & -6.44 & -6.69 & -6.55 & -3.50 & -3.55 & -3.35 & -3.61 & -1.57 & -1.74 & -1.78 & -1.62 \\ 
${\rm Var}[s_{t}]/t$ &  240 & 238 & 320 & 322 &  161 & 114 & 113 &  143 & 50 & 45 & 60 & 58\\ 
$\Ex[t/N_{t}]$ & $\frac{3.9}{1000}$ &  $\frac{3.9}{1000}$ &$\frac{3.9}{1000}$ & $\frac{3.9}{1000}$ & $\frac{8.6}{1000}$ &  $\frac{8.6}{1000}$ & $\frac{8.4}{1000}$ & $\frac{8.4}{1000}$ &  $\frac{19.3}{1000}$ &  $\frac{19.3}{1000}$ & $\frac{18.8}{1000}$ &  $\frac{18.8}{1000}$ \\[1ex] \hline
\end{tabular}
\captionof{table}{{Estimates of the expected return $\Ex[s_{t}]/t$, normalized variance  ${\rm Var}(s_{t})/t$, and expected time $\Ex_{\pi}(\tau)$ between price changes.}}
\label{table:Citigroup}
\end{center}
}

Next, we {turn our attention to the behavior of the} spread. Based again on $200$ simulation and {an initial spread of} 4, Table \ref{tablePercentage1} shows the percentage of the time that the spread spends at each state {during the time interval $[0,300\, {\rm sec}]$} for the different values of $\lambda$, {$\up$,} and $\alpha$. As shown {therein}, the larger {are} the rates $\lambda$ and {$\upsilon$}, the longer time the spread spends in {one} tick. This is due to the fact that the larger these rates {are}, the quicker the spread change and, by the choice of $\alpha$, the quicker it will {shrink} to 1. {More importantly, these results show that, when $\alpha/\upsilon$ is large enough, our model can closely replicate the stylized empirical behavior of the spread as illustrated, for instance, in \cite{CL2012} (see Table 2 therein).} 

Finally, {Figures \ref{fig:2332}-\ref{fig:210}} compare the empirical density of $s_{t}$, based on $200$ simulations, to a Gaussian density with mean and variance set equal to the respective sample mean and variance of the 200 replicas of $s_{t}$. {We do this for $t=1$min and $t=5$min, which {are} commonly used as sampling frequencies of many statistical estimation methods.} The empirical density  is obtained using the R function ``density", which computes a kernel density estimate\footnote{We use the default {parameter settings for the kernel and bandwidth given by R, which are respectively given according to a Gaussian kernel and the Silverman's ``rule of thumb" \cite[Eq. (3.31)]{Silverman}}.}. For sake of space, we only show the graphs corresponding to $\alpha=\upsilon+1$ (there is no significant changes when $\alpha=2\upsilon$). As seen in the graphs, the distribution of $s_{t}$ is {relatively} well approximated by a Normal distribution for {these two values of $t$}. 

\begin{center}
\captionof{table}{Distribution of the time spent by the spread for different values of $\lambda,\up$ and $\alpha$ {during $[0,300]$}.}  
\begin{tabular}{c c c c c c} \hline\hline 
\multicolumn{2}{c}{Case} & 1 Tick  & 2 Ticks & 3 Ticks & 4+ Ticks\\ [0.5ex] \hline 
\multirow{3}{*}{$\alpha=\up+1$}  & $\lambda=2204$, $\up=2331$    &  0.97248&	0.02716	& 0.00035	& 0.00001\\ 
																 	 	 & $\lambda=317$, $\up=325$ & 0.906891 &	0.088491 & 0.004135 &	0.000564\\ 
																 		 & $\lambda=102$, $\up=104$ & 0.86881 & 0.12084&0.008868&0.001482\\[1ex] \hline
\multirow{3}{*}{$\alpha=2\up$}    & $\lambda=2204$, $\up=2331$  & 0.98627 & 0.01365 & 0.00007 & 0.00001 \\ 
															    	 & $\lambda=317$, $\up=325$ & 0.95327 & 0.045697 & 0.000956 & 0.000077\\ 
														    		 & $\lambda=102$, $\up=104$ & 0.94383 &0.054443 &0.001579 &0.000148\\[1ex] \hline 
\end{tabular} 
\label{tablePercentage1}
\end{center}

\begin{figure}[H]
\centering
\begin{minipage}{.5\textwidth}
  \centering
  \includegraphics[width=0.8\linewidth]{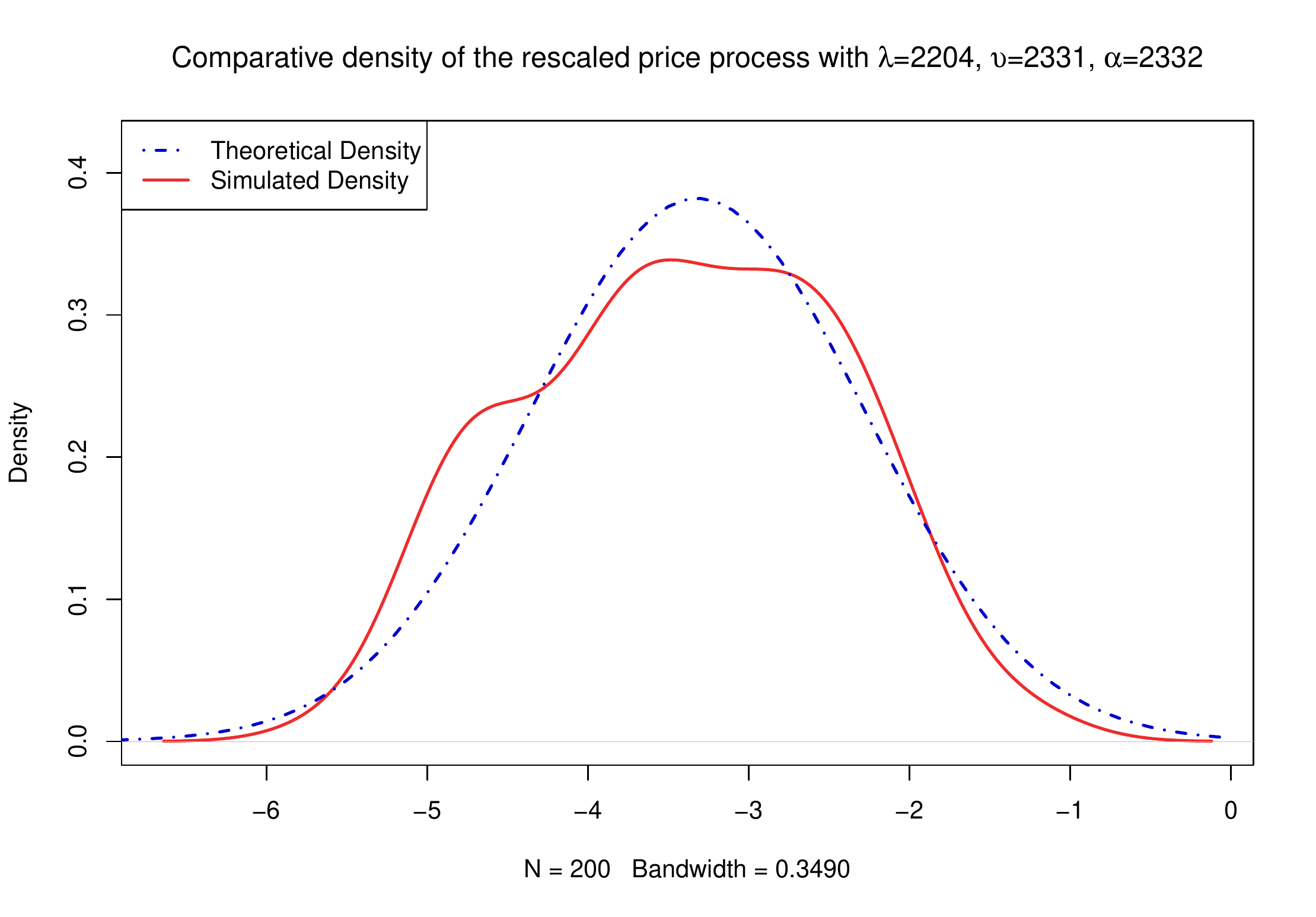}
\end{minipage}%
\begin{minipage}{.5\textwidth}
  \centering
  \includegraphics[width=0.8\linewidth]{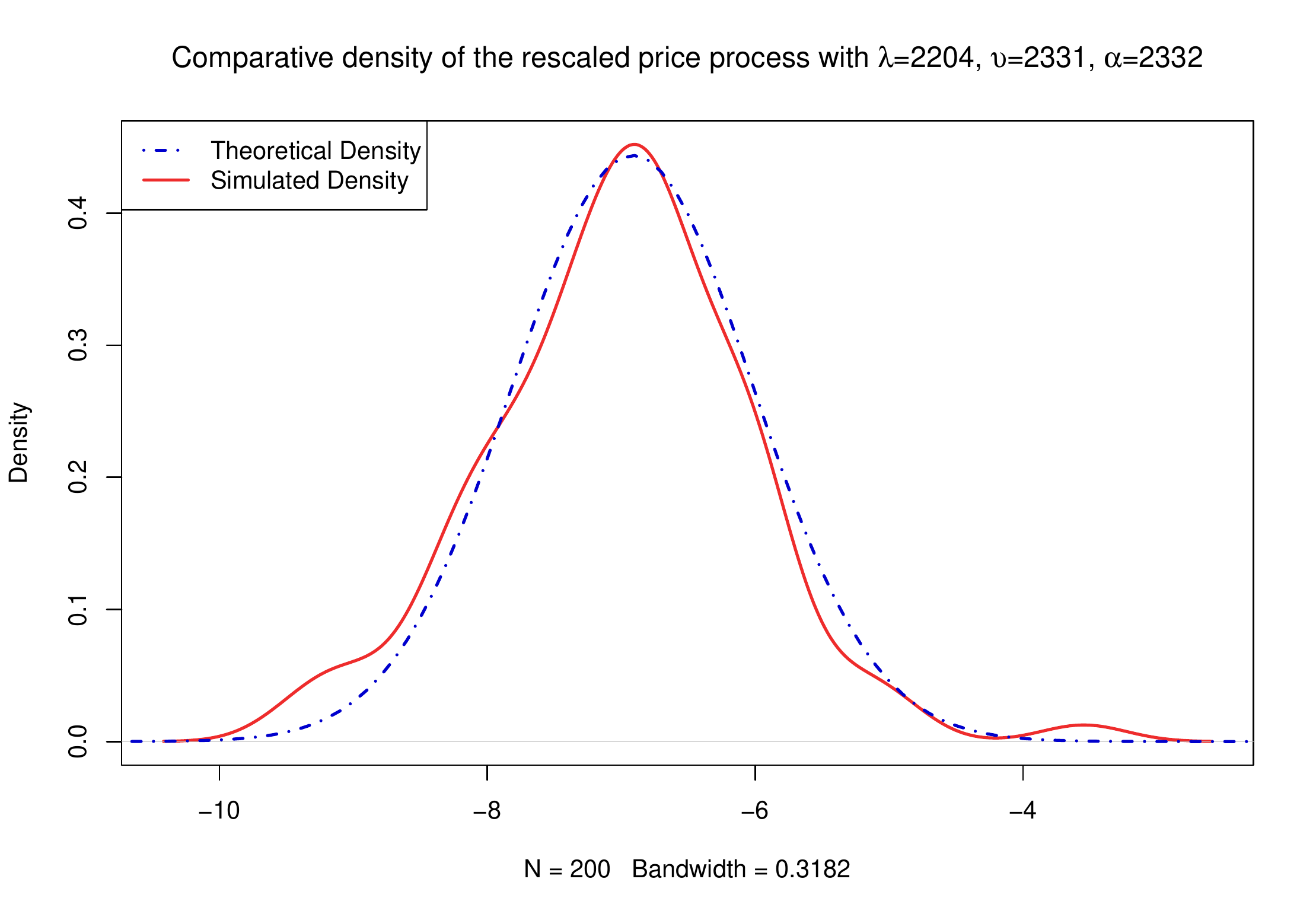}
\end{minipage}
  \captionof{figure}{{Comparison of the empirical density of the price $s_{t}$ to a Gaussian density}, when $\lambda=2204$, {$\upsilon=2331$,} and $\alpha=2332$. The time horizon chosen is $t=60s$ (left panel) and $t=300s$ (right panel).}
    \label{fig:2332}
\end{figure}

\begin{figure}[H]
\centering
\begin{minipage}{.5\textwidth}
  \centering
  \includegraphics[width=0.8\linewidth]{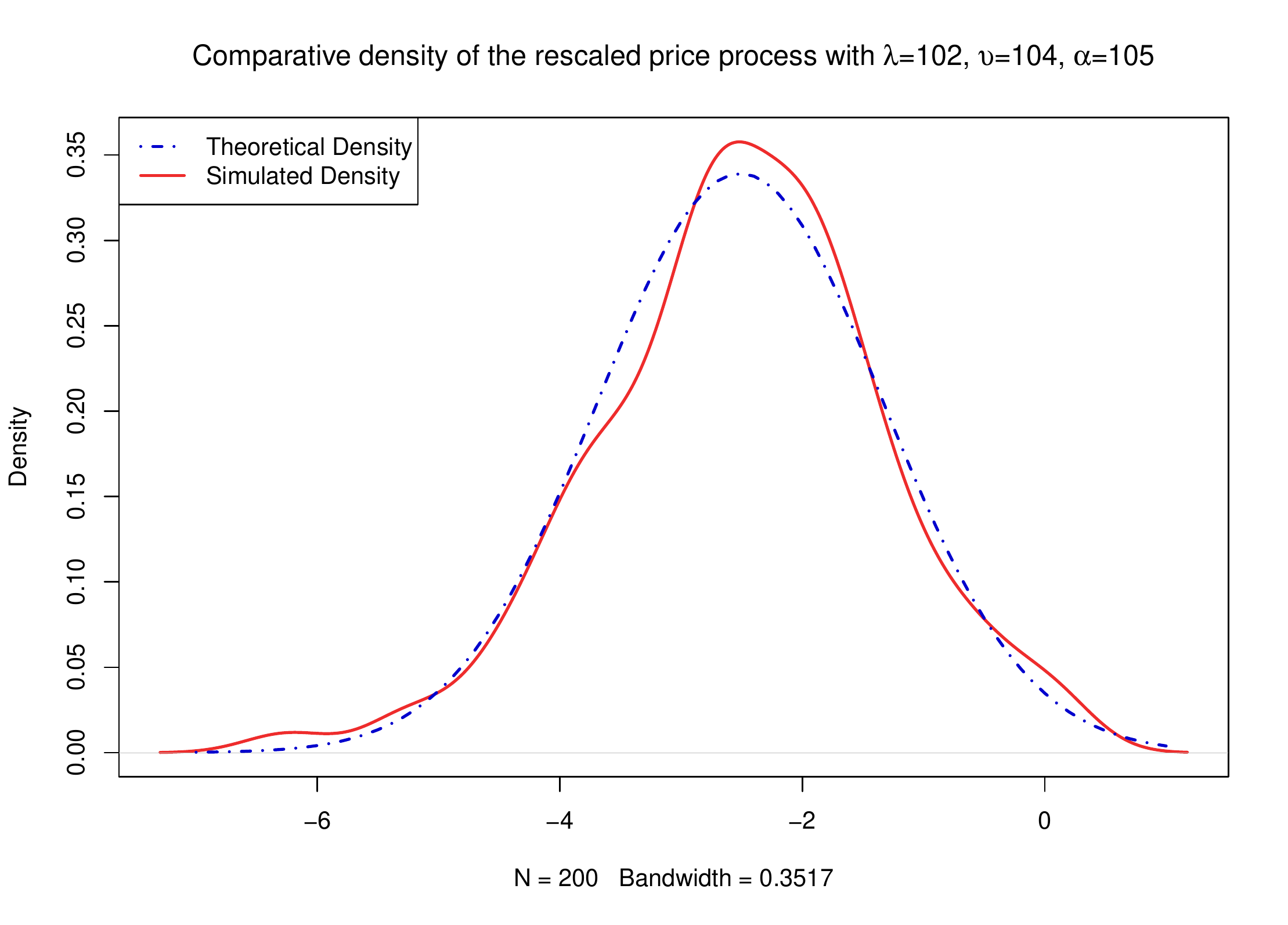}
\end{minipage}%
\begin{minipage}{.5\textwidth}
  \centering
  \includegraphics[width=0.8\linewidth]{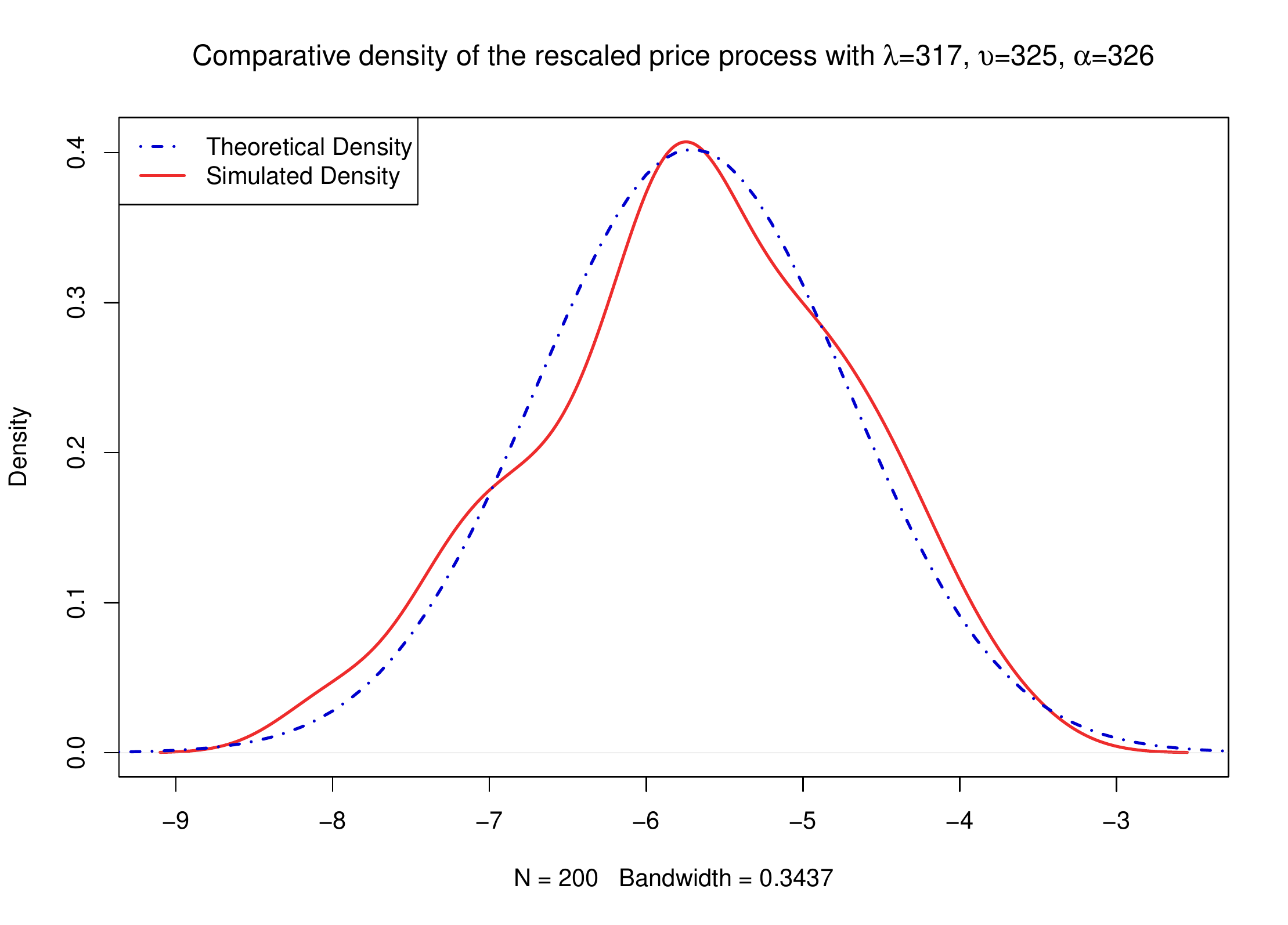}
\end{minipage}
  \captionof{figure}{{Comparison of the empirical density of the price $s_{t}$ to a Gaussian density}, when $\lambda=317$, {$\upsilon=325$,} and $\alpha=326$. The time horizon chosen is $t=60s$ (left panel) and $t=300s$ (right panel).}
    \label{fig:326}
\end{figure}

\begin{figure}[H]
\centering
\begin{minipage}{.5\textwidth}
  \centering
  \includegraphics[width=0.8\linewidth]{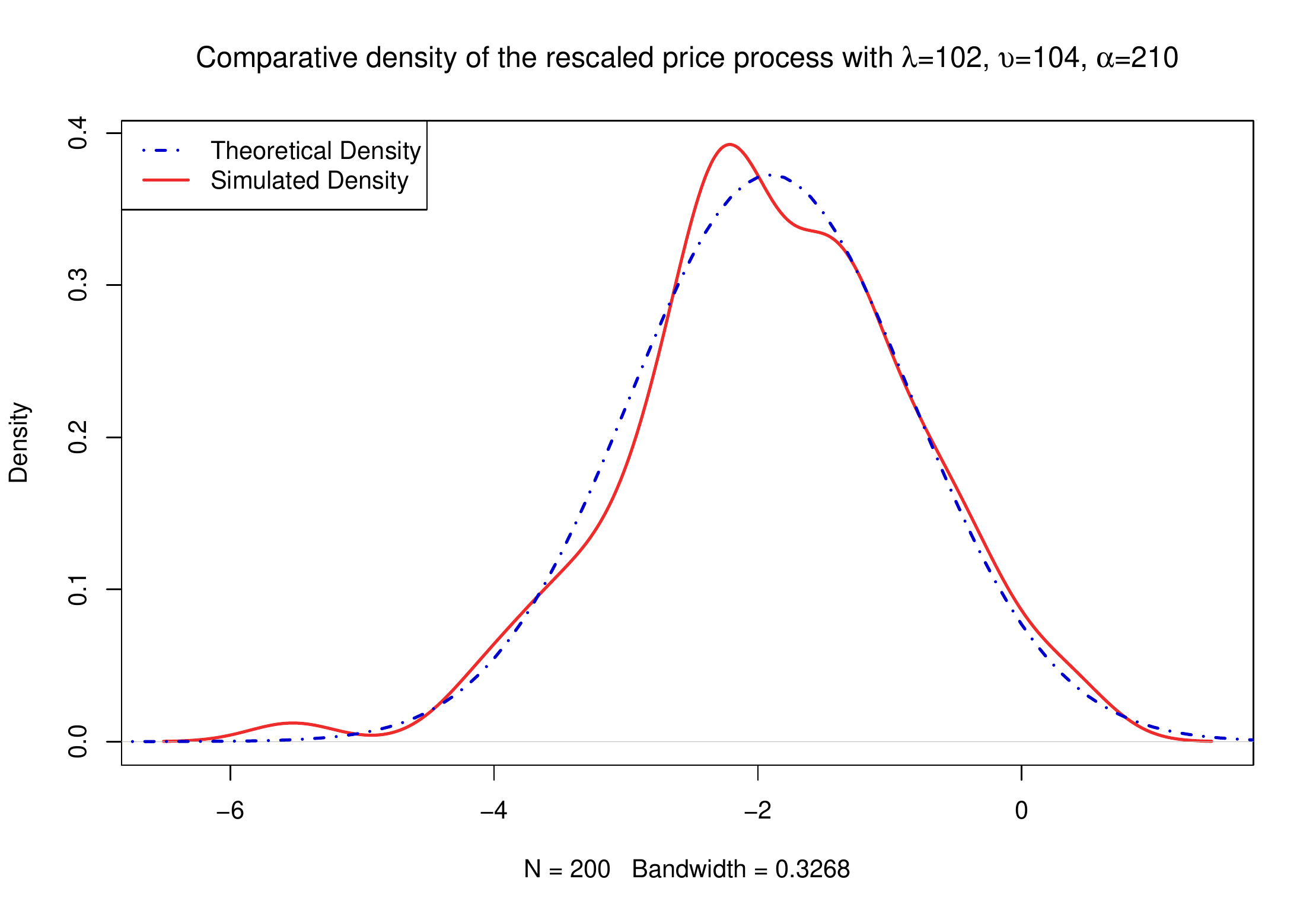}
\end{minipage}%
\begin{minipage}{.5\textwidth}
  \centering
  \includegraphics[width=0.8\linewidth]{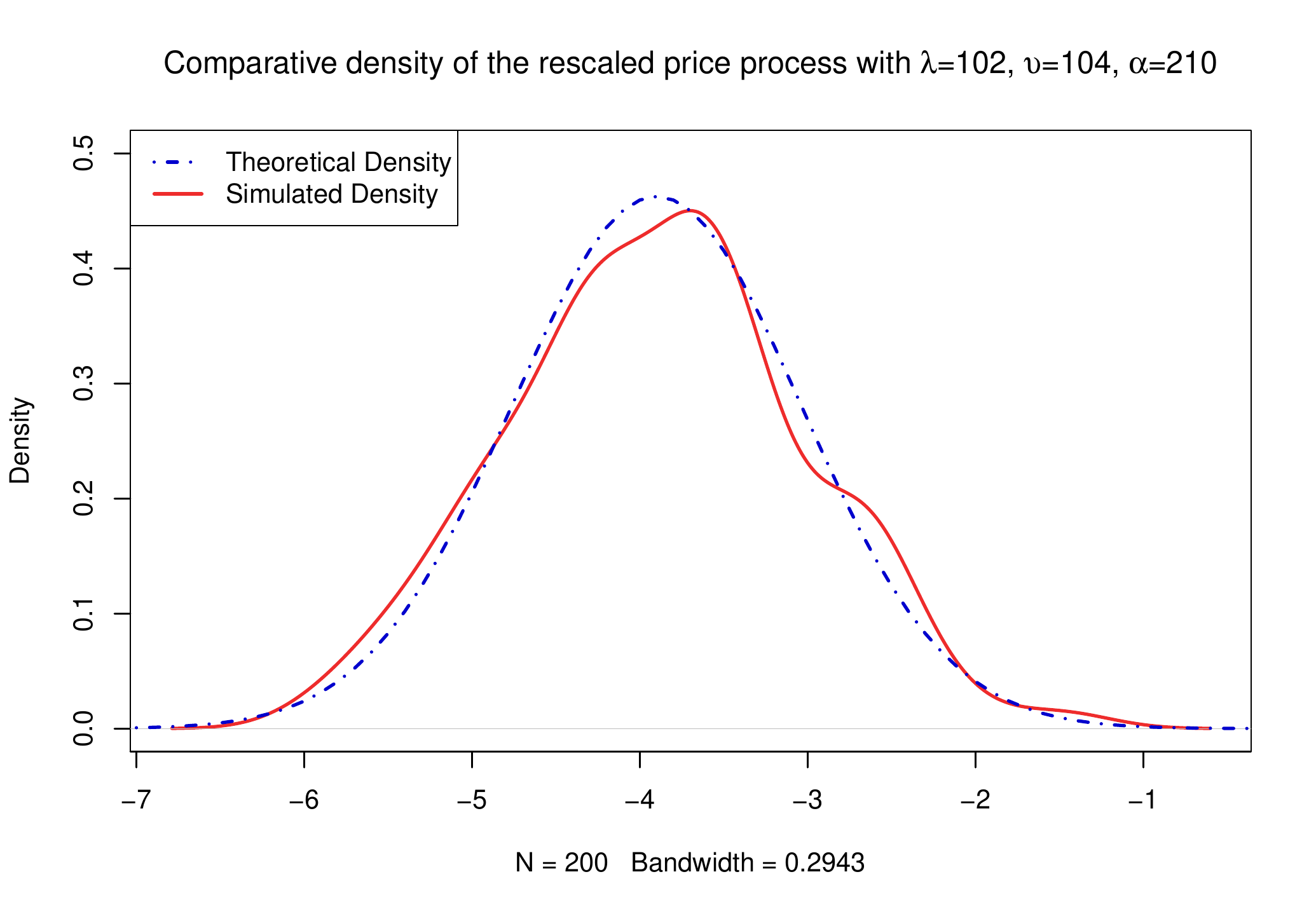}
\end{minipage}
  \captionof{figure}{{Comparison of the empirical density of the price $s_{t}$ to a Gaussian density}, when $\lambda=102$, {$\upsilon=104$,} and $\alpha=208$. The time horizon chosen is $t=60s$ (left panel) and $t=300s$ (right panel).}
    \label{fig:210}
\end{figure}

\subsection{Evaluation of some quantities of interest} 

To understand the impact of the assumptions made in the model {and draw some further comparisons to} the model presented in \cite{CL2012}, in this section, {we numerically compute some of the quantities of interest introduced in Section \ref{CmpImpQnt}}. 

{We first consider the distribution of the time span between price changes}. This distribution was {compute in Proposition 1 of \cite{CL2012}, under the assumptions therein. The survival function { $P(\tau\geq{}t)$} was also} plotted in Figure 4 therein with $\lambda=12$, $\mu+\theta=13$, $x_0^a=5$, and $x_0^b=4$ as the input parameters. In the left panel of Figure \ref{fig:Survival} below, this survival probability distribution is {reproduced and compared with the distributions obtained by the method introduced in Section \ref{sec:Distr:time} (see Eq.(\ref{Distribution:tau:z1})-(\ref{Distribution:tau:z2})), conditional on a spread of $1$, for different values of $N^*$}. {The right panel of Figure \ref{fig:Survival} also} depicts the densities of the time for the next price change.
{As it can be seen from the} plots, for values {of $N^*$ close to 5, the density is more concentrated around 0, which is natural since the queue sizes cannot increase more than $N^*$, causing this time to occur faster. Bu, as it is expected, the survival and density functions under our model converge to those  of \cite{CL2012} when $N^{*}$ increases.}

\begin{figure}[H]
\centering
  \includegraphics[height=7.0cm, width=8.5cm]{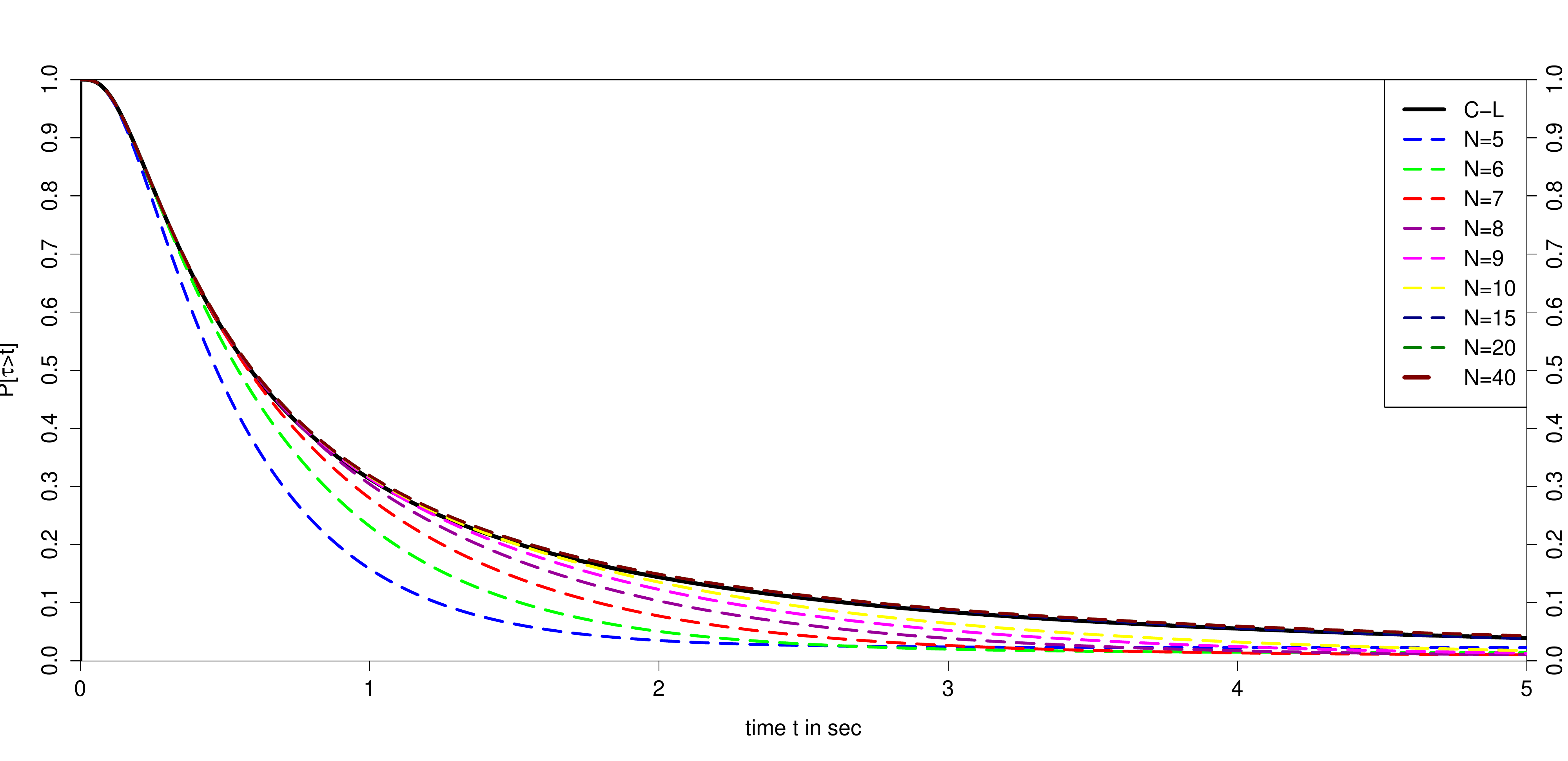}\hspace{0.2 cm}
  \includegraphics[height=7.0cm, width=8.5cm]{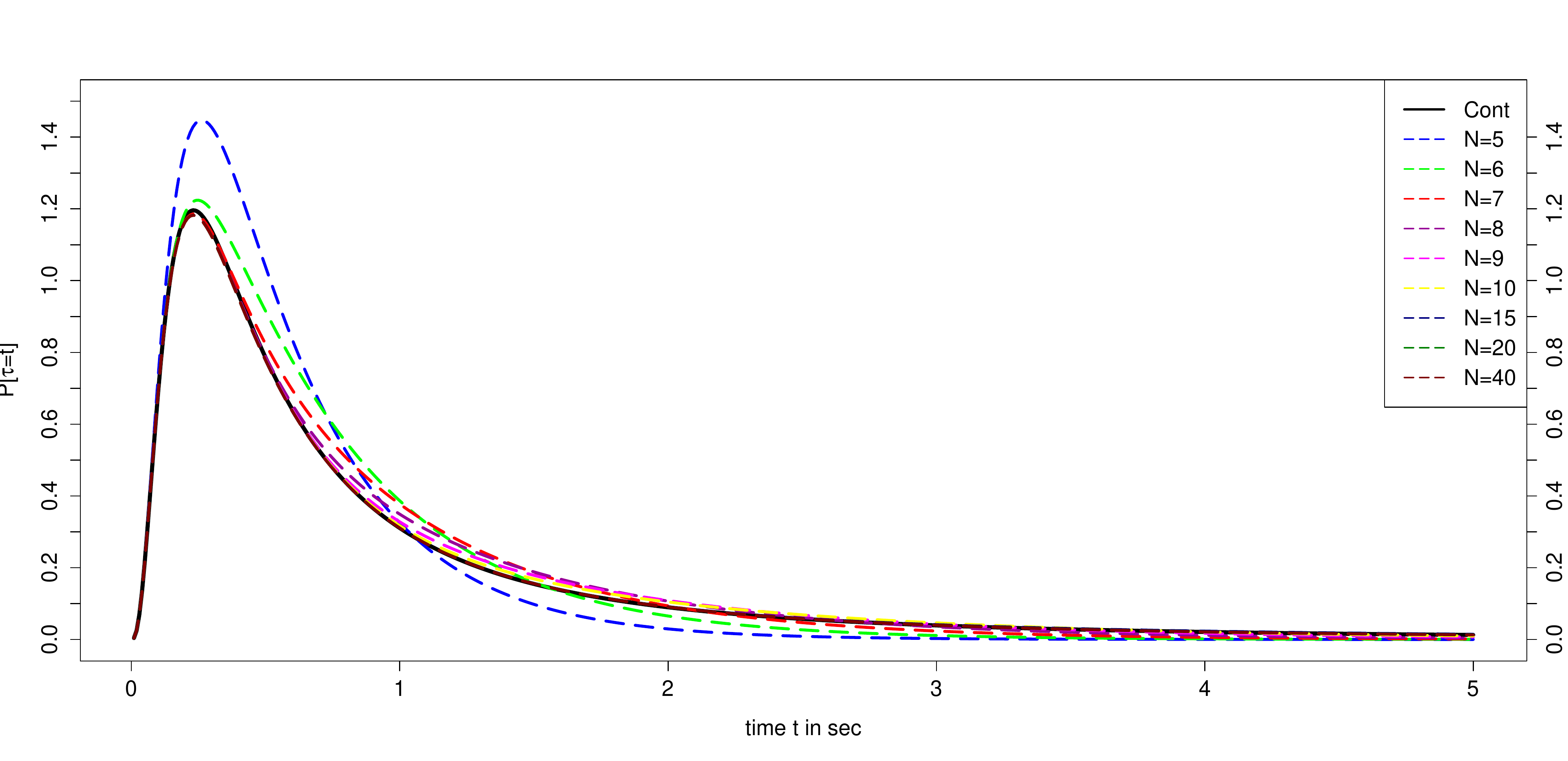}
\caption{Comparison of the survival function {and density} of the time for the first price change to occur between the model presented in \cite{CL2012} and different values of $N^*$ when the spread is set to be 1. {Parameter choices: $\lambda=12$, $\mu+\theta=13$, $x_0^a=5$, and $x_0^b=4$.}}
    \label{fig:Survival}
\end{figure}

The distribution {of the time for the next price to occur, when the spread is 2, is not plotted, because this} is very similar to the one of an exponential random variable with parameter $2\alpha$. This is due to the fact that the recurrence condition $\alpha\geq\mu+\theta$ implies that it is {far more probable that a price change occurs due to the arrival of a new set of limit orders within the spread than the depletion of a level I queue.}

Next, {we compare the probability of a price increase for different values of $x_0^a$, $x_0^b$, and $N^*$, when the spread is set to be 1. \cite{CL2012} provides a formula (see Proposition 3), under the assumptions therein, but, unfortunately, this formula is difficult to implement in the asymmetric order flow case. In contrast, the method proposed in Subsection \ref{sec:PriceIncrease2} is more efficient, since all of the quantities therein rely on the spectral decomposition of the discrete Laplacian (\ref{vGenerator}), which, once $N^*$ is fixed, just has to be done once. Figures 
\ref{fig:Prop:Incr30} graphs the probability of price increase as a function of $x_{0}^{a}$, for fixed $x_0^b=30$ (left panel) and $x_{0}^{b}=50$ (right panel). By symmetry, we would have the same graph against $x_{0}^{b}$ for fixed values of $x_{0}^{a}$. Notice that, in the first case, when $x_0^b=30$, the probability of a price increase does not varies significantly with $N^{*}$, for a most of the values of $x_{0}^{a}$. It is only when $x_0^a$ becomes close to the value of $N^*$ that some discrepancies start to show. On the other hand, when $x_0^b=50$, which is a closer value for $N^*$, the probability significantly varies with $N^*$ regardless of the value of $x_0^a$. The dashed lines therein show that, regardless of $N^*$, the probability of a price increase is always 0.5,  as it should be, when $x_0^a=x_0^b$.}

\begin{figure}[H]
\centering
  \includegraphics[height=7.5cm, width=8.8cm]{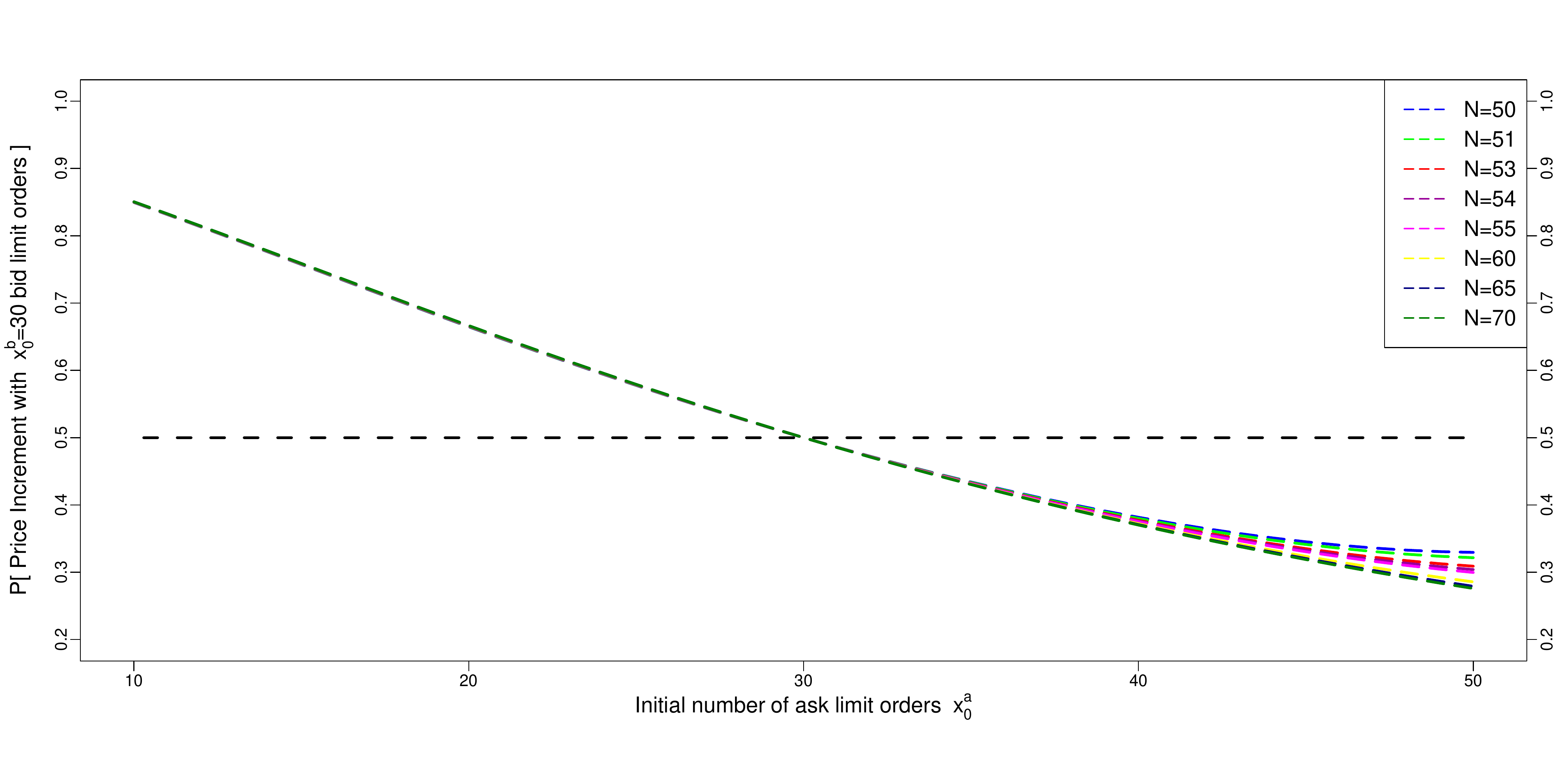}\hspace{0.3 cm}
  \includegraphics[height=7.5cm, width=8.8cm]{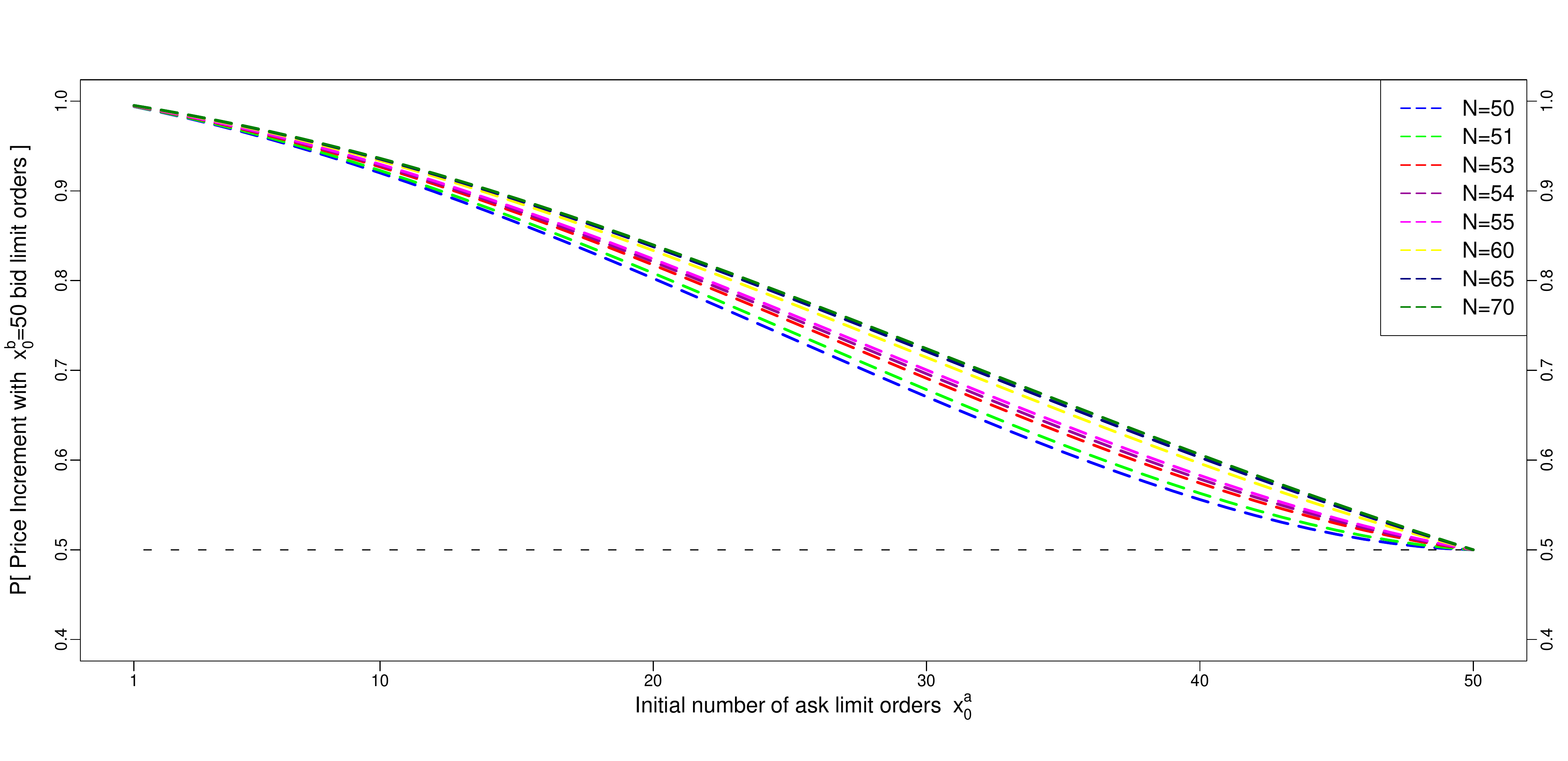}
  \caption{{Comparison of the probability of a price increase as a function of $x_{0}^{a}$ for different values of $N^*$. The number of bid orders is fixed at 30 (left panel) and 50 (right panel).}}
    \label{fig:Prop:Incr30}
\end{figure}

\section{Conclusions}
In this paper, {a new Markovian limit order book model is presented, which allows to incorporate a non-constant spread and to keep} the information about the outstanding orders after {a level I queue gets depleted at either side of the LOB}. Although the general rules governing the order book in {the latter} setting create a more complex dynamics, a novel efficient {method was developed to analyze several features of the LOB model of relevance in high-frequency trading}.

Our main result characterizes the coarse-grain behavior of the midprice process in terms of a Brownian motion with drift. This was made possible by expressing the price changes in terms of a suitable Markov chain. To this end, two key assumptions were needed:
the boundedness of the {queue sizes} at every moment and a sufficiently high arrival rate of new orders in between the spread {compared to the intensity of arrivals of market order/cancellations}. The latter condition is also intuitive since it prevents the spread to grow indefinitely with positive probability. {These} two conditions provide a tractable framework, without {lost of realism}, for the LOB, and become relevant when studying the diffusive behavior of the price process for this model.

{It} is known that markets exhibit relatively large price shifts in {a} small time period {and, thus, the incorporation of these ``jumps" into an order book model is appealing. A natural approach to address this problem may be the introduction of more levels in the order book, governed by similar rules to those imposed in our second proposed model. {The approach presented in this work is also} expected to be applicable for such models.} 

\medskip
\noindent
{\textbf{Acknowledgments:} 
The first author's research was supported in part by the NSF Grant DMS-1149692. The authors are grateful to two anonymous referees and both the associate editor and the editor-in-chief for their constructive and insightful comments that greatly helped to improve the paper.}

\appendix

\section{{{Additional} Proofs}}\label{ApndB}

\begin{proof}[{\textbf{Proof of Lemma \ref{phiharm}}}]
Throughout, {we set $Y_{t}(x,y):=(Y^{1}_{t}(x,y),Y^{2}_{t}(x,y)):=(Q_{t}^{b,0}(x),Q^{a,0}(y))$, $L^{a}:=L^{a}_{1}(2)$, $L^{b}:=L^{b}_{1}(2)$, and $L:=L^{a}\wedge L^{b}$.}
{Also, let $\hat{y}=(y_1,j\pm1,y_2,j)\in\Xi\backslash F$, where $y_1=(y_1^1,y_1^2)\in\Omega_{N^*}^2$, $y_2=(y_2^1,y_2^2)\in\Omega_{N^*}^2$ and $j\more1$. In that case,} 
\begin{align*}
{(P\varphi)(\hat{y})}
&=\sum\limits_{{z_2=(z_{2}^{1},z_{2}^{2})}\in\Omega_{N^*}^2}{P((y_1,j\pm1,y_2,j),(y_2,j,z_2,j-1))}\,\varphi((y_2,j,z_2,j-1))\\
		&\quad+ \sum\limits_{{z_2=(z_{2}^{1},z_{2}^{2})}\in\Omega_{N^*}^2}{P((y_1,j\pm1,y_2,j),(y_2,j,z_2,j+1))}\,\varphi((y_2,j,z_2,j+1)),
\end{align*}
{which,  using that $\varphi((y_{2},j,z_{2},k))=\varphi(k)$, can then be decomposed and simplified as follows:
\begin{align*}
{(P\varphi)(\hat{y})}
	&=\sum\limits_{z_2^1=1}^{\ N^*}\sum\limits_{z_2^2=1}^{\ N^*}\Px(L\less\varsigma(y_2), L=L^{a}, Y_{L}^{1}(y_2)=z_2^1)f^{a}(z_2^2){\varphi}(j-1)\\
				&\quad +\sum\limits_{z_2^1=1}^{\ N^*}\sum\limits_{z_2^2=1}^{\ N^*}\Px(L\less\varsigma(y_2), L=L^{b}, Y_{L}^{2}(y_2)=z_2^2){f}^{b}(z_2^1){\varphi}(j-1)\\
		&\quad+\sum\limits_{z_2^1=1}^{\ N^*}\sum\limits_{z_2^2=1}^{\ N^*}\Px(\varsigma(y_2)\less L, Y_{\varsigma(y_2)}(y_2)=(z_2^1,0))f^{a}(z_2^2){\varphi}(j+1)\\
		&\quad +\sum\limits_{z_2^1=1}^{\ N^*}\sum\limits_{z_2^2=1}^{\ N^*}\Px(\varsigma(y_2)\less L, Y_{\varsigma(y_2)}(y_2)=(0,z_2^2))f^{b}(z_2^1){\varphi}(j+1)\\
		&={\varphi}(j-1)\Px(L\less\varsigma(y_2))+{\varphi}(j+1)\Px(\varsigma(y_2)\less L).
\end{align*}
Since} {$\Px(\varsigma(y)\more t)\leq\Px(\varsigma(z)\more t)$} for $y =(y^1,y^2),\ z=(z^1,z^2)\in\Omega_{N^*}^2$ with $z^1\geq y^1$ and $z^2\geq y^2$, for any $y_2\in\Omega_{\N^*}^2$,
\[\Px(\varsigma((N^*,N^*))\more t)\geq\Px(\varsigma(y_2)\more t)\geq\Px(\varsigma((1,1))\more t).\]
 Thus, 
\begin{align*}
\sum\limits_{\hat{z}\in\Xi}{P(\hat{y},\hat{z})\varphi(\hat{z})}&={\varphi}(j-1)\Px(N\less\varsigma(y_2))+{\varphi}(j+1)\Px(\varsigma(y_2)\less N)\\
		&\leq{\varphi}(j-1)(1-p_{\mathbf{N^*}})+{\varphi}(j+1)p_{\mathbf{1}},
\end{align*}
From the previous expression, a sufficient condition for ${\varphi}$ to be super-harmonic, is to satisfy the {linear difference equation $p_{\mathbf{1}}{\varphi}(j+1)+(1-p_{\mathbf{N^*}}){\varphi}(j-1)={\varphi}(j)$, whose particular solution, satisfying the desired boundary conditions, is given by (\ref{phisupharm}).}
\end{proof}

\begin{proof}[{\textbf{Proof of Lemma \ref{alpha}}}] {Throughout, we set ${\bf 1}=(1,1)$ and ${\bf N^{*}}=(N^{*},N^{*})$.} First, we will prove that the {condition $\alpha\more \mu+\theta$ implies} an upper bound for $p_{\mathbf{1}}$. The independence of $\varsigma(\mathbf{1})$ and ${L}\sim \exp(2\alpha)$ {implies} that {$\Px({L}\more\varsigma(\mathbf{1}))=\int_0^\infty f_{\varsigma(\mathbf{1})}(t) e^{-2\alpha t} dt$,} where $f_{\varsigma(\mathbf{1})}(t)$ is the probability density functions of $\varsigma(\mathbf{1})$. Using integration by parts,
\begin{equation} \label{p1bound}
\Px({L}\more\varsigma(\mathbf{1}))=\int_0^\infty \frac{d}{dt}\left(-\Px(\varsigma(\mathbf{1})\geq t)\right) e^{-2\alpha t} dt=1-\int_0^\infty 2\alpha e^{-2\alpha t}\Px(\varsigma(\mathbf{1})\geq t) dt.
\end{equation}
{Let $E_{t}$ be the event that there is neither {a} cancellation nor {an} arrival of market orders before time $t$ at either side of the book.} Since {$\Px(\varsigma(\mathbf{1})\geq t)\more {\Px(E_{t})}=e^{-{2(\mut)t}}$,} by (\ref{p1bound}) and the assumption that {$\alpha>\mu+\theta$},
\[
\Px({L}\more\varsigma(\mathbf{1}))=1-\int_0^\infty 2\alpha e^{-2\alpha t}\Px(\varsigma(\mathbf{1})\geq t) dt\less 1-\frac{\alpha}{\alpha+\mut}\leq \frac{1}{2}.
\]
Thus, regardless of the sign of {$p_{{\bf 1}}(1-p_{\bf N^*})$}, since {$p_{{\bf N^*}}\less p_{\bf 1}\less\frac{1}{2}$}, {we have that} $\lim_{j\rightarrow\infty}{\varphi}(j)=\infty$.
\end{proof}

\begin{proof}[{\textbf{Proof of Lemma \ref{FntnessIntg}}}] 
	We apply Theorem \ref{LLNHarris}, for which we need to prove that $\pi(|f|)<\infty$ and $\pi(|g_t|)<\infty$. The latter {assertions} hold true if we {can} show that, for all $\hat{x}=(x_0,c_0,x_1,c_1)\in\Xi$, 
	\begin{equation}\label{expectau}
		\Ex(\tau_1\left.\right|\hat{x})	\leq C<\infty,
\end{equation}
for a constant $C$, since
	\begin{align*}
		\pi(|f|):=\sum_{\hat{x}\in\Xi}\pi(\hat{x})\Ex(\tau_1\left.\right|\hat{x})\leq C\sum_{\hat{x}\in\Xi}\pi(\hat{x})\less\infty,\qquad
		\pi(|g_t|):=\sum_{\hat{x}\in\Xi}\pi(\hat{x})\Px(\tau_1\more t\left.\right|\hat{x})\leq\sum_{\hat{x}\in\Xi}\pi(\hat{x})\less\infty.
\end{align*}
To show (\ref{expectau}), we first need some notation. {Let 
$\varsigma(x)$ be defined as in Lemma \ref{phiharm}.} 
Note that
\begin{equation}\label{taufinitemoment}
	{\Ex\left(\left.\tau_1\right|(\tilde{x}_0,\zeta_0\right)=(x_0,c_0))\leq\Ex\left(\left.\tau_1\right|(\tilde{x}_0,\zeta_0)=((N^*,N^*),1)\right)=\Ex\left(\varsigma((N^*,N^*))\right)\less\infty},
\end{equation}
where the last inequality holds, since $\varsigma((N^*,N^*))\leq\min(\varpi_1,\varpi_2)$, where $\varpi_i$, i=1,2, is the hitting time at 0 of a 1-dimensional birth and death process with birth rate $\lambda$ and death rate $\mu+\theta$ starting at $N^*$ (for which is known the expectation is finite) and $\varpi_1$ is independent of $\varpi_2$.
{Next, let $\mathcal{R}(x_{0},c_{0})=\{x_{1}\in{\Omega_{N^{*}}^{2}}:\Px((\tilde{x}_1,\zeta_1)=(x_{1},c_0\pm1)\left.\right|(\tilde{x}_0,\zeta_0)=(x_0,c_0))>0\}$ and let 
\begin{align*}
	r^{\pm}_{x_{1}}((x_0,c_0))&:=\Px((\tilde{x}_1,\zeta_1)=(x_{1},c_0\pm1)\left.\right|(\tilde{x}_0,\zeta_0)=(x_0,c_0)),\quad c_{0}>1,\\ 
	r_{x_{1}}((x_0,1))&:=\Px((\tilde{x}_1,\zeta_1)=(x_{1},2)\left.\right|(\tilde{x}_0,\zeta_0)=(x_0,1)), \\	r_{\min}(x_0)&:=\min\{r^{\pm}_{x_{1}}((x_0,2)):x_{1}\in\mathcal{R}(x_{0},2)\}\wedge\min\{r_{x_{1}}((x_0,1)):x_{1}\in\mathcal{R}(x_{0},1)\}.
\end{align*}
Since, for any} $c_0,c_1\more1$, $r^{\pm}_{x_{1}}((x_0,c_0))=r^{\pm}_{x_{1}}((x_0,c_1))$, it follows that {$0\less r_{\min}(x_0)\leq r^{\pm}_{x_{1}}((x_0,c_0))$ for all $c_0\in\{2,3,\ldots\}$ and $x_{1}\in\mathcal{R}(x_{0},c_{0})$}. Therefore, 
\[
	r_{\min}(x_0)\sum\limits_{{x_1\in\mathcal{R}(x_{0},c_{0})}}\Ex(\tau_1\left.\right|x_0,c_0,x_1,c_{0}\pm 1)\leq \Ex(\tau_1|(\tilde{x}_0,\zeta_0)=(x_0,c_0))<\Ex(\varsigma((N^*,N^*)))\less\infty.
\]
{This implies (\ref{expectau}), which in turn implies the result as explained above.}
\end{proof}	

\begin{proof}[{\textbf{Proof of Theorem \ref{FCLTh}}}]  Since the state space, $\Lambda$, is countable, every finite subset of the state space is an atom (e.g., see \cite[Chapter 5, pg 105]{Meyn} ) and, hence, we are able to construct explicitly the solution of the Poisson equation (\ref{poissoneqn}). Indeed, by Equation (17.38) in \cite{Meyn} and the discussion therein, for $C_1:=\{\oz=(x,c,u)\in\Lambda: x\in\Omega_{N^*}^2, c=1, u\in\{-1/2,1/2\}\}$, we have that
\begin{equation}\label{Eqn:Soln:poissonexplicit}
\hat{h}({\oz})=\Ex_{{\oz}}\left[\sum\limits_{{k=1}}^{\sigma_{C_1}} \bar{h}(V_k)\right],
\end{equation}
where $\sigma_{C_1}=\min\{n\geq0\;|\;V_n\in C_1\}$. Since for any ${\oz}\in\Lambda$, {$|h(\oz)|\leq 1/2$}, {$|\overline{h}(\oz)|\leq1$ and, thus, ${|\hat{h}(\oz)|}\leq \Ex_{\oz}(\sigma_{C_1})$}. Therefore, to {conclude that the invariance principle (\ref{InvTh2}) holds true, it suffices to} show that 
\begin{equation}\label{expsigma}
{\nu\left(\Ex^2_{{\cdot}}(\sigma_{C_1})\right)}:=\sum_{\oz\in\Lambda}\nu(\oz)\Ex_{\oz}^2(\sigma_{C_1})\less\infty.
\end{equation}
 Let $C_j=\{\oz=(x,c,u)\in\Lambda| x\in\Omega_{N^*}^2, c=j, u\in\{-1/2,1/2\}\}$. Each $C_j$ is finite and $\{C_j\}_{j\geq1}$ forms a partition of $\Lambda$. Clearly, for every $n$, if $V_n\in C_j$, with $j\geq2$, then $V_{n+1}\in\{C_{j-1},C_{j+1}\}$.  Moreover, with the notation of Lemma \ref{phiharm} and, {as prove in the proof of} Lemma \ref{alpha}, for any $\oz=(x,c,u)\in C_j$
\[
 	\Px(V_{n+1}\in C_{j+1}|V_n =\oz)=\Px\left(\varsigma(x)<L\right)\leq\Px(\varsigma(\mathbf{1})< {L})=p_{\mathbf{1}}
\]

Consider now a birth and death process $\tilde{V}_n\in\N$ with birth probability $p_{\mathbf{1}}$ and death probability $1-p_{\mathbf{1}}$, {and note that
\begin{align*}
\Px[V_{n+1}\in C_{j-1}\;|\;V_n=\oz]&\geq 1-p_{\mathbf{1}} = \Px[\tilde{V}_{n+1}=j-1\;|\;\tilde{V}_n=j].
\end{align*}
Denote} by $\sigma_1^{\tilde{V}}$ the first hitting time of {$\tilde{V}$ to the point $1$. That is,} $\sigma_{1}^{\tilde{V}}=\min\{n\more0\;|\;\tilde{V}_n=1\}$. 
Then, {since ${V}$ dies more frequently than $\tilde{V}$, for any $\oz\in C_j$,}
\begin{equation}\label{eqn:compar:sigma}
\Ex_{\oz}[\sigma_{C_1}]\leq\Ex_j[\sigma_1^{\tilde{V}}]= \frac{j-1}{1-2p_{\mathbf{N^*}}},
\end{equation}
where, in the last equality we use that $p_{\mathbf{1}}\less 1/2$ (see \cite{Sericola}[Section 3.1]).

The next step is to bound the terms $\nu(C_j)$ for $j\geq2$. To shorten notation, define $\Theta:=\Omega_{N^*}\times\{-1/2,1/2\}$.
Recall that if the spread is larger than one, the spread will widen {or shrink} right after every price change, whereas if the spread is 1, it will surely widen at the next step. Thus, for  any $\oy\in C_1$ and any $\oz\in C_{j}$ with $j\geq2$, {$P^{ext}[\oy,C_{2}]=1$ and $
P^{ext}[\oz,C_{j+1}]+P^{ext}[\oz,C_{j-1}]=1$.}
By the definition of a stationary {measure, 
\begin{align*}
\nu(C_1)= \int_{\Lambda}P^{ext}[\oz,C_1]\nu(d\bar{z})
= \sum\limits_{\oz\in C_{2}} P^{ext}[\oz,C_1]\nu(\oz)
= \sum\limits_{\oz\in C_{2}} (1-P^{ext}[\oz,C_3])\nu(\oz),				
\end{align*}
which} implies that
\begin{equation}\label{eqn:rel:nuC3}
\sum\limits_{\oz\in C_{2}} P^{ext}[\oz,C_3]\nu(\oz) = \nu(C_2)-\nu(C_1)
\end{equation}
{Analogously,
\begin{align*}
\nu(C_{2})&= \int_{\Lambda}P^{ext}[\oz,C_{2}]\nu(d\bar{z})= \sum\limits_{\oz\in C_{1}} P^{ext}[\oz,C_{2}]\nu(\oz) + \sum\limits_{\oz\in C_{3}} P^{ext}[\oz,C_{2}]\nu(\oz)
				= \nu(C_1) + \sum\limits_{\oz\in C_{3}}(1-P^{ext}[\oz,C_{4}])\nu(\oz),
\end{align*}
which implies that}
\begin{equation}\label{eqn:rel:nuC4}
\sum\limits_{\oz\in {C_{3}}}P^{ext}[\oz,C_{4}]\nu(\oz) = \nu(C_1)-\nu(C_2)+\nu(C_3)
\end{equation}
{Next, note that, for $j\more2$,
\begin{align}\label{eqn:rel:nuCj} \nonumber
\nu(C_{j-1})&= \int_{\Lambda}P^{ext}[\oz,C_{j-1}]\nu(d\bar{z})\\ \nonumber
				&= \sum\limits_{\oz\in C_{j-2}} P^{ext}[\oz,C_{j-1}]\nu(\oz) + \sum\limits_{\oz\in C_{j}}(1-P^{ext}[\oz,C_{j+1}])\nu(\oz)\\
				&= \nu(C_{j}) + \sum\limits_{\oz\in C_{j-2}} P^{ext}[\oz,C_{j-1}]\nu(\oz) -  \sum\limits_{\oz\in C_{j}}P^{ext}[\oz,C_{j+1}]\nu(\oz).
\end{align}
Therefore,}
\begin{equation}\label{eqn:rel:nuCj2}
\sum\limits_{\oz\in C_{j}}P^{ext}[\oz,C_{j+1}]\nu(\oz) =\nu(C_{j}) - \nu(C_{j-1}) + \sum\limits_{\oz\in C_{j-2}} P^{ext}[\oz,C_{j-1}]\nu(\oz)
\end{equation}
Applying {the previous equation recursively, for all even $j\geq2$,
\begin{align*}
\sum\limits_{\oz\in C_{j}}P^{ext}[\oz,C_{j+1}]\nu(\oz) &=\nu(C_{j}) - \nu(C_{j-1}) + \nu(C_{j-2})-\nu(C_{j-3})+\ldots+\nu(C_4)-\nu(C_3)+\sum\limits_{\oz\in C_{2}} P^{ext}[\oz,C_3]\nu(\oz),
\end{align*} 
and, thus, by (\ref{eqn:rel:nuC3}), 
\begin{equation}\label{eqn:rel:nu:sums:even}
\sum\limits_{\oz\in C_{j}}P^{ext}[\oz,C_{j+1}]\nu(\oz) =\nu(C_{j}) - \nu(C_{j-1}) + \nu(C_{j-2})-\nu(C_{j-3})+\ldots+\nu(C_4)-\nu(C_3)+\nu(C_{2})-\nu(C_{1}).
\end{equation} 
However, if $j\geq{}$ is odd, by (\ref{eqn:rel:nuC4}),
\begin{align}\nonumber
\sum\limits_{\oz\in C_{j}}P^{ext}[\oz,C_{j+1}]\nu(\oz) &=\nu(C_{j}) - \nu(C_{j-1}) + \nu(C_{j-2})-\nu(C_{j-3})+\ldots+\nu(C_5)-\nu(C_4)+\sum\limits_{\oz\in C_{3}} P^{ext}[\oz,C_4]\nu(\oz)\\
\label{eqn:rel:nu:sums}
&=\nu(C_{j}) - \nu(C_{j-1}) + \nu(C_{j-2})-\nu(C_{j-3})+\ldots+\nu(C_5)-\nu(C_4)+ \nu(C_1)-\nu(C_2)+\nu(C_3).
\end{align} 
Equations (\ref{eqn:rel:nu:sums:even})-(\ref{eqn:rel:nu:sums})} imply {that
\[
\nu(C_{j+1}) = \sum\limits_{\oz\in C_{j}}P^{ext}[\oz,C_{j+1}]\nu(\oz) + \sum\limits_{\oz\in C_{j+1}}P^{ext}[\oz,C_{j+2}]\nu(\oz).
\]
However,} by the definition of a stationary measure,
\[
\nu(C_{j+1}) = \sum\limits_{\oz\in C_{j}}P^{ext}[\oz,C_{j+1}]\nu(\oz) + \sum\limits_{\oz\in C_{j+2}}P^{ext}[\oz,C_{j+1}]\nu(\oz).
\]
The previous two equations {yield the following} relation,\footnote{{Eq.~(\ref{eqn:Detailed:Balanced}) gives some insight into the structure of the stationary measure $\nu$ and can be regarded as a  ``batch'' version of the so-called Detailed Balance Conditions for Markov Chains, which are important for analyzing reversible processes (c.f. \cite{Kelly11}[Chapter 1.2]). }}
\begin{equation}\label{eqn:Detailed:Balanced}
\sum\limits_{\oz\in C_{j+1}}P^{ext}[\oz,C_{j+2}]\nu(\oz)= \sum\limits_{\oz\in C_{j+2}}P^{ext}[\oz,C_{j+1}]\nu(\oz).
\end{equation}
{We are now ready to bound the term $\nu(C_j)$. To that end, notice} that for any $\oz\in C_j$, $P^{ext}[\oz,C_{j-1}]\geq (1-p_{\mathbf{1}})$ {and, thus,}
\begin{align*}
\nu(C_j)
				&= \sum\limits_{\oz\in C_{j-1}} P^{ext}[\oz,C_{j}]\nu(\oz) + \sum\limits_{\oz\in C_{j+1}} P^{ext}[\oz,C_{j}]\nu(\oz)\\
				&= \sum\limits_{\oz\in C_{j}} P^{ext}[\oz,C_{j-1}]\nu(\oz) + \sum\limits_{\oz\in C_{j+1}} P^{ext}[\oz,C_{j}]\nu(\oz)\\
				&\geq (1-p_{\mathbf{1}}) \nu(C_j) + (1-p_{\mathbf{1}})\nu(C_{j+1}).
\end{align*}
{Therefore, $\nu(C_{j+1}) \leq \nu(C_j)p_{\mathbf{1}}/(1-p_{\mathbf{1}})$,} 
which, {by induction,}  implies that
\begin{equation}\label{eqn:nus:geom}
\nu(C_{j+1}) \leq \left(\frac{p_{\mathbf{1}}}{1-p_{\mathbf{1}}}\right)^{j-1}\nu(C_2),
\end{equation}
Finally, by (\ref{eqn:compar:sigma}) {and} (\ref{eqn:nus:geom}) and the fact that $p^* =p_{\mathbf{1}}/ (1-p_{\mathbf{1}}) \less1$, {for some bounded constant $C^{*}$,}
\begin{align}\label{eqn:bddsigmaf} \nonumber
\nu(\Ex^2_{\cdot}(\sigma_{C_1}))
		&=\sum_{\oz\in C_1}\nu(\oz)\Ex_{\oz}^2(\sigma_{C_1})+ \sum\limits_{j=2}^\infty \sum_{\oz\in C_j}\nu(\oz)\Ex_{\oz}^2(\sigma_{C_1})\\ \nonumber
		&= C^* + \sum\limits_{j=2}^\infty \nu(C_j)\left(\frac{j-1}{1-2p_{\mathbf{N^*}}}\right)^2\\
		&\leq C^* + \nu(C_1)\sum\limits_{j=2}^\infty \left(p^*\right)^{j-1} \left(\frac{j-1}{1-2p_{\mathbf{N^*}}}\right)^2\less\infty.
\end{align}
{It only remains to} show that the variance in the FCLT can be written as (\ref{exprsigma}). But, by Theorem 17.5.3 {in \cite{Meyn}}, it is enough to  show that he Markov chain $\{V_n\}_{n\geq1}$ is ergodic and there exists a function $F:\Lambda\to[0,\infty]$ {such that $\nu(F^2)\less\infty$ and}
\begin{equation}\label{eqn:cond:V3}
\Delta F(\oz) \leq - 1 +b\indicator{\oz \in B}, 
\end{equation}
for {a constant $b\less\infty$ and a finite set $B$, where $\Delta$ is the operator $\Delta F(\oz):=\Ex_{\oz}\left[F(V_1)-F(V_0)\right]$} (c.f. Section 14.2.1 in \cite{Meyn}). However, it is known (c.f. Section 13.1.2 \cite{Meyn}) that aperiodic positive Harris chains over a countable state space are ergodic and by {Proposition 14.1.2 and Theorem 14.2.3(ii) therein}, the function ${{F}(\oz)}:=\Ex_{{\oz}}[\sigma_B]$ satisfies (\ref{eqn:cond:V3}), where $\sigma_B$ is the first hitting time of the set $B$ in (\ref{eqn:cond:V3}). Since it was proven above that $\nu(\Ex_{\cdot}^2[\sigma_{C_1}])\less\infty$, {we take $B=C_1$ to conclude the proof}.
\end{proof}

\begin{proof}[{\textbf{Proof of Lemma \ref{jointdist}}}] 
{Let us start by noting that the generator of the two-dimensional random walk $Y$ is given by the finite difference operator $\mathscr{L}$ defined in (\ref{uGenerator}). More concretely, for a function $\phi:\bar\Omega_{N^{*}}\to\mathbb{R}$, $\mathscr{L}\phi(x,y)$ is defined analogously to (\ref{uGenerator}) but replacing $u(t,x,y)$ with $\phi(x,y)$.}
For simplicity, we denote {$\varsigma:=\varsigma(x,y)$ and remark that $\varsigma$} is an absolutely continuous random variable. {Let $\bar{u}(t,x,y)$ be an arbitrary bounded function such that $t\mapsto \bar{u}(t,x,y)$ is $C_1$ for all $(x,y)$ and $(t,x,y)\mapsto \partial_{t}\bar{u}(t,x,y)$ is bounded. Fix $T\more0$ and let $f(t,x,y)=\bar{u}(T-t,x,y)$. Under the stated conditions,} $\bar{u}$ belongs to the domain of the generator $\mathscr{L}$ and, thus, the process
\[
f(t,X_t)-\int_0^t\left(\frac{\partial}{\partial r}+\mathscr{L}\right)f(r,{Y_r})dr,\quad t\in[0,T]
\]
is a local martingale. Therefore,
\[
M_t:=\bar{u}(T-t,{Y_t})-\int_0^t\left(-\frac{\partial}{\partial {t}}+\mathscr{L}\right)\bar{u}(T-r,{Y_r})dr,\quad {t\in [0,T]},
\]
is a martingale. Let $\sigma:=T\wedge{\varsigma}$. {By the Optional Sampling Theorem},
\begin{equation}\label{AppendixC:eq1}
\bar{u}(T,x,y)=\Ex[\bar{u}(T-\sigma,{Y_\sigma})]-\Ex\left[\int_0^\sigma\left(-\frac{\partial}{\partial {t}}+\mathscr{L}\right)\bar{u}(T-r,{Y_r})dr\right].
\end{equation}
{Now, suppose that $\bar{u}(t,x,y)$ solves the following} initial value problem,
\begin{equation}\label{AppendixC:PDE}
\left\{\begin{array}{rcl}
\left(-\frac{\partial}{\partial {t}} + \mathscr{L}\right) \bar{u} (T-r,x,y)=0 & \text{for} & 0\leq r \leq T,\;  
(x,y)\in\{1,2,\ldots,M\}^2.\\
\bar{u}(T-r,x,y)=\indicator{(x,y)=\ax} & \text{for} & 0\leq r\leq T,\, (x,y)\in\mathscr{A}.\\
\bar{u}(0,x,y)=\indicator{(x,y)=\ax} & \text{for} & (x,y)\in\{0,1,2,\ldots,M\}^2.\end{array}\right.
\end{equation}
In that case, by Eq. (\ref{AppendixC:eq1}),
\begin{align*}
		\bar{u}(T,x,y)&=\Ex[\bar{u}(T-\sigma,{Y}_\sigma)]\\
				&=\Ex[\bar{u}(T-\sigma,{Y}_\sigma)\indicator{\sigma\less T}+\bar{u}(T-\sigma,{Y}_\sigma)\indicator{\sigma= T}]\\
				&=\Ex[\indicator{Y_\sigma=\ax}\indicator{\sigma<T} + \bar{u}(0,Y_\sigma)\indicator{\sigma= T}]\\
				&=\Px[Y_{\varsigma}=\ax, \varsigma\leq T],
\end{align*}
which implies that $\bar{u}(t,x,y)=u_{{\ax}}(t,x,y)$.
\end{proof}

\begin{proof}[{\textbf{Proof of Proposition \ref{lemmausol}}}] 
Let 
\[
	w(t,x,y)=\left(\frac{\lambda}{\up}\right)^{\frac{\ax_1+\ax_2}{2}}e^{t(2(\lambda+\up)-
	4\sqrt{\lambda\up})}\indicator{(x,y)=\ax}.
\] 
Fix  $\ov(t,x,y)=v(t,x,y)-w(t,x,y)$ and note that $v(t,x,y)$ satisfies the system (\ref{PDEv}) if and only if $\ov$ is a solution to the initial value problem:
\begin{equation}\label{PDEov}
\left\{\begin{array}{rlcl}
-\left(\frac{\partial}{\partial t} - \sqrt{\lambda\up}\Delta\right) \ov(t,x,y)&=\left(\frac{\partial}{\partial t}-\sqrt{\lambda\up}\Delta\right)w(t,x,y) & \text{for} & t\geq0,  
\quad (x,y)\in\Omega_{N^{*}}\\
\ov(t,x,y)&=0 & \text{for} & t\geq0,\quad (x,y)\in\mathscr{A}\\
\ov(0,x,y)&=0 & \text{for} & (x,y)\in{\bar\Omega}_{N^{*}}.\end{array}\right.
\end{equation}
Let 
$\{\psi_{k}(t)\}_{k=1}^{{N^{*}}^2}$ and $\{\varsigma_{k}^{\bar{a}}\}_{k=1}^{{N^{*}}^2}$ be such that $\ov(t,x,y)=\sum_{k}^{{N^{*}}^{2}}\psi_k(t)f_k(x,y)$ and $\indicator{(x,y)=\overline{a+1}}=\sum_{k=1}^{{N^{*}}^2}\varsigma_k^{\ax}f_k(x,y)$. 
Using the first representation, the left-hand side of the first equation in (\ref{PDEov}) becomes 
\begin{align*}
-\left(\frac{\partial}{\partial t} - \sqrt{\lambda\up}\Delta\right) \ov(t,x,y)
		=-\sum\limits_{k=1}^{\ {N^{*}}^2} \psi_k'(t)f_k(x,y) - \psi_k(t)\sqrt{\lambda\up}\Delta f_k(x,y)=\sum\limits_{k=1}^{\ {N^{*}}^2} \left[-\psi_k'(t) + \psi_k(t)\sqrt{\lambda\up}\xi_k\right]f_k(x,y).
\end{align*}
Similarly, 
 the right-hand side of the first equation in (\ref{PDEov}) is given by
\begin{align*}
\left(\frac{\partial w}{\partial t}-\sqrt{\lambda\up}\Delta\right) w
		=-\left(\frac{\lambda}{\up}\right)^{\frac{\ax_1+\ax_2}{2}} \sqrt{\lambda\up}e^{t(2(\lambda+\up)-4\sqrt{\lambda\up})}\indicator{(x,y)=\overline{a+1}}=-e^{t(2(\lambda+\up)-4\sqrt{\lambda\up})}\left(\frac{\lambda}{\up}\right)^{\frac{\ax_1+\ax_2}{2}} \sqrt{\lambda\up}\sum\limits_{k=1}^{{N^{*}}^2}\varsigma_k^{\ax}f_k(x,y).
\end{align*}
Combining the previous two expressions and recalling that $\{f_k(x,y)\}_{k}$ is an orthonormal basis, {it follows that the function $\ov(t,x,y)=\sum_{k}\psi_k(t)f_k(x,y)$ will solve the system (\ref{PDEov}) if and only if, for every $k$, the function $\psi_{k}(t)$ satisfies the following equation:
\begin{equation} \label{ODEov}
-\psi_k'(t) + \psi_k(t)\sqrt{\lambda\up}\xi_k=-e^{t(2(\lambda+\up)-4\sqrt{\lambda\up})}\left(\frac{\lambda}{\up}\right)^{\frac{\ax_1+\ax_2}{2}} \sqrt{\lambda\up}\varsigma_k^{\ax},
\end{equation}
{with the initial condition $\psi_{k}(0)=0$. It is easy to see that the previous differential equation is well posed and has solution}
\begin{equation}\label{solODEov}\nonumber
\psi_k(t)=\left(\frac{\lambda}{\up}\right)^{\frac{\ax_1+\ax_2}{2}}\frac{\sqrt{\lambda\up}\varsigma_k^{\ax}}{2(\lambda+\up)-\sqrt{\lambda\up}(4+\xi_k)}\left[e^{t\sqrt{\lambda\up}\xi_k}-e^{t(2(\lambda+\up)-4\sqrt{\lambda\up})}\right].
\end{equation}
Therefore, 
\[
\ov(t,x,y)=\left(\frac{\lambda}{\up}\right)^{\frac{\ax_1+\ax_2}{2}}\sum\limits_{k}\frac{\sqrt{\lambda\up}\varsigma_k^{\ax}}{2(\lambda+\up)-\sqrt{\lambda\up}(4+\xi_k)}\left[e^{t\sqrt{\lambda\up}\xi_k}-e^{t(2(\lambda+\up)-4\sqrt{\lambda\up})}\right]f_k(x,y),
\]
satisfies the initial value problem (\ref{PDEov}), which in turn, implies that 
\[
u(t,x,y)=\left(\frac{\lambda}{\up}\right)^{\frac{\ax_1+\ax_2-x-y}{2}}\left[\sum\limits_{k=1}^{\ {N^{*}}^2}\frac{\sqrt{\lambda\up}\varsigma_k^{\ax}}{2(\lambda+\up)-\sqrt{\lambda\up}(4+\xi_k)}\left(1 - e^{-t(2(\lambda+\up)-(4+\xi_k)\sqrt{\lambda\up})}\right)f_k(x,y)+\indicator{(x,y)=\ax}\right],
\]
is a solution of (\ref{PDEu}). Then, the representation (\ref{uexplicit})  immediately follows by noting that $\varsigma_k^{\ax}=f_{k}(\overline{a+1})$ and rewriting the previous expression in terms of $\chi=\lambda/\upsilon$. } 
\end{proof}

\begin{lemma} \label{vproblem} {A function $u:[0,T]\times\bar\Omega_{{N^{*}}}\to\mathbb{R}$ is a solution of the system of differential equations (\ref{PDEu}) if and only if the function $v(t,x,y)$ defined by 
\[
v(t,x,y)=\left(\frac{\lambda}{\up}\right)^{\frac{x+y}{2}}e^{t(2(\lambda+\up)-4\sqrt{\lambda\up})}u(t,x,y),
\]
solves the system of difference equations
\begin{equation}\label{PDEv}
\left\{\begin{array}{ll}
\left(-\frac{\partial}{\partial t} + \sqrt{\lambda\up}\Delta\right) v(t,x,y)=0, & \text{for} \quad t\geq0,  
\quad (x,y)\in\Omega_{N^{*}},\\
v(t,x,y)=\left(\frac{\lambda}{\up}\right)^{\frac{\ax_1+\ax_2}{2}}e^{t(2(\lambda+\up)-4\sqrt{\lambda\up})}\indicator{(x,y)=\ax}, & \text{for} \quad t\geq0,\quad (x,y)\in\mathscr{A},\\
v(0,x,y)=\left(\frac{\lambda}{\up}\right)^{\frac{\ax_1+\ax_2}{2}}\indicator{(x,y)=\ax}, & \text{for} \quad (x,y)\in{\bar\Omega}_{N^{*}},\end{array}\right.
\end{equation}
where $\ax=(\ax_1,\ax_2)\in\mathscr{A}$ and, for each fixed $t$, $\Delta v(t,x,y)$ is defined as in (\ref{vGenerator}) with respect to $x$ and $y$}.
\end{lemma}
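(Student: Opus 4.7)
The plan is a direct change-of-variables calculation: writing $u = \chi^{-(x+y)/2} e^{-\beta t} v$ with $\chi:=\lambda/\up$ and $\beta:=2(\lambda+\up)-4\sqrt{\lambda\up}$, I will substitute into each of the three defining relations of (\ref{PDEu}) and verify that they reduce to the corresponding relations of (\ref{PDEv}). The boundary condition on $\mathscr{A}$ and the initial condition at $t=0$ are immediate: the multiplicative prefactor $\chi^{(x+y)/2}e^{\beta t}$ is exactly the one prescribed for $v$ on those sets, so the indicators $\indicator{(x,y)=\ax}$ are preserved. All the content lies in verifying the difference-differential equation on the interior and on the three boundary cases $x=N^*$, $y=N^*$, and $(x,y)=(N^*,N^*)$.

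The mechanism making the transformation work is the shift identity $\chi^{-(x\pm 1+y)/2} = \chi^{\mp 1/2}\chi^{-(x+y)/2}$. Combined with $\lambda\chi^{-1/2}=\sqrt{\lambda\up}$ and $\up\chi^{1/2}=\sqrt{\lambda\up}$, this yields, after pulling out the common factor $\chi^{-(x+y)/2}e^{-\beta t}$, the clean identities $\lambda\,u_i^+ \mapsto \sqrt{\lambda\up}\,v_i^+$ and $\up\,u_i^- \mapsto \sqrt{\lambda\up}\,v_i^-$ for $i=1,2$. Therefore every off-diagonal piece of $\mathscr{L}u$ in (\ref{uGenerator}) matches, up to the common prefactor, the corresponding off-diagonal piece of $\sqrt{\lambda\up}\Delta v$ in (\ref{vGenerator}), regardless of which of the four geometric cases of $\mathscr{L}$ we are in. The only additional contribution produced by the transformation is the term $+\beta v$ coming from $\partial_t(e^{-\beta t})$, which must absorb the mismatch between the diagonal coefficients of $\mathscr{L}$ and $\sqrt{\lambda\up}\Delta$.

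The final step is to confirm that the single constant $\beta$ simultaneously cancels this diagonal mismatch in all four cases. In the interior, the mismatch is $-2(\lambda+\up)-(-4\sqrt{\lambda\up}) = -\beta$. On the side boundary $x=N^*$ (and analogously $y=N^*$), it is $-(\lambda+2\up)-\bigl(-\sqrt{\lambda\up}(4-\sqrt{\chi})\bigr) = -(\lambda+2\up)+4\sqrt{\lambda\up}-\lambda = -\beta$. At the corner $(N^*,N^*)$, it is $-2\up - \bigl(-\sqrt{\lambda\up}(4-2\sqrt{\chi})\bigr) = -2\up+4\sqrt{\lambda\up}-2\lambda = -\beta$. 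In every case the residual $-\beta v$ is exactly killed by the $+\beta v$ arising from the time derivative, and the equation on $v$ collapses to $(-\partial_t + \sqrt{\lambda\up}\Delta)v=0$. There is no real obstacle beyond organized bookkeeping; the only conceptually nontrivial point to flag is that the boundary corrections $\sqrt{\chi}$ and $2\sqrt{\chi}$ built into the diagonal of $\Delta$ in (\ref{vGenerator}) are precisely those required for this uniform cancellation, and the converse direction is obtained by reading the same computation backwards.
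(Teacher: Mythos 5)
Your proof is correct. The paper itself declines to give an argument, stating only that ``the proof is standard and is omitted,'' so you have in effect supplied the intended verification: a direct change of variables $u=\chi^{-(x+y)/2}e^{-\beta t}v$ with $\chi=\lambda/\up$ and $\beta=2(\lambda+\up)-4\sqrt{\lambda\up}$, using the key identities $\lambda\chi^{-1/2}=\up\chi^{1/2}=\sqrt{\lambda\up}$ to convert the asymmetric off-diagonal coefficients of $\mathscr{L}$ into the symmetric $\sqrt{\lambda\up}$ of $\Delta$, and then checking that the single constant $\beta$ generated by $\partial_t e^{-\beta t}$ absorbs the diagonal mismatch uniformly across the interior, the two side boundaries, and the corner. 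Your arithmetic for all four diagonal cases checks out (for instance $\beta-(\lambda+2\up)=-(4\sqrt{\lambda\up}-\lambda)=-\sqrt{\lambda\up}(4-\sqrt{\chi})$ on the side $x=N^*$), and the boundary and initial conditions transfer trivially because the prefactor is never zero; the converse direction follows by the same calculation read in reverse. Nothing is missing.
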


\begin{proof}
The proof is standard and is omitted. 
\end{proof}

\end{document}